\let\ifsub\iftrue \let\ifappendix\iftrue
\let\csname ifdraft\expandafter\endcsname
\let\csname ifappendix\expandafter\endcsname
\let\csname ifsub\expandafter\endcsname
\let\csname ifarxiv\expandafter\endcsname
\let\csname ifwatermark\expandafter\endcsname
\newcommand{\specK}[1]{\ensuremath{\textcolor{blue}{#1}}}
\newcommand{\spec}[1]{\specK{\{{#1}\}}}
\newcommand{\proofspec}[1]{\ensuremath{{\color{blue}\{{#1}\}}}}
\newcommand{\frameC}[1]{\ensuremath{{\color{magenta!80!blue}{#1}}}}
\newcommand{\proofspecL}[1]{\ensuremath{{\color{blue}\{{#1}}}}
\newcommand{\res}[1]{\ensuremath{\mathsf{#1}}\xspace}
\newcommand{\prog}[1]{\ensuremath{\mathsf{#1}}\xspace}
\newcommand{\statesp}[1]{\ensuremath{\Sigma_{#1}}}
\newcommand{\trans}[1]{\ensuremath{\mathsf{#1}}\xspace}
\newcommand{\tp}[1]{\ensuremath{\mathsf{#1}^\text{tp}}}
\newcommand{\eqdef}{\mathrel{\:\widehat{=}\:}}
\newcommand{\cs}[1]{\ensuremath{{{#1}_{{\scriptstyle{\mathtt{s}}}}}}}
\newcommand{\co}[1]{\ensuremath{{{#1}_{{\scriptstyle{\mathtt{o}}}}}}}
\newcommand{\ct}[1]{\ensuremath{{\hat{#1}}}}
\renewcommand{\max}[1]{\ensuremath{\mathsf{max}{#1}}}
\newcommand{\lastKey}[1]{\ensuremath{\max(\dom{#1})}}
\newcommand{\fresh}[1]{\ensuremath{\lastKey{#1}+1}}
\newcommand{\dom}[1]{\ensuremath{\mathsf{dom}({#1})}}
\newcommand{\Ux}[1]{\ensuremath{\mathbb{M}_{#1}}}
\newcommand{\HTj}[4]{\ensuremath{#1 : #2 #3 \,@\,#4}}
\def\GHTj#1[#2]#3#4#5{#1 : \mathsf{[}#2\mathsf{].}  \spec{#3} \\ & \spec{#4} \,@\,#5}
\def\GHTjaligned#1[#2]#3#4#5{#1 :~ \specK{\mathsf{[}#2\mathsf{].}}  & \spec{#3} \\ & \spec{#4} \,@\,#5}
\def\GHTjalignedNSP#1[#2]#3#4#5{#1 :~ \mathsf{[}#2\mathsf{].}  & #3 \\ & #4 \,@\,#5}
\def\HTjalignedline#1#2#3#4{#1 :~ & #2\ #3 \,@\,#4}
\def\GHTjalignedline#1[#2]#3#4#5{#1 :~ & \specK{\mathsf{[}#2\mathsf{].}} \spec{#3}\ \spec{#4} \,@\,#5}
\def\atomic<#1>{\langle #1 \rangle}
\newcommand{\set}[1]{\left\{ #1 \right\}}
\def\<#1,#2>{\langle #1 , #2 \rangle}
\def\codefont{\mathbf}
\def\DO{\codefont{do}}
\def\UNTIL{\codefont{until}}
\def\specP{\spec{P}}
\def\specQ{\spec{Q}}
\def\progLock{\prog{lock}}
\def\progUnlock{\prog{unlock}}
\def\rV{\res{V}}
\def\rTicket{\res{TL}}
\def\UTicket{\Ux{\rTicket}}
\def\STicket{\statesp{\rTicket}}
\def\currentT{\res{\color{green!55!black}{serve}}}
\def\servedT{\res{\color{white!55!black}{used}}}
\def\waitingT{\res{\color{yellow!45!red}{drawn}}}
\def\labelN{L}
\def\labelT{\set{\waitingT, \currentT, \servedT}}
\def\finmap{\rightharpoonup_\text{fin}}
\def\mapTctl{\Nat^+ \finmap \labelN    }
\newcommand{\ord}{\mathsf{ordered}}
\newcommand{\nogaps}{\mathsf{no\_gaps}}
\newcommand{\Hist}{\mathsf{Hist}}
\newcommand{\taketxTr}{\trans{taketx\_tr}}
\newcommand{\lockTr}{\trans{lock\_tr}}
\newcommand{\unlockTr}{\trans{unlock\_tr}}
\newcommand{\hmapsto}{\Mapsto}
\newcommand{\Nat}{\mathbb{N}}
\def\own{\mathsf{own}}
\def\nown{\overline{\mathsf{own}}}
\def\sigmaS{\cs{\sigma}}
\def\sigmaO{\co{\sigma}}
\def\sigmaT{\ct{\sigma}}
\newcommand{\sgS}[1]{\fapp{\cs{\sigma}}{#1}}
\newcommand{\sgO}[1]{\fapp{\co{\sigma}}{#1}}
\newcommand{\sgT}[1]{\fapp{\ct{\sigma}}{#1}}
\newcommand{\psiS}[1]{\fapp{\cs{\psi}}{#1}}
\newcommand{\psiO}[1]{\fapp{\co{\psi}}{#1}}
\newcommand{\psiT}[1]{\fapp{\ct{\psi}}{#1}}
\newcommand{\alphaS}[1]{\fapp{\cs{\alpha}}{#1}}
\newcommand{\phiS}[1]{\fapp{\cs{\phi}}{#1}}
\newcommand{\tauS}[1]{\fapp{\cs{\tau}}{#1}}
\newcommand{\tauO}[1]{\fapp{\co{\tau}}{#1}}
\newcommand{\tauT}[1]{\fapp{\ct{\tau}}{#1}}
\newcommand{\numCurrentX}{\#_\mathsf{\currentT}}
\newcommand{\numCurrent}[1]{\papp{\#_\mathsf{\currentT}}{#1}}
\newcommand{\funupdate}[3]{{#1}[{#2}\mapsto{#3}]}
\def\sgpsi{\sigma\!\otimes\!\psi}
\def\sgpsiS #1 {\cs{\sgpsi}(#1)}
\def\sgpsiO #1 {\co{\sgpsi}(#1)}
\def\sgpsiT #1 {\ct{\sgpsi}(#1)}
\def\sgpsimu{\sigma\!\otimes\!\psi\!\otimes\mu}
\def\sgpsimuS #1 {\cs{(\sgpsimu)}#1}
\def\sgpsimuO #1 {\co{\sgpsimu}#1}
\def\sgpsimuT #1 {\ct{\sgpsimu}#1}
\def\alphaS #1{\cs{\alpha}(#1)}
\newcommand{\rel}[1]{\mathrel{\bot_{#1}}}
\def\relAlpha #1 #2{#1 \rel{\alpha} #2}
\def\relChi #1 #2{#1 \rel{\chi} #2}
\def\relOmega #1 #2{#1 \rel{\omega} #2}
\newcommand{\ghostcode}[1]{\colorbox{gray!35}{\raisebox{0pt}[5pt][0pt]{\ensuremath{#1}}}}
\newcommand{\pcmA}{A}
\newcommand{\pcmB}{B}
\DeclareMathOperator{\join}{\bullet}
\def\+{\join}
\DeclareMathOperator{\undefOp}{\top}
\newcommand{\genrel}[2]{{#1}\mathrel{R}{#2}}
\newcommand{\orth}{\rel{}}
\newcommand{\unit}{\mathds{1}}
\newcommand{\valid}[2]{#1 \orth #2}
\newcommand{\comp}[2]{{#1}\circ{#2}}
\newcommand{\tepr}[2]{{#1}\otimes{#2}}
\DeclareMathOperator{\eql}{eql}
\newcounter{ToDos}
\newcounter{WarnCounts}
\newcommand{\dt}[1]{\emph{\color{black}{#1}}} 
\newcommand{\acite}[3]{\ifappendix
  {Appendix~\ref{#1}}\else{\cite[Appendix~{#2}]{#3}}\fi}
\newcommand{\var}[1]{\ensuremath{\mathit{#1}}}
\newcommand{\bstar}{\,{\boldsymbol{*}}\,}
\newcommand{\ldot}{\mathord{.}\,}
\def\incfetch{%
  \@ifnextchar({\incfetchapply}{\incfetchcore}
}
\def\incfetchcore{\textsf{inc\_and\_fetch}}
\def\incfetchapply(#1){\incfetchcore(#1)}
\newcommand{\separate}{separate\xspace}
\newcommand{\separateness}{separateness\xspace}
\newcommand{\separating}{separating\xspace}
\newcommand{\Separating}{Separating\xspace}
\newcommand{\SepRel}{\Separating Relation\xspace}
\newcommand{\seprel}{\separating relation\xspace}
\newcommand{\Seprels}{\Separating relations\xspace}
\newcommand{\SepRels}{\Separating Relations\xspace}
\newcommand{\seprels}{\separating relations\xspace}
\newcommand{\fapp}[2]{#1\,#2}
\newcommand{\papp}[2]{#1\,(#2)}
\newcommand{\thread}{\theta}
\newcommand{\ctr}{\var{tdr}}
\newcommand{\isalock}{\mathsf{is\_lock}}
\newcommand{\locked}{\mathsf{locked}}
\newcommand{\unlocked}{\mathsf{unlocked}}
\DeclareMathOperator{\id}{\iota}
\def\lockval{\mathtt{L}}
\def\unlockval{\mathtt{U}}
\newcommand{\Op}{\mathsf{Op}}
\newcommand{\cupdot}{\mathbin{\mathaccent\cdot\cup}}
\def\marknode<#1>[#2]#3%
\begin{document}

\title[On Algebraic Abstractions for Concurrent Separation Logics]
      {On Algebraic Abstractions for Concurrent Separation Logics} 


\def\afIMDEA{
        \affiliation{\institution{IMDEA Software Institute}\country{Spain}}}
\def\afTEZOS{
        \affiliation{\institution{Nomadic Labs}\country{France}}}
\def\afUCM{
        \affiliation{\institution{Universidad Complutense de Madrid}\country{Spain}}}

\author{Franti\v{s}ek Farka}
\orcid{0000-0001-8177-1322}
\afIMDEA
\email{frantisek.farka@imdea.org}

\author{Aleksandar Nanevski}
\orcid{0000-0002-4851-1075}
\afIMDEA
\email{aleks.nanevski@imdea.org}

\author{Anindya Banerjee}
\orcid{0000-0001-9979-1292}
\afIMDEA
\email{anindya.banerjee@imdea.org}

\author{Germ\'{a}n Andr\'{e}s Delbianco}
\orcid{0000-0002-2249-1168}
\afTEZOS
\email{german@nomadic-labs.com}

\author{Ignacio F\'{a}bregas}
\orcid{0000-0002-3045-4180}
\afUCM
\email{fabregas@fdi.ucm.es}

\begin{abstract}
  Concurrent separation logic is distinguished by transfer of state
  ownership upon parallel composition and framing. The algebraic
  structure that underpins ownership transfer is that of partial
  commutative monoids (PCMs).  Extant research considers ownership
  transfer primarily from the logical perspective while comparatively
  less attention is drawn to the algebraic considerations.
  This paper provides an algebraic formalization of ownership transfer
  in concurrent separation logic by means of structure-preserving
  partial functions (i.e., morphisms) between PCMs, and an associated
  notion of \seprels.  Morphisms of structures are a standard concept
  in algebra and category theory, but haven't seen ubiquitous use in
  separation logic before. \Seprels are binary relations that
  generalize disjointness and characterize the inputs on which
  morphisms preserve structure.
  The two abstractions facilitate verification by enabling concise
  ways of writing specs, by providing abstract views of threads'
  states that are preserved under ownership transfer, and by enabling
  user-level construction of new PCMs out of existing ones.

\end{abstract}

\begin{CCSXML}
<ccs2012>
<concept>
<concept_id>10003752.10003790.10011742</concept_id>
<concept_desc>Theory of computation~Separation logic</concept_desc>
<concept_significance>500</concept_significance>
</concept>
<concept>
<concept_id>10003752.10003790.10011741</concept_id>
<concept_desc>Theory of computation~Hoare logic</concept_desc>
<concept_significance>500</concept_significance>
</concept>
<concept>
<concept_id>10003752.10003790.10011740</concept_id>
<concept_desc>Theory of computation~Type theory</concept_desc>
<concept_significance>500</concept_significance>
</concept>
<concept>
<concept_id>10011007.10011074.10011099.10011692</concept_id>
<concept_desc>Software and its engineering~Formal software verification</concept_desc>
<concept_significance>300</concept_significance>
</concept>
<concept>
<concept_id>10010147.10011777.10011778</concept_id>
<concept_desc>Computing methodologies~Concurrent algorithms</concept_desc>
<concept_significance>300</concept_significance>
</concept>
</ccs2012>
\end{CCSXML}

\ccsdesc[500]{Theory of computation~Separation logic}
\ccsdesc[500]{Theory of computation~Hoare logic}
\ccsdesc[500]{Theory of computation~Type theory}
\ccsdesc[300]{Software and its engineering~Formal software verification}
\ccsdesc[300]{Computing methodologies~Concurrent algorithms}

\keywords{Program Logics for Concurrency, Hoare/Separation Logics, Coq}

\maketitle

\def\appUnlock{A}
\def\appStability{B}
\def\CAP{C}
\def\appCoq{D}


\section{Introduction}\label{s:intro}



The algebraic foundations of separation logic are rooted in the
discovery that the structure of \emph{partial commutative monoids
  (PCMs)} underpins the semantics of the key inference rules of
%
%
framing and parallel
composition~\cite{PymOY04,CalcagnoOY07,DinsdaleYoung-al:POPL13}. The
PCMs do so by mathematically representing the essential notions of
state ownership and ownership transfer, while abstracting the details
of the concrete memory models used by the programs.

%

In a nutshell, a PCM is a structure $(\pcmA, \join, \unit)$ on a
carrier set $\pcmA$, equipped with a (partial) binary operation
$\join$ (pronounced ``join''), which is commutative, associative, and
has $\unit$ as the unit.
The elements of the carrier $A$ model the \emph{private} state of
individual threads, and $\join$ models how the private states of two
children threads combine into the state of their parent. The operation
$\join$ is commutative and associative because the order of threads in
a thread pool is irrelevant for the computation. The operation
$\join$ is partial to signify that some state combinations are impossible. For
example, if $x \join y$ is undefined, then $x$ and $y$ can't be the
private states of two different concurrent threads,
\emph{simultaneously}.
%
The unit element represents the empty private state.  

The canonical PCM in separation logic is that of heaps, which are
finite maps from pointers (positive natural numbers) to values. The
$\join$ is the \emph{disjoint} union of heaps. It is undefined if the
operand heaps have a pointer in common, thus modeling that the private
heaps of two concurrent threads can't share pointers. The unit is the
heap with no pointers allocated.
When a parent forks two children threads, then its private heap is
divided disjointly among the children. Upon joining, the private,
disjoint heaps of the children are unioned to derive the heap of the
parent. This transfer of heap ownership between parent and children
threads is the defining pattern of separation logic.

While PCMs were originally used to explain the \emph{semantics} of
separation logic, more recent separation
logics~\cite{LeyWild-Nanevski:POPL13,Nanevski-al:ESOP14,Jensen-Birkedal:ESOP12,Appel-al:BOOK14,Jung-al:POPL15}
take a step further and employ PCMs in \emph{program specifications}
(henceforth: specs). In these logics the user may introduce various
PCMs to model custom notions of ghost state relevant to the
verification problem. Examples include
PCMs of permissions~\cite{Bornat-al:POPL05}, and PCMs of
histories~\cite{Sergey-al:ESOP15} for representing temporal (i.e.,
execution order) properties in the style of linearizability and other
consistency
criteria~\cite{del+ser+nan+ban:ecoop17,NanevskiBDF+oopsla19,sergey:oopsla16}. Having
arbitrary PCMs also facilitates the verification of graph
algorithms~\cite{Sergey-al:PLDI15}, which has been notoriously
difficult in heap-only separation logics.
%
%
These approaches therefore usefully combine the algebra of PCMs with
logical reasoning about state ownership and transfer.


In this paper, we take the PCM-based approach to specification
significantly further by introducing a theory of structure-preserving functions (\dt{morphisms}), and structure-preserving relations (\dt{\seprels}) on PCMs. Morphisms are \emph{partial},
as they preserve the PCM structure only on \emph{some} inputs. \Seprels are
binary relations that describe the inputs on which a morphism is
structure-preserving, and abstractly generalize heap disjointness.

The above development has two relevant consequences for
separation logic.
First, it immediately provides powerful user-level support for
constructing new PCMs out of existing ones. To see why such
construction is desirable, consider that to specify both spatial and
temporal properties of programs, the user may want to combine the PCMs
of heaps and histories into their Cartesian product, itself also a
PCM. But a standard use of morphisms in abstract algebra and category
theory is precisely in the definition of algebraic constructions,
where morphisms relate a construction to its components, e.g., how a
Cartesian product is associated with projection and pairing morphisms. We
illustrate this aspect of our contribution by introducing the
algebraic construction of a \emph{sub-PCM}, and showing how it applies
to verification in separation logic.

Second, the two concepts (morphisms and \seprels) provide ways to
\emph{abstract} from the concrete thread states; morphisms can
functionally \emph{compute} novel abstractions from a state, whereas \seprels
relate the states of a thread and its concurrent environment. Being
structure-preserving means that both respect the ownership transfer of
separation logic, as we shall see.
Together, the two concepts thus present: a novel foundation for
separation logic that facilitates systematic introduction of algebraic
concepts into specs; and a way to mathematically model the essentials
of a verification problem while abstracting from details of program
state.

\subsection{Morphisms as Ownership-Preserving
  Abstractions}\label{s:intromorph}
Glossing over the partiality of PCMs, to which we return in
Section~\ref{s:introseprel}, the standard algebraic definition says
that a morphism from the monoid $(\pcmA, \join_\pcmA, \unit_\pcmA)$
to the monoid $(\pcmB, \join_\pcmB, \unit_\pcmB)$ is a function
$\phi : \pcmA \to \pcmB$ that preserves the monoidal structure:
%
\begin{align}
\label{eq:mp1} \phi (\unit_\pcmA) ~ = ~ & \unit_\pcmB\\ 
\label{eq:mp2} \phi (x\join_\pcmA y) ~ = ~ & \phi(x) \join_\pcmB \phi(y)
\end{align}
We previously described $\join$ as a way to combine private states of
two children threads into the state of the parent. The above equations
then characterize $\phi$ as computing a view---an abstraction---of a
thread's private state, while preserving the thread-private nature of
the view.

To illustrate, consider how PCMs may model a mutually exclusive lock
that threads race to acquire.
We first require the PCM $O$ that formalizes lock ownership. $O$ has
the carrier $\{\own, \nown\}$, where $\own$ (resp. $\nown$) signifies
that the thread owns (resp. doesn't own) the lock.
%
%
The $\join$ computes the lock ownership of the parent thread from
those of the children by the following table, where $\own \join \own$
is undefined as two threads can't own the lock simultaneously,
and $\nown$ is the unit.
\[
\begin{array}{c c c c}
x & y & \quad & x \join y\\
\hline 
\own  & \own & & \mathsf{undefined}\\
\own & \nown & & \own \\
\nown & \own & & \own \\
\nown & \nown & & \nown 
\end{array}
\]
The table says that the lock is transferred from child to parent
upon joining, analogously to how the heap of a child is transferred to
the parent, as discussed before. If neither child owns the lock, then
the parent doesn't own the lock either.

A concrete implementation of the lock will typically require threads
to internally store much more private lock-related state than merely
an element of $O$. The extra state may be used for synchronization
purposes, or it may be ghost state required to formulate the logical
invariants of the locking algorithm, as often necessary for
verification. Let this private state be modeled by a PCM $X$. The
concrete definition of $X$ may differ between lock implementations and
proofs, but each should exhibit a function $\alpha : X \rightarrow O$
that computes the lock ownership status $\fapp{\alpha}{x}$ of a thread
from the thread's private state $x \in X$.

Moreover, $\alpha$ must be structure-preserving, and in particular
must satisfy equation~(\ref{eq:mp2}). To see what goes wrong if
$\alpha$ doesn't, suppose there're states $x$ and $y$ such that, e.g.,
$\fapp{\alpha}{x} = \fapp{\alpha}{y}=\nown$, but $\papp{\alpha}{x
  \join y} = \own$, to consider but one bad combination of values for
$\alpha$ (the other bad combinations are similarly absurd). Then we
have two children threads that don't own the lock, but their parent is
granted the lock upon joining, out of thin air. Such $\alpha$ violates
the transfer of lock ownership between children and parent threads,
and thus doesn't model locking.

\subsection{Partiality and \SepRels as Abstraction of
  Disjointness}\label{s:introseprel}

Taking into account that $\join_A$ may be undefined on some inputs,
it's clear that equation~(\ref{eq:mp2}) can't hold as stated, but must
be prefixed by some condition on $x$ and $y$. At the very least, such
condition should entail that $x \join_A y$ is defined, so that $\phi$
has an input value to which to apply, and on which $\phi$ itself is
defined.
More generally, we associate $\phi$ with a binary relation
$\rel{\phi}$ that captures when $\phi$ distributes over $\join$, via
the updated axiom
\begin{gather}
\label{eq:mp3}
\mbox{if}\ x \rel{\phi} y\ \mbox{then}\ x \join_\pcmA y\ \mbox{and}\ \phi (x \join_\pcmA y)\ \mbox{are defined and}\ \phi (x \join_\pcmA y) = \phi(x) \join_\pcmB \phi(y)
\end{gather}
The relation $\rel{\phi}$ will be a \emph{\seprel}, thus satisfying a
number of properties that we outline in Section~\ref{s:seprels}. One
of the properties that $x \rel{\phi} y$ entails is that $x \join_A
y$ is defined, or, equivalently, that $x$ and $y$ are
\emph{\separate\xspace} (denoted $x\,{\orth}\,y$). Clearly, this
notion generalizes disjointness of heaps and applies it to arbitrary
PCMs. Then a \seprel $\rel{\phi}$ represents a morphism-specific
notion of \separateness that strengthens the one inherited from the
underlying PCM.

Because \separateness determines when states of two threads combine
into a parent state, \seprels essentially provide a custom notion of
when two PCM elements can be considered as states of concurrent
threads, and thus also when a PCM element can be transferred from one
thread to another. A related important use of \seprels is in the
construction of sub-PCMs of the PCM $A$, whereby $\join_A$ is
restricted to the inputs admitted by the \seprel. These uses are
illustrated in Sections~\ref{s:examplesubpcm}
and~\ref{s:seprels}.


We also show in Section~\ref{s:seprels} that morphisms and \seprels
are closed under basic algebraic constructions.
%
%
For example, morphism kernels and equalizers are \seprels; restricting
a morphism by a \seprel produces a new morphism, etc. Thus, \seprels
are a natural algebraic structure to describe the inputs on which a
partial PCM morphism is structure-preserving (and defined).

\subsection{Morphisms and \SepRels in the Abstract}

We further consider how morphisms and \seprels interact to support
framing (or more generally, parallel composition) \emph{in the
  abstract}. In other words, if we have a spec involving morphisms and
\seprels whose exact definitions we want to hide, what properties must
be exposed 
to make it possible to frame the spec?
In Section~\ref{s:inversion}, we argue that what must be exposed is
that the morphisms and the \seprels respectively satisfy the novel property of
\dt{invertibility}, in addition to being structure-preserving
functions and relations. 
Framing in the abstract has been considered in related work on
concurrent abstract predicates (CAP)~\cite{DinsdaleYoung-al:ECOOP10}.
The novelty of our approach is the use of morphisms (i.e., functions)
rather than predicates (i.e., relations). When possible, functions are
preferred to relations, as results of functions needn't be named;
hence one can avoid existential quantification (e.g., consider
function vs.~relation composition). Section~\ref{s:relwork}
discusses
the relationship to concurrent abstract predicates.



\subsection{Use of Morphisms in Specs}
We show that morphisms allow the user to compute, directly in specs,
PCM values out of the state, without requiring almost any other
logical connectives familiar from separation logic. Thus, for the most
part, our specs won't use separating conjunction%
\footnote{Though we'll
  define a similar notion for use in proof outlines.} %
or separating
implication, or the numerous recent additions to separation logic
of \emph{assertions} in the form of modalities and custom notions of
implication~\cite{DinsdaleYoung-al:ECOOP10,Jung-al:POPL15,jung:jfp18,gra+biz+kre+bir:iron19}
and quantification~\cite{ArrozPincho-al:ECOOP14}.  Instead, we rely
only on standard constructs from higher-order logic to make and
combine statements about morphism values and \seprels. However, ours
is still a separation logic as we're concerned with PCMs and
ownership transfer.

As morphisms are just a special class of functions, they are
particularly well-suited to a formalization as a \emph{shallow
  embedding} in a system based on type theory such as Coq. We have
thus mechanized all the results from the paper by building on the
recent formulation of separation logic in Coq by
\citet{NanevskiBDF+oopsla19}. Morphisms and \seprels integrate very
naturally into this ambient theory, and don't require any particular
automation by tactics in order to be used effectively.
%
The resulting mechanization is available as a separate
artefact~\citep{supplemental}.

\section{PCM Abstractions by Example}
\label{s:example}

\subsection{Ticket Lock}\label{sec:tl}
To illustrate the issue at hand consider a simple synchronization
primitive, a ticket lock~\cite{mcs91,Lamport74a}. Ticket lock consists
of two shared pointers, the ticket dispenser $\ctr$, 
and the display $\var{dsp}$. The thread that wishes to acquire the
lock first increments $\ctr$ by the $\incfetch$
primitive.\footnotemark{} The thread then loops until the pointer
$\var{dsp}$ matches the value read from $\ctr$. The thread unlocks by
incrementing the value of $\var{dsp}$.
\begin{align*}
	\progLock \eqdef~ &
	  x \leftarrow \incfetch(\ctr);\\
          & \DO~~~y \leftarrow\,!\var{dsp}~~~\UNTIL\ x = y\\
	\progUnlock \eqdef~ &  \incfetch(\var{dsp})
\end{align*}
\noindent
Intuitively, the ticket lock's workflow resembles the ticket queue
management system that guides customers to a counter in a
bakery~\cite{Lamport74a}. Incrementing $\ctr$ corresponds to taking a
new ticket from the ticket dispenser, thus fixing a customer's
position in the queue. Looping corresponds to awaiting the ticket's
turn.
Incrementing $\var{dsp}$ signals, on the display, the next customer's
turn. The initial value of $\ctr$ is $0$; thus, the first ticket drawn
is $1$. The initial value of $\var{dsp}$ is $1$; thus, the first
thread that draws $1$ can immediately be served. The sequel continues
this analogy.

\footnotetext{\emph{Increment-and-fetch} is a generic
  RMW operation~\cite{Herlihy-Shavit:08} that atomically increments
  the value stored at $\ctr$ and returns the incremented
  value. Similar primitives exist in many systems,
  e.g.,~\textsf{\_\_atomic\_add\_fetch} of gcc.}

%
%
\subsubsection*{Specs.}
The specs of the two ticket lock programs should say that $\progLock$
acquires exclusive ownership of the ticket lock, and $\progUnlock$
releases it. We denote that by the following type
ascriptions.\footnote{For simplicity, we don't consider lock
  invariants that describe the heap that the lock protects. Attaching
  such invariants is an orthogonal issue to the topic of this paper
  and has been discussed in~\cite{NanevskiBDF+oopsla19}.\label{ftn:invars}}
\begin{eqnarray}
  \label{spec:lock}
 	\HTj{\progLock}
		{\spec{\lambda s \ldot \fapp{\cs{\alpha}}{s} = \nown
		}}
		{\spec{\lambda s \ldot \fapp{\cs{\alpha}}{s} = \own
		}}
		{\rTicket}\\
 \label{spec:unlock}\HTj{\progUnlock}
		{\spec{\lambda s \ldot \fapp{\cs{\alpha}}{s} = \own
		}}
		{\spec{\lambda s \ldot \fapp{\cs{\alpha}}{s} = \nown
		}}
		{\rTicket}
\end{eqnarray}
Unlike most separation logics, we make the binding of the state $s$ in
the assertions explicit by means of $\lambda$, as customary in
higher-order logic.
In the above specs, $\alpha$ is a morphism from the underlying PCM of
the state $s$, whose exact definition we want to keep abstract, to the
PCM $O$ from Section~\ref{s:intromorph}.
Several questions arise. Although a client can reason with the specs,
they appear too abstract: how can the specs be established in the
first place?  After all, on inspection of the implementations of
$\progLock$ and $\progUnlock$ above, it isn't obvious how morphism
$\alpha$ is even involved. It turns out that we will require concrete
specs of the implementations and then hide implementation-level
details to define $\alpha$ and obtain the abstract
specs~(\ref{spec:lock}) and (\ref{spec:unlock}).
%
%
But then how do morphisms and separating relations interact with the
\emph{concrete} specs? How do they work with framing of the concrete
specs? How do they work under abstraction? The sequel answers these
questions after first introducing the basics of our type-theoretic
approach.

\subsection{Hoare Types, States and Specifications}
\subsubsection*{Hoare Types}
A Hoare type~\cite{Nanevski-al:ICFP06,NanevskiBDF+oopsla19} is a
dependently typed state and concurrency (and divergence) monad,
indexed with 
a spec in the style of separation logic.
Concretely, in the judgment $\HTj{e}{\specP}{\specQ}{\rV}$, $P$ and
$Q$, both predicates over state $s$, are respectively the pre- and
postcondition of a program $e$, in the sense of partial correctness.
%
%
$\rV$ is a \emph{resource}, i.e., a state transition system describing
the atomic state changes that $e$ is permitted. Two programs can be
safely composed, sequentially or in parallel, only if they are typed
by the same resource. The resource thus serves as a bound on the
interference that concurrent threads can perform on each other's
executions, enabling a form of \emph{rely-guarantee}
reasoning~\cite{Jones:TOPLAS83}. As $\progLock$ and $\progUnlock$
share the resource $\rTicket$ (to be defined soon), they can be
composed.


\subsubsection*{States}
In our ambient type theory~\cite{NanevskiBDF+oopsla19}, states are
\emph{subjective}~\cite{LeyWild-Nanevski:POPL13}. That is, each state
$s$ is a pair $(\cs{s}, \co{s})$, where $\cs{s}$ and $\co{s}$ are
referred to as \emph{self} and \emph{other} components,
respectively. The $\cs{s}$ component describes the private state of a
thread, whereas $\co{s}$ describes the combined state of all the other
threads, that is, the concurrent environment.\footnote{States
  in~\cite{NanevskiBDF+oopsla19} also contain the third component
  $s_j$ describing shared state, but we won't need it here.} 
Thus, \emph{self} and \emph{other} components model, \emph{at the
  level of state}, the same dichotomy modeled by the rely and
guarantee \emph{transitions} of rely-guarantee reasoning.
%
%
%
%
The value of $\co{s}$ may be used in specs, but a program can't alter
it. Both $\cs{s}$ and $\co{s}$ are elements of one and the same PCM.
When we access state components by a morphism $\phi$, we attach the
subscript to the morphism and write $\fapp{\cs{\phi}}{s}$ and
$\fapp{\co{\phi}}{s}$ instead of $\fapp{\phi}{\cs{s}}$ and
$\fapp{\phi}{\co{s}}$ respectively. We write $\fapp{\hat\phi}{s}$ for
$\fapp{\cs{\phi}}{s} \join \fapp{\co{\phi}}{s}$.
We also implicitly assume that in every state $s$, the components
$\cs{s}$ and $\co{s}$ are \emph{\separate}; that is,
$\cs{s} \join \co{s}$ is defined in the PCM of the resource. Using the
notation from Section~\ref{s:introseprel}, this is denoted as
$\cs{s} \orth \co{s}$.

%
\begin{figure}
\[
\begin{array}{c@{\!\!\!\!}c@{\!\!\!\!}c}
\begin{array}{c}
\begin{tikzpicture}[scale=0.94, transform shape]
  \filldraw[fill=gray!70, draw=black] (0, 0) -- (-30:1cm) arc (-30:90:1cm) -- cycle; 
  \filldraw[fill=gray!20, draw=black] (0, 0) -- (90:1cm) arc (90:210:1cm) -- cycle; 
  \filldraw[fill=gray!70, draw=black] (0, 0) -- (210:1cm) arc (210:330:1cm) -- cycle; 
  \filldraw[fill=white, draw=black] (0,0) circle (0.53cm);
  \node[scale=0.9] at (155:0.75cm) {$a_1$};
  \node[scale=0.9] at (-90:0.75cm) {$a_2$};
  \node[scale=0.9] at (25:0.75cm) {$a_3$};
  \newlength\stxtwdl%
  \settowidth{\stxtwdl}{$\cs{s} = a_1 \hspace{3mm}$}%
  \node[scale=0.8, anchor=east] at (225:1cm) {$\cs{s_1} = a_1\quad$};
  \node[scale=0.8, anchor=west] at (45:1cm) {$\quad \co{s_1} = a_2 \join a_3$};
\end{tikzpicture}\\
{\fontsize{9}{9}\textrm{(1) Thread $\thread_1$}}
\end{array}
&
\begin{array}{c}
\begin{tikzpicture}[scale=0.94, transform shape]
  \filldraw[fill=gray!70, draw=black] (0, 0) -- (-30:1cm) arc (-30:90:1cm) -- cycle; 
  \filldraw[fill=gray!70, draw=black] (0, 0) -- (90:1cm) arc (90:210:1cm) -- cycle; 
  \filldraw[fill=gray!20, draw=black] (0, 0) -- (210:1cm) arc (210:330:1cm) -- cycle; 
  \filldraw[fill=white, draw=black] (0,0) circle (0.53cm);
  \node[scale=0.9] at (155:0.75cm) {$a_1$};
  \node[scale=0.9] at (-90:0.75cm) {$a_2$};
  \node[scale=0.9] at (25:0.75cm) {$a_3$};
  \newlength\stxtwdr%
  \settowidth{\stxtwdr}{$\cs{s} = a_1 \hspace{3mm}$}%
  \node[scale=0.8, anchor=east] at (225:1cm) {$\cs{s_2} = a_2\quad$};
  \node[scale=0.8, anchor=west] at (45:1cm) {$\quad \co{s_2} = a_3 \join a_1$};
\end{tikzpicture}\\
{\fontsize{9}{9}\textrm{(2) Thread $\thread_2$}}
\end{array}
&
\begin{array}{c}
\begin{tikzpicture}[scale=0.94, transform shape]
  \filldraw[fill=gray!70, draw=black] (0, 0) -- (-30:1cm) arc (-30:90:1cm) -- cycle; 
  \filldraw[fill=gray!20, draw=black] (0, 0) -- (90:1cm) arc (90:210:1cm) -- cycle; 
  \filldraw[fill=gray!20, draw=black] (0, 0) -- (210:1cm) arc (210:330:1cm) -- cycle; 
  \filldraw[fill=white, draw=black] (0,0) circle (0.53cm);
  \node[scale=0.9] at (155:0.75cm) {$a_1$};
  \node[scale=0.9] at (-90:0.75cm) {$a_2$};
  \node[scale=0.9] at (25:0.75cm) {$a_3$};
  \newlength\stxtwd%
  \settowidth{\stxtwd}{$\cs{s} = a_1 \join a_2 \hspace{3mm}$}%
  \node[scale=0.8, anchor=east] at (-225:1cm) {\makebox[\stxtwd][l]{$s = s_1 \star s_2$:}};
  \node[scale=0.8, anchor=east] at (225:1cm) {$\cs{s} = a_1 \join a_2 \quad$};
  \node[scale=0.8, anchor=west] at (45:1cm) {$\quad \co{s} = a_3$}; 
\end{tikzpicture}\\
{\fontsize{9}{9}\textrm{(3) Parent thread $\thread = \thread_1 \parallel \thread_2$}}
\end{array}
\end{array}\vspace{-1em}
\]
\caption{States of concurrent threads. \emph{Self} components are in 
  light shade, \emph{other} components are in dark. Adapted from 
  \cite{NanevskiBDF+oopsla19}.}\label{fig:subjectivity}\vspace{-3mm}
\end{figure}
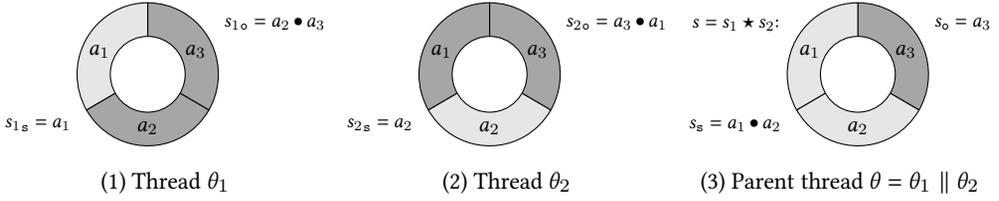

Figure~\ref{fig:subjectivity} illustrates the interaction among the
state components of concurrent threads. Consider three threads,
$\thread_1$, $\thread_2$, and $\thread_3$, running concurrently, and
without any additional threads. Their respective states must have the
forms $s_1 = (a_1, a_2 \join a_3)$, $s_2 = (a_2, a_3 \join a_1)$, and
$s_3 = (a_3, a_1 \join a_2)$, because any of the two threads combined
represent the concurrent environment of the third. Thus, the join of
the \emph{self}'s of any two threads must equal the \emph{other} of
the third.
If $\thread$ is the parent thread of $\thread_1$ and $\thread_2$, then
its state is $s = (a_1 \join a_2, a_3)$, since $\thread$ is the
combination of $\thread_1$ and $\thread_2$, and has $\thread_3$ as its
environment. In particular, the join of the \emph{self} and
\emph{other} components is invariant across all the
threads. Figure~\ref{fig:subjectivity} illustrates these relations.
Moreover, we abbreviate the relationship of the state $s$ of the parent
thread $\thread$ and the states $s_1$ and $s_2$ of children threads
$\thread_1$ and $\thread_2$ by $s = s_1 \star s_2$.

\subsubsection*{Morphisms and \SepRels}
%
In the types of $\progLock$ and $\progUnlock$, $\alpha$ computes the
lock ownership information from $\cs{s}$. It's therefore apparent that
the types capture what's desired: that the $\progLock$ program starts
not owning the lock (precondition $\fapp{\cs{\alpha}}{s} = \nown$),
and acquires the lock upon termination (postcondition
$\fapp{\cs{\alpha}}{s} = \own$), and conversely for
$\progUnlock$. We'll see examples of other morphisms and \seprels
shortly, when we discuss the internal definition of the state.



\subsection{Internal State of the Ticket Lock}
\label{s:intstate}
Recall that our goal is to define morphism $\alpha$ and reach the
abstract specs~(\ref{spec:lock}) and (\ref{spec:unlock}) via concrete
specs of the implementations of $\progLock$ and $\progUnlock$.  To
that end, we next design the ghost state of the ticket lock so that we
can express the internal logical invariants needed for the typing
derivations of the implementations of $\progLock$ and $\progUnlock$.
Later, the morphism $\alpha$ will abstract these internals to an
element of $O$.
We use the following PCM $U$ for the internals.
\begin{align}
	\label{eq:repr}
	U =~& \mapTctl \qquad \mbox{where $\labelN = \labelT$}
\end{align} 
%
%
Here, $\mapTctl$ is the type of finite (partial) maps from positive
natural numbers, representing tickets. Given a ticket $t$, the value
of the map at $t$ is one of the \emph{labels} in the set $\labelN$,
denoting the status of the ticket according to the ticket lock
workflow from Section~\ref{sec:tl}: $\waitingT$ means that $t$ has
been drawn from the dispenser and the thread holding $t$ is waiting to
be called on the display; $\currentT$ means that $t$ has been called
on the display and the thread has begun its turn holding the lock; and
$\servedT$ means $t$'s turn at the counter
has finished, and the thread holding $t$ has unlocked by signaling
$t+1$'s turn on the display. Notice that we don't throw away tickets,
but just change their status in the map to reflect their progress
through the bakery. The map thus serves as a form of history of the
bakery.
Similarly to heaps, the type $\mapTctl$ is a PCM under the operation
of disjoint union $\cupdot$ of maps, which is undefined if the two
operands share a ticket. The unit is the empty (i.e., nowhere defined)
map $\emptyset$. We take $\Nat^+$ as the domain instead of $\Nat$ in
order to exclude the ticket $0$, as the latter can't be drawn from the
dispenser.
%

Given a ticket map $x \in U$ that represents the history of tickets in
the bakery, we can compute out of $x$ the ticket called on the display
by the following definition, where we assume that $\mathsf{max}$ of
the empty set of natural numbers is by default the value $0$.
\[
\fapp{\psi}{x} = \max{\{t \in \dom{x} \mid \fapp{x}{t} = \servedT\}} + 1
\]
Indeed, according to the workflow of ticket locks, upon finishing its
turn with the lock a thread holding the ticket $t$ sets the display to
$t+1$ to call the next thread in the queue. Thus, the value of the
display, which at that point denotes the currently called ticket, is
one larger than the largest $\servedT$ ticket in $x$.
The function $\psi : U \rightarrow \Nat^+$ is our first example of a
morphism, where we endow $\Nat^+$ with the PCM structure
$(\Nat^+, \mathsf{max}, 1)$.
%
%
Indeed,
it's easy to see that for any two disjoint ticket maps $x \orth y$, we
have
\[
\fapp{\psi}(x \join y) = \mathsf{max}(\fapp{\psi}{x}, \fapp{\psi}{y})
\]
Moreover, $\fapp{\psi}{\emptyset} = 1$, and $1$, being the smallest
element of $\Nat^+$, is the unit w.r.t.  $\mathsf{max}$.



\subsubsection*{Morphism Notation for Ticket Locks}
Let us name the identity morphism on $U$ as $\sigma$. Giving a special
name to the identity morphism will provide for uniform notation in our
specs, where we apply $\sigma$, $\psi$ and other morphisms to compute
various values from states.
In particular, when applying morphisms $\sigma, \psi$ to
state 
$s = (\cs{s},\co{s})$, and according to the morphism notation from the
previous section, we use the following expressions to denote various
ticket maps and values.
\begin{itemize}
\item $\sgS{s}$ denotes the \emph{self} map of tickets. These
  are the tickets, and their status, that the thread under
  consideration (henceforth ``we'' or ``us'') has drawn from the
  dispenser.
\item $\sgO{s}$ denotes the \emph{other} ticket map. These are the
  tickets, and their status, that every other thread but ``us''
  (henceforth ``others'') has drawn from the dispenser.
\item $\psiS{s}$ denotes the \emph{self} value of the last \emph{called}
  ticket. This is the ticket that ``we'' have called by incrementing
  the display upon finishing our last turn at the counter, to call the
  next thread in the ticket queue.
\item $\psiO{s}$ is the ticket last called by ``others'', when they
  finished their turns at the counter.
\end{itemize}

The combined ticket map $\sgT{s} = \sgS{s} \join \sgO{s}$ and the
value
$\psiT{s} = \psiS{s} \join \psiO{s} = \mathsf{max}(\psiS{s},
\psiO{s})$ have further important meanings.
%
As tickets are drawn in order, we can compute the
current value of the ticket dispenser pointer $\ctr$ as
$\lastKey{\sgT{s}}$. Similarly, we can compute the value of the display
pointer $\var{dsp}$ as $\psiT{s}$.
Therefore, our specs 
needn't explicitly store the values of $\ctr$ and
$\var{dsp}$, or any other shared state. \emph{In specs}, any shared
state can generally be 
computed out of \emph{self} and 
\emph{other} ghost components that suitably track the history of the
updates to that shared state, just like $\sgT$ and $\psiT$ compute the
values of $\ctr$ and $\var{dsp}$ out of the \emph{self} and \emph{other}
ticket maps.\footnote{Of course,
  one needs to relate the ghost to concrete program state, shared or
  private, but that's beyond our scope here. We refer
  to~\cite{NanevskiBDF+oopsla19} for more details on how this
  relationship is made in the ambient theory.}

\subsection{Concrete Specs, Ghost Code, and Proof
  Outlines}\label{s:transitions}
With the internal state defined, we can next establish the following
types for the implementations, in Section~\ref{sec:tl}, of $\progLock$ and
$\progUnlock$. The types are \emph{concrete}, because they
specify $\progLock$ and $\progUnlock$ in terms of components of the
underlying PCM $U$ using morphisms $\sigma$ and $\psi$, thus exposing
the internal state of ticket lock. In the example, we denote by
$t \hmapsto l$ the singleton map that assigns label $l$ to a ticket
$t$, and is undefined elsewhere.
\begin{align*}
\HTjalignedline{\progLock}
	{\spec{\lambda s\ldot
		\sgS{s} = \emptyset 
		}} 
	{\spec{\lambda s\ldot 
		\sgS{s} = (\psiT{s}) \hmapsto \currentT 
		}}
	{\rTicket}\\
\GHTjalignedline{\progUnlock}[t]
	{\lambda s\ldot
		\sgS{s} = t \hmapsto \currentT 
                \land t = \psiT{s} 
		}
	{\lambda s\ldot
		\sgS{s} = t \hmapsto \servedT 
		}
	{\rTicket}
\end{align*}

The spec for $\progLock$ says that initially the ghost ticket map is
empty. Thus, as customary in separation logic, it can be framed to any
ticket map.
Upon termination, we hold the ticket being displayed and this ticket
is labeled as $\currentT$ in our map
($\sgS{s} = (\psiT{s}) \hmapsto \currentT$). Notice that the value
$\psiT{s}$ in the postcondition is \emph{stable} under interference, as other
threads can't change the display because we hold the lock when we're being
served. In
particular, they can't change $\psiO{s}$ which is a factor in the
computation of $\psiT{s} = \psiS{s} \join \psiO{s}$.

The spec for $\progUnlock$ says that we hold the displayed ticket $t$
($\sgS{s} = t \hmapsto \currentT$ and $t = \psiT{s}$). Upon
termination, we still hold $t$, but it's now labeled as $\servedT$, to
indicate we finished our turn.
The Hoare type for $\progUnlock$ explicitly binds the variable $t$,
denoted by $[t]$, to snapshot the initial value of the display and to
allow its use both in the precondition and the postcondition. The
scope of $t$ extends through the precondition and postcondition to the
right of the binding $[t]$.
We couldn't have ascribed to $\progUnlock$ the postcondition
$\lambda s\ldot\sgS{s} = \psiT{s} \hmapsto \servedT$  
because the value $\psiT{s}$ in the postcondition isn't
stable.
Indeed, after we unlock, other threads can get their turn at the
counter and increment the display. Thus, we use $t$ to explicitly bind
the stable value that the display has when $\progUnlock$ is invoked,
and we hold the lock.

%


We emphasize how morphisms in the above specs combine in the standard
mathematical fashion to compute various required values. For example,
we apply $\psi$ to $\cs{s}$ and $\co{s}$ to obtain $\psiS{s}$ and
$\psiO{s}$, and then combine the two into the expression $\psiT{s} =
\psiS{s}\join\psiO{s}$, to define $\sgS{s}$. But for this to be
possible, we had to make the binding of the state $s$ \emph{explicit}
in the assertions of the Hoare triple, so that $\psi$ and $\sigma$
could be applied to the different projections of the same state
$s$. Had we kept $s$ \emph{implicit}, as customary in separation logic
assertions, expressing the above specs would have required somewhat
more logical machinery.
%
%
This convenience afforded by morphisms and explicit states extends to
proof outlines, and to the definitions of resource transitions (see
below) which relate two states, the input and output states, that are
usefully differentiated by the explicit naming.

\subsubsection*{Transitions and State Space of the Resource $\rTicket$}
Before we can derive the types for $\progLock$ and $\progUnlock$, we
need to annotate the programs with \emph{ghost code}, i.e., code that
manipulates the ghost state expressed in terms of $\sigma$ and $\psi$.
In our ambient type theory, the ghost code is formed by transitions of
the resource (i.e., the state-transition system) of the specs; in the
current example, transitions of the resource $\rTicket$.
In Figure~\ref{fig:ststicket} we show the three transitions that define
$\rTicket$:
\taketxTr, \lockTr, and \unlockTr.
Each is a
relation over the initial state $s$ and final state $s'$, and defines
one of the three basic changes that ticket lock programs can perform
over the state.  
%
We denote by $\funupdate{f}{x}{a}$ the function obtained by changing
the value of function $f$ at point $x$ to the value $a$.

\begin{figure}
\begin{subfigure}{\textwidth}
\begin{align*}
\taketxTr~s~s' \eqdef &~
  \sgS{s'} = (t \hmapsto \waitingT) \cupdot  \sgS{s}
	&& \mbox{where $t = \fresh{\sgT{s}}$}\\
\lockTr~s~s' \eqdef  &~
  \papp{(\sgS{s})}{t} = \waitingT 
  	\land 
  \sgS{s'} = \funupdate{(\sgS{s})}{t}{\currentT}
       && \mbox{where $t = \psiT{s}$} \\
\unlockTr~s~s' \eqdef &~
  \papp{(\sgS{s})}{t} = \currentT \land 
  \sgS{s'} = \funupdate{(\sgS{s})}{t}{\servedT}\
   && \mbox{where $t = \psiT{s}$}
\end{align*}
\end{subfigure}

\vspace{1mm}

\begin{subfigure}{0.60\textwidth}
\begin{flalign*}
   &s \in \STicket \eqdef \papp{\ord}{\sgT{s}}
   \wedge \papp{\nogaps}{\sgT{s}}&
\end{flalign*}
where
\begin{align*}
\papp{\ord}{x} \eqdef &
	\set{t \mid \papp{x}{t} = \servedT} <
	\set{t \mid \papp{x}{t} = \currentT} \wedge \\ 
 	& \set{t \mid \papp{x}{t} = \currentT} < \set{t \mid 
    \papp{x}{t} = \waitingT} \wedge\\
 	& \set{t \mid \papp{x}{t} = \servedT} < \set{t \mid 
    \papp{x}{t} = \waitingT}\\
\papp{\nogaps}{x} \eqdef & 
	\forall t \in \Nat^+\ldot t+1 \in \dom{x} \Rightarrow t \in \dom{x}
\end{align*}
\end{subfigure}
\begin{subfigure}{0.29\textwidth}
\begin{tikzpicture}
\centering
\node[circle,draw=black] (D) at (0,0) {\STicket};
\draw (D) edge[loop above] node {\unlockTr} (D);
\draw (D) edge[loop right] node {\lockTr} (D);
\draw (D) edge[loop below] node {\taketxTr} (D);
\end{tikzpicture}
\end{subfigure}
\caption{The state transition system \rTicket. For sets $S, T$, the
  notation $S < T$ means $\forall s\in S, t\in T.\, s < t$.}\vspace{-3mm}
\label{fig:ststicket}
\end{figure}
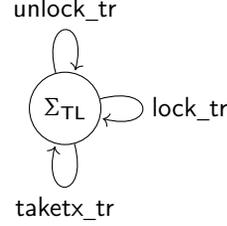


In the transition $\taketxTr$, the smallest undrawn ticket in the
state $s$, $\fresh{\sgT{s}}$, is added into the \emph{self} component
$\sgS{s'}$ and labeled $\waitingT$. Thus $\taketxTr$ models a thread
drawing a fresh ticket.
In the transition $\lockTr$, the value $t = \psiT{s}$ is the ticket on
display. This transition updates the ticket map $\sigma$ at $t$ from
$\waitingT$ to $\currentT$ to model that the thread noticed its ticket
called on the display, and took its turn at the counter.
The $\unlockTr$ transition checks that the ticket $t$ being displayed
is owned by the thread and is being served: so $(\sgS{s})(t) =
\currentT$.
%
%
The transition updates the status of $t$ to $\servedT$ to model
finishing the turn. Note that from the definition of $\psi$
(Section~\ref{s:intstate}), one can immediately compute that
$\psiS{s'} = t+1$ because $t$ is the largest used ticket in $s'$.

We emphasize that resource transitions aren't arbitrary relations on
states. Rather, as customary in separation logic~\cite{CalcagnoOY07},
they must satisfy the important property of \emph{locality}. The
latter constrains the behavior of a transition under ownership
transfer, and is necessary for the soundness of the rules of frame and
parallel composition. The precise definition of locality in the
subjective setting is given by the ambient type
theory~\cite[Definition 3.5]{NanevskiBDF+oopsla19}. Here, we just
mention that the three transitions of $\rTicket$ are all local, which
we proved in the Coq code. Because locality constrains ownership
transfer, these proofs essentially rely on the functions $\sigma$ and
$\psi$ being morphisms. In Section~\ref{s:frameandmorph} we illustrate
how morphisms behave under ownership transfer and specifically under
the rule of frame.

$\rTicket$, being a state transition system, requires a state space in
addition to transitions.  The state space $\STicket$, given in
Figure~\ref{fig:ststicket}, is a subset of $U \times U$ that the
transitions preserve.
Thus, the state space imposes natural properties of ticket locks that:
(1) tickets go through the bakery in order, i.e. all $\servedT$
tickets are smaller than $\currentT$ tickets, which in turn are
smaller than all $\waitingT$ tickets, as defined by the predicate
$\ord$ in Figure~\ref{fig:ststicket}; and (2) tickets are drawn
consecutively from the dispenser and none are skipped, as defined by
the predicate $\nogaps$ in Figure~\ref{fig:ststicket}.
%



\subsubsection*{Ghost Code Annotation}
We elide the discussion on how to formally factor transitions into the
ghost code, and refer to the ambient type
theory~\cite{NanevskiBDF+oopsla19} for details. Instead, we decorate
$\progLock$ and $\progUnlock$ below to \emph{informally} illustrate
when the various transitions are invoked to change the ghost
components of the state.
%
%
\[\begin{aligned}
 \progLock \eqdef~ &
		x \leftarrow \atomic<\incfetch(\ctr);
		\ghostcode{
			\taketxTr
		} >; \\
                & \DO~~~y \leftarrow !\var{dsp}~~~\UNTIL\ x = y;\\
		& \atomic<\ghostcode{\lockTr}> \\
\progUnlock \eqdef~ & \left\langle \incfetch(\var{dsp}) ; \ghostcode{
			\unlockTr	
		}\right\rangle
\end{aligned}\]
In the above code, angle brackets $\langle -\rangle$ signify that
the code they enclose executes atomically, that is without interference from
other threads.
In the first
case, the value returned from the agglomeration of actual with ghost
code is the value returned by the actual code itself. For example, the
$\progLock$ program executes $\taketxTr$ atomically with 
%
%
the call to $\incfetch{(\ctr)}$, to bind to $x$ the incremented value
of $\ctr$, and set the status of $x$ in the ghost state to
$\waitingT$.
When the condition $x = y$ is satisfied, since $y$ is assigned the
value of $\var{dsp}$, the ticket $x$ is called on the display. The
$\progLock$ program then executes $\lockTr$ as its final command to
set the status of ticket $x$ to $\currentT$. This models taking the
turn at the counter and completes the acquisition of the lock.
Similarly, $\progUnlock$ executes $\unlockTr$ to record in the ghost
state that the display is incremented upon unlocking.

\subsubsection*{Proof Outlines}\label{s:spec}
We next present the proof outline for $\progLock$ and discuss its key
points.
{
\newcounter{lockcodea}
\def\lineno{\stepcounter{lockcodea}\textsc{\thelockcodea}}
\[\begin{array}[t]{r@{\quad}l}
%
\lineno.& \proofspec{
	\sgS{s} = \emptyset 
}\\
\lineno.& x \leftarrow \atomic<\incfetch(\ctr) ;
                 \ghostcode{\taketxTr} >\\
\lineno.&     \proofspec{
	\sgS{s} = x \hmapsto \waitingT 
        \land \psiT{s} \leq x 
	}\\
\lineno.& \DO~~~y \leftarrow\,!\var{dsp}\\
\lineno.&     \proofspec{
	\sgS{s} = x \hmapsto \waitingT 
        \land y \leq \psiT{s} \leq x 
	}\\
\lineno.& \UNTIL\ x = y; \\
\lineno.& \proofspec{
	\sgS{s} = x \hmapsto \waitingT 
        \land y = \psiT{s} = x 
	}\\
\lineno.& \atomic< \ghostcode{\lockTr} >;\\
\lineno.& \proofspec{
	\sgS{s} = \psiT{s} \hmapsto \currentT 
	}\\
\end{array}\]
}

%
%
Line 1 is the precondition for $\progLock$. 
%
%
Line 3 shows that after the execution of $\taketxTr$, the drawn ticket
$x$ is the (only) ticket in $\sgS{s}$. Moreover, $x$ is computed by
$\incfetch$, and hence is one larger than the last ticket drawn. More
precisely, $x$ is bound to $\lastKey{\sgT{s}}$, for $s$ taken at line 2.
%
Now, from the definition of $\psi$, it must be that $\psiT{s} \leq x$
at the state $s$ taken at line 2. Indeed, $\psi$ computes the largest
$\servedT$ ticket, and $x$ equals the largest ticket, $\servedT$ or
not.
The property $\psiT{s} \leq x$ propagates to line 3 and beyond because
it's stable under interference. Other threads can execute the
transitions of $\rTicket$ over their own states to increase the
display (by increasing $\psiO{s}$ and thus also $\psiT{s}$), but can't
increase the display beyond
$x$. In~\acite{s:stability}{\appStability}{POPL21arxiv} we formally
establish this stability property.
%
%
For the do-until loop (lines 4-6), the loop invariant is on line 5: it
conjoins the property that $y$ is smaller than the displayed ticket
($y \leq \psiT{s}$). This property holds in the loop because line 4
stores the display value into $y$, after which the display may be
further incremented by other threads.
Line 7 marks the exit from the loop, thus the loop invariant holds
together with the condition $x = y$ for exiting the loop. This
immediately gives that $x = \psiT{s}$, which is a precondition for $\lockTr$.
%
Finally, line 9 directly follows
from line 7 by the definition of $\lockTr$.

\subsection{Framing and Morphisms}\label{s:frameandmorph}
The above spec for $\progLock$ is in the small footprint
style, where the spec's precondition uses $\emptyset$ for
$\sgS{s}$. A natural question is how this spec--which employs morphisms--can
be lifted to large footprints. In other words, how do we employ the frame rule
by using, as a frame, an arbitrary ticket map $k$ for $\sgS{s}$ in the precondition? Framing is a standard operation in separation logic, but works somewhat
differently in the setting with \emph{self} and \emph{other}
variables, and in the presence of morphisms. 
%
%
\newcounter{lockcodealphathree}
\def\lineno{\stepcounter{lockcodealphathree}\textsc{\thelockcodealphathree}}
\[\begin{array}[t]{r@{\quad}l}
\lineno.& \proofspec{
              \sgS{s} = k 
		}\nonumber\\
\lineno.& \proofspecL{
       \exists s_1\ s_2\ldot \!\!\!\begin{array}[t]{l}
            s = s_1 \star \frameC{s_2} \land 
            \sgS{s_1} = \emptyset 
            \land 
            \frameC{\sgS{s_2} = k 
           }\}
           \end{array}
	}\nonumber\\
\lineno.&	\proofspecL{\!\!\!\begin{array}[t]{l}
                ((\lambda s\ldot\sgS{s} = \emptyset 
                ) \bstar 
		\frameC{(\lambda s\ldot\sgS{s} = k 
		)})(s)\}
                \end{array}}\nonumber\\ 
\lineno.& {\progLock}\\
\lineno.&	\proofspecL{\!\!\!\begin{array}[t]{l}
                ((\lambda s\ldot\sgS{s} = \psiT{s} \hmapsto \currentT 
                ) \bstar 
                \frameC{
		(\lambda s\ldot\sgS{s} = k 
                )})(s)\}
                \end{array}}\nonumber\\
\lineno.& \proofspecL{\exists s_1\ s_2\ldot \!\!\!\begin{array}[t]{l}
               s = s_1 \star \frameC{s_2} \land 
		\sgS{s_1} = \psiT{s_1} \hmapsto \currentT 
                \land 
                \frameC{
                \sgS{s_2} = k 
		} \}
                \end{array}}\nonumber\\
\lineno.&  \proofspec{\sgS{s} = (\psiT{s} \hmapsto \currentT) \cupdot k
            }\nonumber
%
%
\end{array}\]

%
On line 1, we start with $\sgS{s}=k$; thus
$s=(k,\co{s})$.
Line 2 expands line 1 into a form suitable for applying the frame
rule. It posits that $s$ can be split into states $s_1$ and $s_2$ such
as $s = s_1 \star s_2$. It's easy to see that this holds: we can
represent $\cs{s} = \emptyset \join k$, thus pick
$s_1 = (\emptyset, k \join \co{s})$ and
$s_2 = (k, \emptyset \join \co{s})$ (see
Figure~\ref{fig:subjectivity}).
%
%
Line 3 represents line 2 using separating conjunction, which isn't a
primitive of our logic, but is defined in the ambient theory in the
customary way, modulo the use of subjective state splitting
(Figure~\ref{fig:subjectivity}):
\[
  P \bstar Q = \lambda s\ldot \exists s_1\ s_2\ldot s = s_1 \star s_2 
  \land \fapp{P}{s_1} \land \fapp{Q}{s_2}
\]
Line 5 applies the frame rule to the intermediate spec for $\progLock$
and the frame $(\lambda s\ldot\sgS{s} = k)$
(given in $\frameC{\mbox{color}}$ above). Line 6 unfolds the
definition of separating conjunction,
and line 7 collapses line 6, relying on the following two critical
points.
%

The first critical point is that $\psiT{s}=\psiT{s_1}$.  Indeed,
\[
\begin{array}{llll}
\psiT{s} &=&\psiS{s} \join \psiO{s} & \mbox{(by definition of $\psiT$)}\\
  &=&(\psiS{s_1} \join \psiS{s_2})\join \psiO{s} & 
     \mbox{(because $\cs{s}= \cs{s_1} \join \cs{s_2}$ by Figure~\ref{fig:subjectivity},
            and $\psi$ distributes over $\join$)}\\
  &=&\psiS{s_1} \join (\psiS{s_2} \join \psiO{s}) & \mbox{(by associativity of $\join$)}\\
  &=&\psiS{s_1} \join \psiO{s_1} & 
     \mbox{(because $\co{s_1} = \cs{s_2}\join \co{s}$ by
                                   Figure~\ref{fig:subjectivity}, 
            and $\psi$ distributes over $\join$)}\\ 
  &=&\psiT{s_1} & \mbox{(by definition of $\psiT$)}
\end{array}
\]
Notice that the proof of the property $\psiT{s} = \psiT{s_1}$
\emph{doesn't rely on the definition of $\psi$}, but only on $\psi$
being a morphism (with a trivial \seprel). Thus, the above is a
general property of morphisms that follows because the join of
\emph{self} and \emph{other} components are \emph{invariant for parent
  and children states}. In this particular
proof of $\progLock$, it allows replacing $\psiT{s_1}$ in line 6 with
$\psiT{s}$ in line 7.

The second critical point is that
$\sgS{s}=\sgS{s_1} \cupdot \sgS{s_2}$. This holds because $\sigma$ is
a morphism, and $s = s_1 \star s_2$ implies that
$\cs{s} = \cs{s_1} \join \cs{s_2}$ by Figure~\ref{fig:subjectivity},
so $\sigma$ can distribute over $\join$. Thus
$\sgS{s} = (\psiT{s} \hmapsto \currentT) \cupdot k$. Again, in
this argument we 
didn't rely on the definition of $\sigma$. 
%
%

\subsection{Morphisms as Functional Abstractions}
\label{s:morphabst}
We next proceed to transform the concrete specs of $\progLock$ and
$\progUnlock$ into specs using a morphism $\alpha : U \to O$ to more
abstractly express lock ownership. We define $\alpha$ as follows.
\begin{align}
 &\fapp{\alpha}{x} \eqdef 
      \begin{cases}
      	\own & \text{if } t \hmapsto \currentT 
		\in \fapp{\sigma}{x} \text{ for some } t\\
        \nown & \text{otherwise}
      \end{cases}      \label{eqn:morphalpha}
\end{align}
As before, in the definition of $\alpha$, one should think of $x$ as
the \emph{self} component of a thread.
Then the definition says that the thread owns the lock iff it holds
a ticket labeled $\currentT$
in the \emph{self} set of tickets ($\fapp{\sigma}{x}$).

\subsubsection*{Structure Preservation and Partiality of $\alpha$}
Just like the morphism properties of $\sigma$ and $\psi$ were
important for the internal specs to behave correctly under framing, so
any spec using $\alpha$ requires $\alpha$ to be a morphism. And
indeed, $\alpha$ satisfies the equation (\ref{eq:mp3}) from
Section~\ref{s:introseprel}.
In particular, $\fapp{\alpha}{(x \join y)}$ is defined and
$\fapp{\alpha}(x\join y) = \fapp{\alpha}{x} \join \fapp{\alpha}{y}$
but only under the condition that $x$ and $y$ don't both contain a
ticket labeled $\currentT$. In the latter case $\fapp{\alpha}{x} =
\fapp{\alpha}{y} = \own$ so their join is undefined.

A formal way to say this is that $\alpha$ is associated with the
following \seprel, where $\numCurrent{a}$ equals the number of
$\currentT$ tickets in the ticket map $a$. 
\begin{eqnarray}\label{eqn:seprelalpha}
  \label{eqn:botalpha}
  x \rel{\alpha} y \eqdef \numCurrent{\fapp{\sigma}{x}} + \numCurrent{\fapp{\sigma}{y}} \leq 1 \land x \orth y
\end{eqnarray}
The definition directly captures that together $x$ and $y$ contain at most one
served ticket. We shall see in 
Section~\ref{s:seprels} that $\rel{\alpha}$ is
indeed a \seprel, 
%
%
and moreover (Example~\ref{ex:serve}) that $\numCurrent{-}$ itself is a
morphism, composed out of map filter and map counter functions, both of which
are morphisms. 
%
%

For now it suffices to observe that if we want to use $\alpha$ in a
Hoare triple, then, at the very least, we must also attach the
property $\cs{s}\rel{\alpha}\co{s}$ to the pre- and
postcondition.\footnote{We'll see in Section~\ref{s:inversion} that
  we'll also require $\alpha$ to be an \emph{invertible} morphism, but
  that property is tied to $\alpha$ and needn't appear in Hoare
  triples.} Otherwise we won't be able to derive the framed Hoare
triples generically, i.e., by using only the property that $\alpha$ is
a morphism, without relying on $\alpha$'s definition. Framing
essentially relies on a morphism distributing over $\join$, as we've
previously seen for $\sigma$ and $\psi$, and the distribution of
$\alpha$ is conditional upon $\cs{s}\rel{\alpha}\co{s}$.


\subsubsection*{Deriving Abstract Specs.}
We thus continue to establish the following abstract, but still
intermediate, types of $\progLock$ and $\progUnlock$ via $\alpha$.
%
\begin{align*}
\HTj{\progLock}
    {&~\spec{\lambda s\ldot \fapp{\cs{\alpha}}{s} = \nown \land
       \cs{s} \rel{\alpha} \co{s}}}
    {\spec{\lambda s\ldot \fapp{\cs{\alpha}}{s} = \own \land 
       \cs{s} \rel{\alpha} \co{s}}}
    {\rTicket}\\
\HTj{\progUnlock}
    {&~\spec{\lambda s\ldot \fapp{\cs{\alpha}}{s} = \own \land
       \cs{s} \rel{\alpha} \co{s}}}
    {\spec{\lambda s\ldot \fapp{\cs{\alpha}}{s} = \nown \land
       \cs{s} \rel{\alpha} \co{s}}}
    {\rTicket}
\end{align*}
The derivations follow straightforwardly from the large footprint
specs for $\progLock$ and $\progUnlock$.
%
%
Below we just present the proof outline for $\progLock$; the
one for $\progUnlock$ is in~\acite{s:apxunlock}{\appUnlock}{POPL21arxiv}.
%
%
\newcounter{lockcodealphafour}
\def\lineno{\stepcounter{lockcodealphafour}\textsc{\thelockcodealphafour}}
\[\begin{array}[t]{r@{\quad}l}
 \lineno.& \proofspec{
 		\cs{s} \rel{\alpha} \co{s}
 		}\\ 
\lineno.& \proofspec{\sgS{s} = k 
                \land \cs{s} \rel{\alpha} \co{s} 
 		}\\
\lineno.& {\progLock}\\
\lineno.&  \proofspec{
              \sgS{s} = (\psiT{s} \hmapsto \currentT) \cupdot k 
              \land \cs{s} \rel{\alpha} \co{s}
                }\\
\lineno.& \proofspec{
             \fapp{\cs{\alpha}}{s} = \own 
             \land \cs{s} \rel{\alpha} \co{s}}
\end{array}\]
\noindent Line 1 weakens the desired precondition by eliding that
$\fapp{\cs{\alpha}}{s} = \nown$, as this property isn't actually
required by the proof. Indeed, if $\progLock$ is invoked by a thread
that already holds the lock, i.e., where
$\fapp{\cs{\alpha}}{s} = \own$, the (partial correctness) Hoare triple
for $\progLock$ holds trivially because $\progLock$ diverges.
Line 2 snapshots $\sgS{s}$ into $k$, and gives the large footprint
precondition for $\progLock$ conjoined with $\cs{s} \rel{\alpha}
\co{s}$. The latter property is an \emph{invariant} of the resource
$\rTicket$. In other words, it isn't only stable under interference of
other threads, but also it's preserved by the actions of our own
thread, as we show
in~\acite{s:stability}{\appStability}{POPL21arxiv}. 
In particular, $\cs{s}\rel{\alpha}\co{s}$ can strengthen the
precondition and weaken the postcondition of any well-typed program
that has $\rTicket$ as its resource type. In the ambient type
theory~\cite{NanevskiBDF+oopsla19} this is formally captured by a
variant of the standard Hoare logic rule of consequence that admits
the conjunction of resource invariants to the assertions in a Hoare
triple.
Therefore, line 4 follows from the large footprint spec for
$\progLock$ and the above invariance property.
%
Finally, line 5 follows immediately by the definition of $\alpha$.


\subsection{Sub-PCM}\label{s:examplesubpcm}

\subsubsection*{Construction Overview}
To obtain the ultimately desired compact specs~(\ref{spec:lock}) and
(\ref{spec:unlock}) our algebraic approach provides the sub-PCM
construction.
The construction mods out the PCM $U$ by $\rel{\alpha}$, to obtain a
sub-PCM $\UTicket$. Two ticket maps $x, y \in \UTicket$ are considered
disjoint only if $x \rel{\alpha} y$, i.e., if $x$ and $y$ have at most
one $\currentT$ ticket in total.
\begin{equation*}
 \UTicket \eqdef U / {\rel{\alpha}}
\end{equation*}
In $\UTicket$, the $\join$ operation restricts that of $U$ so that
$x \join_{\UTicket} y$ equals $x \join_U y$ if $x \rel{\alpha} y$, and
is undefined otherwise. Consequently, $x \orth_{\UTicket} y$ iff
$x \rel{\alpha} y$.
Therefore, the relation $\rel{\alpha}$ is the default notion of
\separateness in $\UTicket$. It's thus assumed of every state, and
doesn't need to be explicitly listed in any assertion.

It's essential for the sub-PCM construction that the condition by
which we mod out be a \seprel, otherwise $\join_{\UTicket}$ won't be
commutative, associative, and admit a unit. But once we know that the
condition is a \seprel, there is a generic proof
(Section~\ref{s:subpcm}) that the construction results in a PCM.
We also note that morphisms play a role in relating a PCM $U$ and a
sub-PCM $U/R$, for a \seprel $R$. As customary in algebraic
definitions of substructures, the sub-PCM construction comes with two
morphisms: \emph{injection} $\iota : U/R \to U$ and \emph{retraction}
$\rho : U \to U/R$ that allow us to transfer values and reasoning
between $U$ and $U/R$.  We explain the properties of $\iota$ and
$\rho$ in Section~\ref{s:subpcm}. In our case, the sub-PCM $\UTicket$
comes with the attendant injection $\iota_\rTicket: \UTicket \to U$,
and retraction $\rho_\rTicket: U \to \UTicket$. The injection is a
morphism with the \seprel $\orth_{\UTicket}$, and the
retraction is a morphism with \seprel $\rel{\alpha}$.

\subsubsection*{Use in Specifications}
Once we obtain the restricted PCM $\UTicket$, we can proceed to
construct a \emph{sub-resource} $\rTicket'$ which restricts the PCM
$U$ of $\rTicket$ to $\UTicket$. The formal discussion of resources is
given in~\cite{NanevskiBDF+oopsla19}. Here, we just mention that
$\rTicket'$ \emph{simulates} $\rTicket$, intuitively, because each
transition of $\rTicket$ preserves $\rel{\alpha}$. The latter is easy
to check: if in a state $s$ the map $\sgT{s}$ has at most one
$\servedT$ ticket, then so does a state $s'$ obtained by executing one
of the transitions of $\rTicket$ in $s$.

The ambient type theory provides an inference rule by which one can
compositionally change the resource of a program from $\rTicket$ to
$\rTicket',$\footnote{Or to any resource simulating $\rTicket$.} while
precomposing the morphisms in the specs with the injection
$\iota_\rTicket$.
Thus, we can transform the previous specs using $\alpha$ into the ones
given below where $\alpha' = \alpha \circ \iota_\rTicket$. The
condition $\cs{s} \rel{\alpha} \co{s}$ transforms into
$\cs{s} \orth_{\UTicket} \co{s}$ and can thus be
elided.
%
%
This yields the specs we set out to obtain, modulo the renaming of
$\alpha$ and $\rTicket$ into $\alpha'$ and $\rTicket'$.
\begin{align*}
\HTj{\progLock}
    {\spec{\lambda s\ldot \papp{\cs{\alpha'}}{s} = \nown}
	}
    {\spec{\lambda s\ldot \papp{\cs{\alpha'}}{s} = \own}
	}
    {\rTicket'}\\
\HTj{\progUnlock}
    {\spec{\lambda s\ldot \papp{\cs{\alpha'}}{s} = \own}
	}
    {\spec{\lambda s\ldot \papp{\cs{\alpha'}}{s} = \nown}
	}
    {\rTicket'}
\end{align*}

We emphasize that the simple Hoare specs are not the only benefit of
the sub-PCM construction. 
By constructing $\UTicket$, we not only restricted the states of
$\rTicket$, but we did so in a way that promoted $\rel{\alpha}$ into
the new default notion of \separateness. 
%
Thus, we can now reason about $\rel{\alpha}$ using the support that
the ambient type theory provides for \separateness in the form of
lemma libraries and decision procedures, and which wouldn't have
applied if $\rel{\alpha}$ is simply listed as a conjunct in the
assertions, and tracked as just another hypothesis in the proof
state. We shall see in Section~\ref{s:morph} that the
$\mathsf{ordered}$ property can also be viewed as a \seprel and thus
moved from the state space $\Sigma_{\rTicket}$ into the PCM by a
sub-PCM construction. On the other hand, $\mathsf{no\_gaps}$ doesn't
admit such a move. We demonstrate in Section~\ref{s:morph} that
$\mathsf{no\_gaps}$ isn't a \seprel; it doesn't generalize a disjointness
relation between states of two threads, but rather represents a global
property of the aggregated state of all threads.

It's also worth mentioning that we could have obtained the above specs
in several alternative ways. 
%
For example, we could have started our example immediately by using
$\UTicket$ instead of $U$. Correspondingly, instead of $\sigma$ and
$\psi$, we would have used $\sigma' = \sigma \circ \iota_\rTicket$ and
$\psi' = \psi \circ \iota_\rTicket$ in our specs and proof outlines. The whole
development that we carried out in this section then retraces
easily. This shows that the approach is flexible enough to achieve the
same specs and proofs by different order and arrangement.

We could also have chosen a different internal representation
altogether. For example, we could have stored the tickets not into a
map, but into three disjoint sets: one set for $\waitingT$, one for
$\currentT$, and one for $\servedT$ tickets, with the restriction that
the set for $\currentT$ tickets has at most one element. The algorithm
would then shuffle tickets between sets to track the progress of the
ticket through the bakery.
A PCM implementing this alternative representation would be
\emph{isomorphic} to $\UTicket$. But, to be able to formally speak of
PCM isomorphism, one first has to have a notion of PCM
\emph{morphism}, as it applies to separation logic. Developing such
a notion, along with the associated notion of separating relation, is the contribution of this paper.



\section{PCM Abstractions Formally}
\label{s:morph}


\subsection{Making Partiality Explicit}


In previous sections, our discussion of partiality has been implicit,
as we merely postulated that $\join$ and various PCM morphisms are
allowed to be undefined on some inputs.
In practical formalization, however, it's useful to make partiality
explicit by enriching the carriers with a new element that a function
returns whenever it's supposed to be undefined. This is a common
practice in theories of partial functions, e.g., domain
theory~\cite{abramsky+jung}, and in symbolic execution in separation
logic~\cite{BerdineCO05}.
We make a similar enrichment here as well.

\begin{definition}
\label{def:TPCM}
A \dt{topped partial commutative monoid} is a 5-tuple $(\pcmA,
\join, \unit, \undefOp, D)$ such that $\join$ is a \emph{total}
commutative and associative operation on $A$, with $\unit$ as the
unit. The element $\undefOp \in A$ is the canonical \emph{undefined}
element, and $D \subseteq \pcmA$ is the set of \emph{defined}
elements. The two satisfy the following properties.
\begin{enumerate}
\item $\undefOp \notin D$
\item $\unit \in D$ 
\item if $x \join y \in D$ then $x, y \in D$
\item $x \join \undefOp = \undefOp \join x = \undefOp$ 
\end{enumerate}
We say that a topped PCM is \dt{normal}, if $A = D \cup
\{\undefOp\}$, i.e., $\undefOp$ is the only undefined element.
\end{definition}

The definition introduces the element
$\top$ which functions are supposed to return to signal undefinedness.
For technical reasons that we explain below, we allow topped PCMs that
have multiple undefined elements, but
$\undefOp$ is a distinct one among them, and in particular, (1)
$\undefOp \notin D$.
The above properties further say that (2) $\unit$ is defined, and that
(3) a join with an undefined element must be undefined. More strongly,
(4) a join with $\undefOp$ must equal $\undefOp$, i.e., $\undefOp$ is
the \emph{absorbing} element of $A$ (also known as the \emph{zero}).
We continue to say that $x$ and $y$ are \separate, written
$x \orth y$, if $x \join y$ is defined, i.e., $x \join y \in D$. It's
easy to see that $x \in A$ is defined iff it's \separate from
$\unit$. Thus we write $x \rel{} \unit$ instead of $x \in D$ to say
that $x$ is defined.
As all the PCMs we consider in this paper are topped, we
dispense with the adjective.

%

\begin{example}
\label{ex:finmap}
	The PCM $\mapTctl$, which we used to represent the internal state of ticket
	locks is an instance of a more general PCM of finite maps. 
	Finite maps $A \finmap B$ form a topped normal PCM:
	take $(A \finmap B) \cup \set{\undefOp}$ as the carrier,
        $D = A \finmap B$ for the defined elements, the empty map
        $\emptyset$ as the unit, and the join defined as
        \begin{align*}
   	f \join g \eqdef 
		\begin{cases}
                f \cup g & \text{if $f, g \neq \undefOp$ and $f$, $g$
                  are maps with disjoint domains}\\
		\undefOp & \text{otherwise}
		\end{cases}
	\end{align*}
\end{example}

%

\begin{proposition}
\label{prop:prodPCM}
 Given (topped) PCMs $\pcmA$, $\pcmB$, the \dt{Cartesian product} $A
 \times B$ is a (topped) PCM with $\join$, $\unit$ and $\undefOp$
 defined pointwise: $(a_1, b_1) \join{} (a_2, b_2) \eqdef (a_1
 \join_{\pcmA} a_2, b_1 \join_{\pcmB} b_2)$, $\unit \eqdef
 (\unit_\pcmA, \unit_\pcmB)$ and $\undefOp \eqdef (\undefOp_\pcmA,
 \undefOp_\pcmB)$, and the set of defined elements $D = D_\pcmA \times
 D_\pcmB$.
\end{proposition}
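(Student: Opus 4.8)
The plan is to verify each clause of Definition~\ref{def:TPCM} componentwise, pushing every obligation down to the corresponding property of $\pcmA$ and $\pcmB$. First I would observe that totality, commutativity, and associativity of the pointwise $\join$, together with $\unit = (\unit_\pcmA, \unit_\pcmB)$ being a two-sided unit, are immediate from the matching laws in each factor: for instance $(a_1, b_1) \join (a_2, b_2) = (a_1 \join_\pcmA a_2, b_1 \join_\pcmB b_2) = (a_2 \join_\pcmA a_1, b_2 \join_\pcmB b_1) = (a_2, b_2) \join (a_1, b_1)$, and analogously for associativity and for $(a, b) \join \unit = (a, b)$. Since $\join_\pcmA$ and $\join_\pcmB$ are total, the product operation is total as well.

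I would then discharge the four axioms with $D = D_\pcmA \times D_\pcmB$ and $\undefOp = (\undefOp_\pcmA, \undefOp_\pcmB)$. For (1), $\undefOp_\pcmA \notin D_\pcmA$ already forces $(\undefOp_\pcmA, \undefOp_\pcmB) \notin D_\pcmA \times D_\pcmB$. For (2), $\unit_\pcmA \in D_\pcmA$ and $\unit_\pcmB \in D_\pcmB$ give $(\unit_\pcmA, \unit_\pcmB) \in D$. For (3), if $(a_1, b_1) \join (a_2, b_2) = (a_1 \join_\pcmA a_2, b_1 \join_\pcmB b_2) \in D_\pcmA \times D_\pcmB$, then the two coordinates lie in $D_\pcmA$ and $D_\pcmB$ respectively, so axiom (3) in each factor yields $a_1, a_2 \in D_\pcmA$ and $b_1, b_2 \in D_\pcmB$, whence both operands lie in $D$. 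For (4), pointwise absorption $(a, b) \join (\undefOp_\pcmA, \undefOp_\pcmB) = (a \join_\pcmA \undefOp_\pcmA, b \join_\pcmB \undefOp_\pcmB) = (\undefOp_\pcmA, \undefOp_\pcmB)$ follows from axiom (4) in each factor, with the symmetric identity handled identically.

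The one conceptual point worth flagging---rather than a genuine obstacle---is that this construction is exactly what forces the general, \emph{non-normal}, notion of topped PCM from Definition~\ref{def:TPCM}. Even when $\pcmA$ and $\pcmB$ are normal, the product need not be: an element such as $(\undefOp_\pcmA, b)$ with $b \in D_\pcmB$ is undefined, since it fails to lie in $D_\pcmA \times D_\pcmB$, yet it differs from the canonical $\undefOp = (\undefOp_\pcmA, \undefOp_\pcmB)$. This is why the proposition claims only a topped PCM and not a normal one, and why axiom (4) must be checked for the specific canonical $\undefOp$ rather than inferred from any uniqueness of undefined elements. Crucially, the verification of (4) above uses no assumption that $a$ or $b$ is defined, so absorption by the canonical $\undefOp$ holds across all of $A \times B$, which is all that is required.
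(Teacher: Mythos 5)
Your proof is correct, and since the paper states Proposition~\ref{prop:prodPCM} without an explicit proof (the routine verification is deferred to the Coq mechanization), your componentwise check of totality, commutativity, associativity, the unit law, and axioms (1)--(4) of Definition~\ref{def:TPCM} is exactly the intended argument. Your closing observation about non-normality---that elements such as $(\undefOp_\pcmA, b)$ with $b \in D_\pcmB$ are undefined yet distinct from the canonical $\undefOp = (\undefOp_\pcmA, \undefOp_\pcmB)$---matches the paper's own discussion immediately following the proposition.
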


The above proposition shows that $V = \pcmA \times \pcmB$ is a topped
PCM whenever $\pcmA$ and $\pcmB$ are, but $V$ isn't necessarily
normal.
%
%
%
Indeed, $V$ contains elements of the form $(a, \undefOp_\pcmB)$ and
$(\undefOp_\pcmA, b)$, where $a \in D_\pcmA$ and $b \in
D_\pcmB$. These elements can't be defined (hence, neither is in
$D_{V}$), but they're all distinct from
$\undefOp_{V} = (\undefOp_\pcmA, \undefOp_\pcmB)$.
The common way to avoid the proliferation of undefined elements in
theories of partiality is to consider \emph{smash products} instead of
Cartesian products. In this paper, we instead allow PCMs such as
Cartesian products that aren't normal. We also allow morphisms on
them, e.g., the projections $\pi_1 : V \rightarrow \pcmA$ and
$\pi_2: V \rightarrow \pcmB$.
When required, we rely on the sub-PCM construction (to be defined
shortly) to normalize a PCM by removing the undefined elements other
than $\undefOp$.

\subsection{\SepRels} 
\label{s:seprels}
We next define \emph{\seprels}, a strengthening of disjointness
$\rel{}$ of the underlying PCM. 
Having in
mind that our specs apply \seprels to \emph{self} and \emph{other}
components of a state, one can thus view \seprels as determining when
two PCM values can be used to model the state of two concurrent
threads.


\begin{definition}\label{def:seprel}
  Relation $\genrel{}{}$ on the carrier of the PCM $\pcmA$, is a
  \dt{\seprel} if it satisfies the following laws which make $R$ a structure-preserving relation on $A$.
\begin{enumerate}
 \item \emph{(definedness)} if $\genrel{x}{y}$ then
   $\genrel{x}{\unit}$
 \item \label{seprel:strngth}\emph{(strengthening)} if $\genrel{x}{y}$ then $x \orth y$
 \item \emph{(unit)} $\genrel{\unit}{\unit}$
 \item \emph{(symmetry)} $\genrel{x}{y}$ iff $\genrel{y}{x}$
 \item \emph{(associativity)} if $\genrel{x}{y}$ and $\genrel{(x \join
   y)}{z}$ then $\genrel{x}{(y \join z)}$ and $\genrel{y}{z}$
\end{enumerate}
\end{definition}

The law (1) restricts the \seprel $R$ to defined elements only, as
only a defined element should represent the state of a thread. Law (2)
says that $R$ strengthens the \separating relation of the underlying
PCM. Law (3) says that empty state is a valid state for any two
threads, and law (4) says that the order in which threads appear in
the relation is irrelevant.

The associativity law (5) describes when we can transfer ownership of
state between two threads. Let's assume that we have two concurrent
threads $\theta_1$ and $\theta_2$. Correspondingly, their states are
related by $R$. Let $z$ be the state of $\theta_2$, and let $\theta_1$
be a parent of two other concurrent threads with states $x$ and $y$,
respectively. Thus $x~R~y$ and $(x \join y)~R~z$. The law says that we
can transfer $y$ from $\theta_1$ to $\theta_2$, which essentially
corresponds to re-associating the child of $\theta_1$ owning $y$ to
$\theta_2$. Intuitively, this is possible because the ordering and
grouping of the threads in a thread pool is irrelevant.

%

Notice that from $x~R~y$ and $(x \join y)~R~z$, by symmetry of $R$ and
commutativity of $\join$, we get $y~R~x$ and $(y \join x)~R~z$, which by
associativity implies $y~R~(x \join z)$ and $x~R~z$ as well. Thus,
it's convenient to introduce the following notation for the antecedent
of the associativity law:
\[
x ~R~ y ~R~ z \eqdef x ~R~ y \land (x \join y)~R~z
\]
%
to say that $x$, $y$ and $z$ represent states of three concurrent
threads, which are pairwise \separate, and each is separate from the
join of the other two.  

%


\begin{proposition}
\label{p:sepU0}
Let $U$ be a PCM, and $R$ a \seprel on $U$. Then
$\fapp{x}{\fapp{R}{y}}$ implies $\fapp{(x \join y)}{\fapp{R}{\unit}}$.
\end{proposition}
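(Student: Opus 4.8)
The plan is to derive the conclusion $\genrel{(x \join y)}{\unit}$ from the hypothesis $\genrel{x}{y}$ using only the \seprel axioms of Definition~\ref{def:seprel}, with the associativity law (5) as the workhorse. The guiding observation is that instantiating associativity at the unit makes one of its antecedents collapse, via the PCM unit law, into exactly the hypothesis we are given.

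First I would extract two easy consequences of the hypothesis. By the definedness law (1), $\genrel{x}{y}$ gives $\genrel{x}{\unit}$, and applying symmetry (4) turns this into $\genrel{\unit}{x}$. These are the only preparatory facts needed. The crucial step is then to apply the associativity law (5) with its left element set to $\unit$, its middle element to $x$, and its right element to $y$. The two antecedents of the law then read $\genrel{\unit}{x}$ — which we just established — and $\genrel{(\unit \join x)}{y}$. Since $\unit$ is the PCM unit, $\unit \join x = x$, so the second antecedent is literally the hypothesis $\genrel{x}{y}$. Associativity therefore delivers its first conclusion $\genrel{\unit}{(x \join y)}$, and a final use of symmetry (4) flips this into the desired $\genrel{(x \join y)}{\unit}$.

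I expect the only real obstacle to be spotting the correct instantiation, since nothing in the statement overtly suggests threading the unit through associativity. A naive attempt to apply law (5) directly to $x$, $y$, and some third element $z$ in order to read off $\genrel{(x \join y)}{\unit}$ will fail, because the conclusions of associativity always have the shape $\genrel{\cdot}{(\cdot \join \cdot)}$ or $\genrel{\cdot}{\cdot}$ and never $\genrel{(\cdot \join \cdot)}{\unit}$ on the nose. The resolution is to read the target modulo symmetry as $\genrel{\unit}{(x \join y)}$, which does match the first conclusion of associativity once $\unit$ occupies the left slot and the unit law $\unit \join x = x$ identifies the remaining antecedent with the hypothesis. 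After that insight the remaining steps are immediate rewrites; in particular the strengthening law (2) and the standalone unit law (3) are not needed for this argument.
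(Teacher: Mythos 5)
Your proof is correct and is essentially identical to the paper's: both derive $\genrel{\unit}{x}$ from definedness and symmetry, rewrite the hypothesis as $\genrel{(\unit \join x)}{y}$ via the unit law, apply associativity to obtain $\genrel{\unit}{(x \join y)}$, and finish with symmetry. The instantiation of the associativity law at the unit is precisely the paper's argument, so there is nothing to add.
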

\begin{proof}
  From $\fapp{x}{\fapp{R}{y}}$ we derive $\fapp{\unit}{\fapp{R}{x}}$
  by the definedness and symmetry laws for $R$, and
  $\fapp{(\unit \join x)}{\fapp{R}{y}}$ because $\unit$ is the
  unit. Then by associativity $\fapp{\unit}{\fapp{R}{(x \join y)}}$,
  and by symmetry $\fapp{(x \join y)}{\fapp{R}{\unit}}$.
%
\end{proof}
%


%
%
The proof of Proposition~\ref{p:sepU0} uses associativity, and we can
explain the proposition using threads similarly to how we explained
associativity. The proposition says:
if $x, y$ are valid
states of two concurrent threads, 
then joining them produces a parent whose state $(x \join y)$ is valid
as well.



\subsubsection*{Basic Examples of \SepRels}
%
The smallest \seprel of a PCM $A$ is induced by $A$'s unit. We denote
it $\rel{\unit_A}$, and define it by
\[x \rel{\unit_A} y \eqdef x = \unit_A \land y = \unit_A\]
The relation clearly satisfies the required laws. 
Similarly, the PCM $A$ itself induces the \dt{trivial} \seprel
$\rel{}$ (or $\rel{A}$ when we want to make $A$ explicit), defined as
\[
x \rel{A} y \eqdef (x \join y)\ \mbox{is defined}
\]
This is the largest \separating relation on $A$, since any larger
relation violates the strengthening property (2).
The intersection of two \separating relations is also a separating
relation.
The \emph{join} relation $J$ on $\pcmA^2$ defined as $(a_1,
a_2)\,J\,(b_1, b_2)$ iff $(a_1 \join a_2) \rel{} (b_1 \join b_2)$ iff
$(a_1 \join a_2 \join b_1 \join b_2) \rel{} {\unit_A}$ is a separating
relation.

\subsubsection*{\SepRel $\rel{\alpha}$}
%
Our formalization of ticket locks in Section~\ref{s:example} uses the
invariant $x \rel{\alpha} y$ on the PCM $U$, defined
in~(\ref{eqn:botalpha}) to state that at most one of the maps
$\fapp{\sigma}{x}$ and $\fapp{\sigma}{y}$ holds the $\currentT$
ticket. The property ensures that $\alpha$ is a morphism.

It's easy to see that $\rel{\alpha}$ is a \seprel. The conditions
(1-4) of Definition~\ref{def:seprel} are immediate.
To show associativity, assume that $(x \join y) \rel{\alpha} z$ (we
don't need $x \rel{\alpha} y$). Because $\numCurrentX$ distributes
over $\join$ (to be shown in a more general form in
Example~\ref{ex:serve}), the assumption gives us
\[
(\numCurrent{\fapp{\sigma}{x}} + \numCurrent{\fapp{\sigma}{y}}) +
\numCurrent{\fapp{\sigma}{z}} \leq 1
\]
But then
$\numCurrent{\fapp{\sigma}{x}} + (\numCurrent{\fapp{\sigma}{y}} +
\numCurrent{\fapp{\sigma}{z}}) \leq 1$ and
$\numCurrent{\fapp{\sigma}{y}} + \numCurrent{\fapp{\sigma}{z}} \leq
1$, i.e., $x \rel{\alpha} (y \join z)$ and $y \rel{\alpha} z$.
We see that associativity in this example says that three threads
may group in any way while preserving $\rel{\alpha}$ because at most
one of them can hold the $\currentT$ ticket.  

%
%

\subsubsection*{\SepRel $\ord$}
In Section~\ref{s:transitions}, we defined the state space $\STicket$
of the resource \rTicket using the predicate $\ord$ to capture that
$\servedT$ tickets are smaller than $\currentT$ ticket, which in turn
is smaller than $\waitingT$ tickets. While $\ord$ is defined as a
predicate over a single PCM element $x \in U$, it easily lifts to a
binary relation as follows:
\[
x\ \omega\ y \eqdef \fapp{\ord}{(x \join y)} \land x \orth y 
\]
It's easy to see that $\omega$ is a \seprel; again, the key property
is associativity: $(x\ \omega\ y)$ and $(x \join y)\ \omega\ z$ imply
$(y\ \omega\ z)$ and $x\ \omega\ (y \join z)$. By definition of
$\omega$, we must show: $\fapp{\ord}{(x \join y)}$ and $\fapp{\ord}{(x
  \join y \join z)}$ together imply $\fapp{\ord}{(y \join z)}$ and
$\fapp{\ord}{(x \join y \join z)}$.  
This holds because if a map is $\ord$, then trivially, every subset of
that map is $\ord$ as well.
The conjunct $x \orth y$ ensures the strengthening property
(\ref{seprel:strngth}) of Definition~\ref{def:seprel}.  Thus, our
construction of $\rTicket'$ could have moved $\ord$ from the state
space $\STicket$ into the definition of the sub-PCM $\UTicket$.


\subsubsection*{Non-Example of \SepRel: $\nogaps$}
The state space $\STicket$ also used the predicate $\nogaps$ to
capture that the tickets are drawn consecutively starting from
ticket~$1$. Similarly to $\ord$, $\nogaps$ can be made into a binary
relation:
\[
x\ \upsilon \ y \eqdef \fapp{\nogaps}{(x \join y)} \land x \orth y 
\]
In contrast to $\omega$, however, the relation $\upsilon$ isn't
associative, and hence isn't a \seprel. For example, let $x$, $y$ and
$z$ be ticket maps with domains $\dom{x} = \{2\}$, $\dom{y} = \{1\}$,
and $\dom{z} = \{3\}$, respectively.
%
%
Then $x\ \upsilon\ y$ and $(x \join y)\ \upsilon\ z$ hold because
$\dom{x \join y} = \{1, 2\}$ and
$\dom{x \join y \join z} = \{1,2,3\}$ contain consecutive tickets
starting from ticket $1$. But clearly $y\,{\cancel\upsilon}\,z$
because $\dom{y \join z} = \{1, 3\}$ has a gap, missing the ticket $2$ (see
Figure~\ref{fig:nogaps}). 

In this sense, $\nogaps$ is a \emph{global} property. It holds of the
collective state of all threads taken together, but, unlike \seprels,
doesn't relate any two individual threads. In particular, $\nogaps$
can't be moved from $\STicket$ into $\UTicket$. In other words, PCMs
and \seprels encode local properties of thread states, while resource
state spaces encode global ones.
%
%

\begin{figure}
\[
\begin{array}{c@{\quad\quad}c@{\quad\quad}c}
\begin{array}{c}
\begin{tikzpicture}[scale=0.94, transform shape]
  \filldraw[fill=gray!70, draw=black] (0, 0) rectangle (-0.6, -0.7);
  \filldraw[fill=gray!50, draw=black] (0, -0.7) rectangle (-0.6, -1.4);
  \filldraw[fill=gray!20, draw=black] (0, -1.4) rectangle (-0.6, -2.1);
  \node[scale=0.9] at (-0.3, -0.35) {$z$};
  \node[scale=0.9] at (-0.3, -1.05) {$x$};
  \node[scale=0.9] at (-0.3, -1.75) {$y$};
\end{tikzpicture}\\
{\fontsize{9}{9}\textrm{$x \join y \join z$}}
\end{array}
&
\begin{array}{c}
\begin{tikzpicture}[scale=0.94, transform shape]
  \filldraw[fill=white,draw=none] (0, 0) rectangle (-0.6, -0.7);
  \filldraw[fill=gray!50, draw=black] (0, -0.7) rectangle (-0.6, -1.4);
  \filldraw[fill=gray!20, draw=black] (0, -1.4) rectangle (-0.6, -2.1);
  \node[scale=0.9] at (-0.3, -1.05) {$x$};
  \node[scale=0.9] at (-0.3, -1.75) {$y$};
\end{tikzpicture}\\
{\fontsize{9}{9}\textrm{$x \join y$}}
\end{array}
&
\begin{array}{c}
\begin{tikzpicture}[scale=0.94, transform shape]
  \filldraw[fill=gray!70, draw=black] (0, 0) rectangle (-0.6, -0.7);
  \filldraw[fill=gray!20, draw=black] (0, -1.4) rectangle (-0.6, -2.1);
  \node[scale=0.9] at (-0.3, -0.35) {$z$};
  \node[scale=0.9] at (-0.3, -1.75) {$y$};
\end{tikzpicture}\\
{\fontsize{9}{9}\textrm{$y \join z$}}
\end{array}
\end{array}\vspace{-1em}
\]
\caption{Binary relation over predicate $\nogaps$. Adjacent elements satisfy the
relation.}
  \label{fig:nogaps}\vspace{-3mm}
\end{figure}
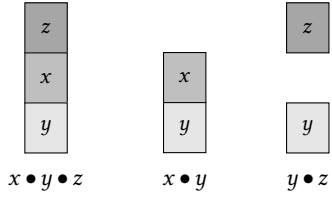

\subsection{Morphisms}
\label{sec:morphisms}
\begin{definition}\label{def:pcmmorph}
  A \dt{morphism} $\phi : \pcmA \to \pcmB$
  with a \separating relation $\rel{\phi}$ is a function $\phi$ from
  $\pcmA$ to $\pcmB$ such that.
\begin{enumerate}
\item \emph{(preservation of unit)} $\fapp{\phi}{\unit_\pcmA} = \unit_\pcmB$
\item \emph{(preservation of undefinedness)} $\fapp{\phi}{\undefOp_\pcmA} = \undefOp_\pcmB$
\item \emph{(distributivity)} if $x \rel{\phi} y$ then 
			${\fapp{\phi}{x}}\orth{\fapp{\phi}{y}}$, and
			$\papp{\phi}{x \join y} = \fapp{\phi}{x} \join \fapp{\phi}{y}$
\end{enumerate}
We say that $\phi$ is a \dt{total} PCM morphism if $\rel{\phi}$ equals
the trivial \seprel $\rel{\pcmA}$.
\end{definition}
%


Some basic examples of morphisms include the identity function
$\id_\pcmA : \pcmA \to \pcmA$, which is a total morphism on $\pcmA$.
%
%
So is the always-unit function $- \mapsto \unit_B : \pcmA \to \pcmB$,
as are the projections out of the product PCM. 
%
%
We also have the morphism $\mathsf{join}_A : \pcmA^2 \to \pcmA$
defined as $\fapp{\mathsf{join}_A}{(a, b)} = a \join b$, which is a
morphism under the \separating relation $J$ (Section~\ref{s:seprels}).


Morphisms and \seprels support a number of common algebraic
constructions.

\begin{definition} 
  \label{def:constrs}
  Let $\alpha$, $\beta$ be PCM morphisms. The
  \dt{composition} $\comp{\alpha}{\beta}$, \dt{tensor product}
  $\tepr{\alpha}{\beta}$, and 
  \dt{arrow product} $\alpha \times\beta$ are
  defined as below. All three are morphisms, under the given
  \separating relations.
\begin{align*}
\fapp{(\comp{\alpha}{\beta})}{x} & \eqdef \papp{\alpha}{\fapp{\beta}{x}}
 & \text{with} &&
x \rel{\comp{\alpha}{\beta}} y & \eqdef x
  \rel{\beta} y \land \fapp{\beta}{x} \rel{\alpha} \fapp{\beta}{y}
\\
\fapp{(\tepr{\alpha}{\beta})}{x} & \eqdef (\fapp{\alpha}{x}, \fapp{\beta}{x})
& \text{with} &&
x \rel{\tepr{\alpha}{\beta}} y & \eqdef 
x \rel{\alpha} y \land x \rel{\beta} y \\
\fapp{(\alpha \times \beta)}{(x_1, x_2)} & \eqdef (\fapp{\alpha}{x_1},
                                     \fapp{\beta}{x_2}) 
& \text{with} && (x_1, x_2)\,\orth_{\alpha\times\beta}\,(y_1, y_2) & \eqdef
x_1\,\orth_\alpha\,y_1 \wedge x_2\,\orth_\beta\,y_2
	\end{align*}
\end{definition}



We can also define kernels, equalizers and restrictions of PCM
morphisms, as customary in various algebraic theories. We don't apply
these constructions in the ticket lock example but comment below why
they are useful.
Importantly, our theory is closed under these constructions, as
equalizers and kernels of morphisms are \seprels, and a restriction of
a morphism by a \seprel is a morphism. This shows that \seprels and
morphisms are natural notions to consider together. Moreover:

\begin{theorem}
  \label{thm:cat}
  Morphism composition is associative, with the identity morphism as
  unit. Thus, the structure with (topped) PCMs as objects and PCM
  morphisms as arrows, forms a category.
\end{theorem}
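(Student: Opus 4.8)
The plan is to verify the three categorical axioms: that composition of morphisms is closed (yields a morphism), associative, and unital. Since a morphism in the sense of Definition~\ref{def:pcmmorph} is not merely a function $\phi$ but a function together with an associated \seprel $\rel{\phi}$, I first fix what equality of morphisms means: two morphisms are equal when their underlying functions agree and their \seprels agree extensionally. Closure is already supplied by earlier results: Definition~\ref{def:constrs} asserts that $\comp{\alpha}{\beta}$ is again a morphism under the stated \seprel, and the identity $\id_\pcmA$ is a total morphism (its \seprel being the trivial $\rel{\pcmA}$). So the remaining work is the two equational laws, each checked at the level of the underlying function and at the level of the \seprel.

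For associativity, the function component is immediate: $\comp{(\comp{\alpha}{\beta})}{\gamma}$ and $\comp{\alpha}{(\comp{\beta}{\gamma})}$ both send $x$ to $\alpha(\beta(\gamma(x)))$, so they coincide by associativity of ordinary function composition. For the \seprel component I would unfold the composition rule for $\rel{\comp{\alpha}{\beta}}$ from Definition~\ref{def:constrs} in both bracketings. Spelled out, $x \rel{\comp{(\comp{\alpha}{\beta})}{\gamma}} y$ becomes $x \rel{\gamma} y \land (\gamma(x) \rel{\beta} \gamma(y) \land \beta(\gamma(x)) \rel{\alpha} \beta(\gamma(y)))$, whereas $x \rel{\comp{\alpha}{(\comp{\beta}{\gamma})}} y$ becomes $(x \rel{\gamma} y \land \gamma(x) \rel{\beta} \gamma(y)) \land \beta(\gamma(x)) \rel{\alpha} \beta(\gamma(y))$. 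These are equal by associativity of conjunction; the last conjuncts match on the nose because the function underlying $\comp{\beta}{\gamma}$ sends $x$ to $\beta(\gamma(x))$.

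For the unit laws the function components are again trivial, and the only real content is that composing with an identity introduces a seemingly extra conjunct in the \seprel that must be shown redundant via the morphism and \seprel laws. For $\comp{\id_\pcmB}{\alpha} = \alpha$ with $\alpha : \pcmA \to \pcmB$, unfolding gives $x \rel{\comp{\id_\pcmB}{\alpha}} y \iff x \rel{\alpha} y \land \alpha(x) \orth \alpha(y)$, where the second conjunct follows from the first by the distributivity law (3) of the morphism $\alpha$. Dually, for $\comp{\alpha}{\id_\pcmA} = \alpha$, unfolding gives $x \rel{\comp{\alpha}{\id_\pcmA}} y \iff x \orth y \land x \rel{\alpha} y$, where the first conjunct follows from the second by the strengthening law (2) of the \seprel $\rel{\alpha}$. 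Hence both unit laws hold as equalities of morphisms, and the categorical structure follows.

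The main obstacle is bookkeeping rather than mathematical depth: the argument never touches the concrete behavior of $\join$ but hinges entirely on propagating the \seprel components through composition and on discharging the redundant conjuncts using laws (2) and (3). In the shallow Coq embedding the one genuinely delicate point is the notion of morphism equality: because each morphism bundles a \seprel (a proposition-valued relation), equating morphisms requires extensional equality of these relations, so the unit and associativity laws should be stated and proved with respect to a setoid equality on morphisms (or under propositional and functional extensionality) rather than Coq's definitional equality.
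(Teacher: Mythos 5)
Your proof is correct. Note first that the paper itself gives no proof of Theorem~\ref{thm:cat}: the statement is immediately followed by further definitions, and the result is discharged only in the accompanying Coq mechanization. So there is no textual proof to compare against, and your argument supplies exactly the missing content, in what is surely the intended way. The two points of substance are precisely where you locate them: (i) associativity reduces, after unfolding the composite \seprel of Definition~\ref{def:constrs} twice in each bracketing, to associativity of conjunction, since both bracketings yield $x \rel{\gamma} y \land \fapp{\gamma}{x} \rel{\beta} \fapp{\gamma}{y} \land \papp{\beta}{\fapp{\gamma}{x}} \rel{\alpha} \papp{\beta}{\fapp{\gamma}{y}}$; and (ii) the unit laws are not definitional but require collapsing one redundant conjunct on each side --- the conjunct $\fapp{\alpha}{x} \orth \fapp{\alpha}{y}$ in $\rel{\comp{\id_\pcmB}{\alpha}}$ is discharged by distributivity (law 3 of Definition~\ref{def:pcmmorph}), and the conjunct $x \orth y$ in $\rel{\comp{\alpha}{\id_\pcmA}}$ is discharged by strengthening (law 2 of Definition~\ref{def:seprel}). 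This is also a nice sanity check on the theory: the unit laws of the category hold only because morphisms and \seprels were axiomatized compatibly. Your closing remark is likewise apt: since a morphism bundles a function with a \seprel, the categorical laws hold up to extensional equality of the \seprel components, so in a shallow Coq embedding the category must be formed with respect to a setoid equality on morphisms or under propositional/functional extensionality; the paper's silence on this point is a gap your write-up correctly fills.
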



\begin{definition}
  \label{def:kereq}
  Let $\alpha$ and $\beta$ be PCM morphisms. The kernel
  $\ker \alpha$ and equalizer $\eql\,\alpha\,\beta$ are defined
  below. Both are \seprels.
	\begin{itemize}
		\item $x\,(\fapp{\ker}{\alpha})\,y \eqdef \relAlpha x y \land \fapp{\alpha}{x} =
			\fapp{\alpha}{y} = \unit$, and
		\item $x\,(\eql\,\alpha\,\beta)\,y \eqdef x \rel{\alpha} y \land x
			\rel{\beta} y \land \fapp{\alpha}{x} = \fapp{\beta}{x} \land \fapp{\alpha}{y} =
			\fapp{\beta}{y}$.
	\end{itemize}
\end{definition}

Equalizers are useful whenever one wants to equate components of a
PCM. For example, it's common in practice to have PCMs $A$ and $B$,
and to consider pairs $(a, b) \in A \times B$, but only under the
restriction that $\phi(a) = \psi(b)$ for some morphisms
$\phi : A \to X$ and $\psi : B \to X$. The morphisms $\phi$ and $\psi$
would typically be projections, thus forcing that $A$ and $B$ are
``stitched'' along the projected components.
The desired pairs are described by the equalizer
$\fapp{\fapp{\eql}{(\phi\times
    \iota_B)}}{(\iota_A\times\psi)}$. Kernels are a special case of
equalizers, when one of the morphisms is the always-unit function. 





\begin{definition}
 \label{d:restriction}
  A \dt{restriction} of a morphism $\alpha$ with \seprel $R$ is
  defined below. It's a morphism under the given \seprel.
	\begin{gather*}
		\fapp{(\alpha / R)}{x} \eqdef 
		\begin{cases}
			\fapp{\alpha}{x} & \text{if } x ~R~ \unit \\
			\undefOp & \text{otherwise}
		\end{cases} 
		\text{\quad with}\quad x\,\rel{\alpha / R}\,y \eqdef x \rel{\alpha} y \land x ~R~ y
	\end{gather*} 
\end{definition}


Returning to our leading example of ticket lock, we can identify several other
examples of morphisms that we used.

Section~\ref{s:morphabst} mentioned that $\numCurrent{-}$ is a morphism. This morphism, however, can be decomposed into two simpler morphisms.

\begin{example}
  \label{ex:serve}
  The function $\texttt{filter}_l : (A \finmap B) \to (A \finmap B)$
  over a finite map selects only the entries that map to the label
  $l$.
$$
	\fapp{\texttt{filter}_l}{m} \eqdef
		\set{z \hmapsto l \mid \fapp{m}{z} = l}
		\quad \text{with } m \rel{{\texttt{filter}}_l} n \eqdef \valid m n
$$
\noindent The $\texttt{filter}$ function is a total morphism.
Similarly, the counting function $\# : (A \finmap B) \to \Nat$
computing the size of the domain of a finite map is a total
morphism. Then, we can define
$\numCurrent{x} = \papp{(\comp{\#}{\texttt{filter}_\currentT)}}{x} $
and since morphisms compose, it's a morphism as well.
\end{example}

\subsection{Sub-PCMs}
\label{s:subpcm}

In Section~\ref{s:examplesubpcm}, we restricted the PCM $U$ with a
\seprel $\relAlpha {} {}$. Formally, this construction is developed as
a sub-object of a PCM, a \emph{sub-PCM}. As customary in abstract
algebra, we present the construction through two \emph{morphisms} on
the objects; \emph{injection} of a sub-PCM into a PCM and a
\emph{retraction} from a PCM into its sub-PCM.  



\begin{definition}\label{d:subpcm}
	\label{def:subpcm}
	A PCM $\pcmA$ is a \dt{sub-PCM} of a PCM $\pcmB$ if there exists a total PCM
	morphism $\iota : \pcmA \to \pcmB$ (an injection) and a morphism $\rho :
	\pcmB \to \pcmA$ (a retraction), such that:
	\begin{enumerate}
		\item $\fapp{\rho}{(\fapp{\iota}{x})} = x$,
		\item if $u \rel{\rho} \unit$ then $\fapp{\iota}{(\fapp{\rho}{u})} = u$
		\item if $\fapp{\rho}{u} \rel{\pcmA} \fapp{\rho}{v}$ then $u \rel{\rho} v$
		\item if $\fapp{\iota}{x} \rel{\pcmB} \unit$ then $x \rel{\pcmA} \unit$
	\end{enumerate}
\end{definition}

Property (1) says that $\iota$ is injective, i.e., if we coerce $x \in
A$ into $\fapp{\iota}{x}$, we can recover $x$ back by $\rho$, since no
other element of $A$ maps by $\iota$ into $\fapp{\iota}{x}$. This is a
common property in sub-object constructions, and essentially says that
$\iota$ embeds $A$ into a subset of $B$.
The dual property (2) allows the same for the elements of $B$ that are
defined according to $\rel{\rho}$. Hence, $A$ is in 1-1 correspondence
with that subset of $B$.
Property (3) extends the correspondence between $A$ and the subset of
$B$ to their respective notions of disjointness. That is, $\orth_A$,
when considered on images under $\rho$, implies (and hence, by
properties of morphisms \emph{equals}) $\orth_\rho$.
Property (4) says that $\iota$ preserves the undefined elements, so
that embedding $A$ into $B$ doesn't accidentally turn an undefined
element into a defined one. A similar property of $\rho$ is a
consequence of (3). Finally, $\iota$ is total in order to embed the
whole of $A$ into $B$. A partial $\iota$ would embed only a subset of
$A$ into $B$, but that can be modeled by considering a total morphism
from that subset into $B$.

As a simple example, we note that $\pcmA$ is a  sub-PCM of
itself with the identity injection and retraction and trivial
separating relations.


Definition~\ref{def:subpcm} says what it means to be a sub-PCM
abstractly, in terms of morphisms and \seprels. We next proceed to give
a concrete construction that mods out a PCM $U$ by a \separateness
relation $R$ to obtain a PCM $U/R$, that is a sub-PCM of $U$ according
to Definition~\ref{d:subpcm}.  It is this construction that we used in
Section~\ref{s:example} to obtain the PCM $\UTicket$ out of $U$.
%
%
%
The construction starts by defining the carrier set $U/R$, and the
unit and $\join$ as follows.
%
\[
\begin{array}[r]{c@{\qquad}c@{\qquad}c}
U / R \eqdef \{z : U \mid \fapp{z}{\fapp{R}{\unit_U}} \} \cup \{ \top
  \} &
\unit_{U / R} \eqdef \unit_U & 
x \join_{U / R} y \eqdef 
\begin{cases}
  x \join_{U} y & \mbox{if } \fapp{x}{\fapp{R}{y}}\\
  \top & \mbox{otherwise}
\end{cases}
\end{array}
\]
%
%
\noindent Elements of the sub-PCM are the elements of $U$ that are
defined wrt.~$R$, and the unit and $\join$ are obtained by lifting the
operations of $U$. 
Notice that the operations are well-defined. In particular, $\unit_{U/R}$
is in the carrier set $U/R$, since $1_U\,R\,1_U$ by the properties of
\seprels. Also, if $x, y \in U/R$, then $x \join_{U/R} y \in
U/R$. This is proved by case analysis on whether $x$ and $y$ are
defined or $\undefOp$. The interesting case is when they're defined
and $x\,R\,y$. Then by Proposition~\ref{p:sepU0},
$(x \join y)\,R\,\unit$, so the conclusion follows immediately.


\begin{lemma} 
	\label{lem:subpcmprop}
	The definitions of $\unit_{U / R}$ and $\join_{U / R}$ satisfy the
	following properties:
	\begin{enumerate}
		\item $\join_{U / R}$ is commutative,
			i.e., $x \join_{U / R} y = y \join_{U / R} x$,
		\item $\join_{U / R}$ is associative,
			i.e., $(x \join_{U / R} y)
				\join_{U / R} z = x \join_{U / R} (y
                                \join_{U / R} z)$,
                \item $1_{U/R}$ is the unit for $\join_{U/R}$, i.e.,
                  $1_{U/R} \join_{U/R} x = x \join_{U/R} 1_{U/R} = x$,
		\item $\top \not\in \{z : U \mid z\,R\,\unit\}$,
		\item $\unit \in \{ z : U \mid z ~R~ \unit \}$, and
		\item if $x \join_{U / R} y \in \{z : U \mid z ~R~ \unit\}$ then
				$x, y \in \{z : U \mid z ~R~ \unit\}$.
	\end{enumerate}
\end{lemma}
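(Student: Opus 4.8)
The plan is to treat properties (1), (3), (4), (5), (6) as short consequences of the \seprel laws of Definition~\ref{def:seprel} and the topped-PCM axioms of Definition~\ref{def:TPCM}, and to concentrate the real effort on associativity (2). I would begin with one reusable observation: \emph{$\top$ is not $R$-related to anything}. Indeed, if $\top\,R\,w$ held then the strengthening law would give $\top \orth w$, i.e.\ $\top \join w \in D$; but $\top \join w = \top$ by the absorbing axiom of Definition~\ref{def:TPCM} and $\top \notin D$, a contradiction. This single fact collapses every branch of the join's case split in which an operand is $\top$.

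With this in hand, property (5) is exactly the unit law $\unit\,R\,\unit$, and (4) is the observation above at $w = \unit$. For commutativity (1) I would split on whether $x\,R\,y$; by symmetry of $R$ this holds iff $y\,R\,x$, so the two sides land in the same branch, and in the defined branch they agree by commutativity of the underlying $\join_U$. For the unit law (3), I would case on whether the operand equals $\top$ (handled by the observation) or satisfies $z\,R\,\unit$, in which case symmetry gives $\unit_U\,R\,z$ and hence $\unit_U \join_{U/R} z = \unit_U \join_U z = z$; the right unit then follows from (1). For property (6), if $x \join_{U/R} y$ lies in $\{z : U \mid z\,R\,\unit\}$ then by (4) it is not $\top$, forcing the defined branch $x\,R\,y$; the definedness law yields $x\,R\,\unit$, and symmetry followed by definedness yields $y\,R\,\unit$.

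The main obstacle is associativity (2). I would first reduce it to a statement about $R$ alone. Reading off the case splits (and using the observation to dispose of $\top$ operands), $(x \join_{U/R} y)\join_{U/R} z$ equals $\top$ unless $x\,R\,y \wedge (x\join_U y)\,R\,z$, in which case it equals $(x \join_U y)\join_U z$; symmetrically $x \join_{U/R}(y \join_{U/R} z)$ equals $\top$ unless $y\,R\,z \wedge x\,R\,(y\join_U z)$, in which case it equals $x \join_U (y \join_U z)$. Since these two underlying values agree by associativity of $\join_U$ in $U$, it suffices to prove the equivalence of the two definedness conditions:
\[
x\,R\,y \,\wedge\, (x\join_U y)\,R\,z
\quad\Longleftrightarrow\quad
y\,R\,z \,\wedge\, x\,R\,(y\join_U z).
\]

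The forward direction is precisely the associativity law of Definition~\ref{def:seprel}. The converse is the delicate step and is where I expect the work to lie: from $y\,R\,z$ and $x\,R\,(y\join_U z)$, symmetry turns the second conjunct into $(y\join_U z)\,R\,x$, and a first application of the associativity law (to $y\,R\,z$ and $(y\join_U z)\,R\,x$) yields $z\,R\,x$ together with $y\,R\,(z\join_U x)$. Applying symmetry again gives $(z\join_U x)\,R\,y$, and a second application of the associativity law (now to $z\,R\,x$ and $(z\join_U x)\,R\,y$) delivers exactly $x\,R\,y$ along with $z\,R\,(x\join_U y)$; symmetry on the latter gives $(x\join_U y)\,R\,z$. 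The only subtlety is to orient these two re-associations so that the chain terminates at $x\,R\,y$ rather than cycling through already-known facts — this is the concrete instance of the remark after Definition~\ref{def:seprel} that $x\,R\,y\,R\,z$ entails all pairwise relations among $x,y,z$. Associativity of $\join_U$ (and Proposition~\ref{p:sepU0} for keeping intermediate joins inside the carrier) then closes the argument.
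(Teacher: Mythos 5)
Your proposal is correct and takes essentially the same route as the paper's proof: a case analysis on the branches of $\join_{U/R}$ in which each property is discharged by the corresponding \seprel law (symmetry for commutativity, the unit law for the unit, definedness for properties (4)--(6), and \seprel associativity for associativity). The paper only gives this argument as a sketch, and your write-up fills in the details correctly --- in particular the converse direction of your definedness equivalence, obtained by two applications of the \seprel associativity law interleaved with symmetry, which is precisely the step the paper's sketch glosses over.
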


\begin{proof}
  (Sketch.) By easy analysis of the cases in the definition of $U/R$ and
  $\join_{U/R}$. The proof essentially requires all the \seprel
  properties of $R$. For example, the commutativity property (1)
  relies on the symmetry of $R$, the associativity property (2) relies
  on the associativity of $R$, and the unit property (3) relies on the
  unit law of \seprels. Thus, \seprel laws are directly obtained as a
  requirement for proving this lemma.
\end{proof}

\noindent Thus, by the above lemma, we have a PCM:
\begin{equation*}
		(U / R , \join_{U / R}, \unit_{U / R}, \top, \{z : U
		\mid z ~R~ \unit\})
\end{equation*}
It remains to show that this PCM is a sub-PCM in the sense of
Definition~\ref{def:subpcm}. To that purpose, we define the two necessary
morphisms:
\[
\begin{array}[t]{c@{\qquad}c}
\fapp{\iota}{x} \eqdef x \mbox{ with } \rel{\iota} \eqdef \rel{U/R} &
\fapp{\rho}{u}  \eqdef 
  \begin{cases}
	u	& \mbox{if } u ~R~ \unit\\
       \undefOp & \mbox{otherwise}
  \end{cases}   \mbox{ with } \rel{\rho} \eqdef R 
\end{array}
\]
%
\noindent Note that these functions are indeed morphisms. That $\iota$
and $\rho$ preserve unit and $\undefOp$ is trivial to show, and so is
that $\papp{\iota}{x\join y} = \fapp{\iota}{x} \join \fapp{\iota}{y}$
when $x \rel{\iota} y$. It remains to show that
$\papp{\rho}{u \join v}$ is defined and
$\papp{\rho}{u \join v} = \fapp{\rho}{u} \join \fapp{\rho}{v}$, if
$u\,R\,v$. To see this, assume $u\,R\,v$ and observe that from the law
of defined elements of \seprel, this implies $u\,R\,\unit$ and
$v\,R\,\unit$. Thus $\fapp{\rho}{u}=u$ and $\fapp{\rho}{v}=v$ and so
$\fapp{\rho}{u}\join\fapp{\rho}{v} = u \join v$. By
Proposition~\ref{p:sepU0}, we also have $(u \join v)\,R\,\unit$; thus
$(u \join v)$ is defined and $\fapp{\rho}{(u \join v)}$ equals
$u \join v$, concluding that $\rho$ is a morphism. Now it's also easy
to see that the injection $\iota$ is total (by definition, since it
has the trivial \seprel), and that $\iota$ and $\rho$ satisfy the
requirements of Definition~\ref{d:subpcm}. Therefore $U/R$ is a
sub-PCM of $U$.

We conclude this section by noticing that $U/R$ is a normal PCM, since
$\undefOp$ is its only undefined element. Thus, we can use the sub-PCM
construction to normalize PCMs, when desirable. Given a non-normal PCM
$A$, the PCM $A/{\orth_A}$ is normal and contains all the defined
elements of $A$.

\subsection{Histories, Morphisms, and Separating Relations}
\label{sec:spinex}
This section illustrates how PCM morphisms and \seprels apply to
reasoning about data structures specified via time-stamped histories.
Histories are a common and general abstraction in concurrency, used,
for example, in the formulation of consistency criteria such as
linearizability~\cite{Herlihy-Wing:TOPLAS90}.
%
Here, we specifically focus on their application to locking.

An abstract locking history of a thread is a finite map from
timestamps represented by positive natural numbers to set $\Op =
\set{\lockval,\unlockval}$, i.e. $\mathsf{Hist} = \Nat^+ \finmap
\Op$. If a thread's history has the value
 $\lockval$ at timestamp $t$,
that signifies that the thread has \emph{locked} at time $t$. Similarly, if
the value is $\unlockval$ then the thread has \emph{unlocked} at time
$t$. If the history of a thread is undefined at $t$, then the thread
was idle at that moment, and some other thread may have locked or
unlocked at time $t$.
We overload the notation from Section~\ref{s:transitions}
and write $t \hmapsto \lockval$ (resp. $t
\hmapsto \unlockval$) for a singleton history containing only the
timestamp $t$ with the locking (resp. unlocking) operation.
Histories form a PCM under disjoint union,
with the nowhere defined map (i.e., empty history) as unit.


\subsubsection*{Using PCM $\Hist$ In a Resource}
%
Let us assume that we have defined an internal state of some
(unspecified) locking algorithm and a corresponding resource with a
state space and transitions, similar to how we defined the resource
$\rTicket$ in Section~\ref{s:example}.
Then we will typically have a morphism, which we name $\tau$ here,
that projects the history component of the underlying state; that is
$\tauS{s}$ is the history of ``our'' thread in state $s$, and
$\tauO{s}$ is the history of all ``other'' threads combined.

Moreover, we will also typically use the resource state space to
specify global properties of histories, similar to $\nogaps$ from
Figure~\ref{fig:ststicket}. For example, at the very least, we will
require that the global history $\tauT{s}$ alternates the operations
$\lockval$ and $\unlockval$, i.e., $\tauT{s}$ has the form
$ \tauT{s} = 1 \hmapsto \lockval \join 2 \hmapsto \unlockval \join 3
\hmapsto \lockval \join \dots$
The alternation property specifies the key relationship between
locking and unlocking, but doesn't form a \seprel itself. However, as
we show presently, there's an important \seprel $\relOmega {} {}$ over
locking histories that gives rise to a morphism $\omega :
\mathsf{Hist} \to O$ for computing lock ownership out of a thread's
history.

\subsubsection*{Histories of Exclusive Locking}
Note that histories, even with the alternation property imposed, don't
exclude the possibility that one thread may take the lock, which is
then released by another thread. In our subjective setting, we may
represent the situation as follows
\[
\tauS{(s)}(t) = \lockval \quad \tauO{(s)}(t+1) = \unlockval
\]
The equations say that we have locked at time $t$, but another thread
has unlocked at $t+1$.
Modeling such behavior is desirable because there exist locking
algorithms that admit it. For example, even simple spin locks
physically allow that the locking and unlocking threads are different.
Readers-writers locks~\cite{courtois:ACM71}, which can be built over
spin locks, allow an initial reader thread to acquire a lock
and a possibly different reader thread to release it.
%
In a setting where a lock can be released by any thread, one can't
really speak about lock ownership. Thus, structures that admit such
behavior and that can utilize the general histories above, typically
require richer internal ghost state in order to specify the desired
locking discipline. For example, readers-writes locks require
enrichment with permissions~\cite{Bornat-al:POPL05}, which we forego
here.

Nevertheless, even without enrichment, we can already illustrate how
to impose on locking histories a more restricted behavior, whereby the
thread that unlocks must be the one that currently holds the
lock. Such ``mutually exclusive'' histories form a sub-PCM of general
locking histories, and thus the property of mutual exclusion can be
captured as a \seprel.  Analogous to the ticket lock example, we then
construct the morphism $\omega$ that computes lock ownership.

%

Let us first define the separating relation:
\begin{equation}\label{eqn:seprelomega}
\begin{aligned}
	\relOmega x y \eqdef &
  	 (\forall t\ldot \papp{x}{t} = \lockval \Rightarrow
		 \lastKey{x \join y} \leq t \lor \papp{x}{t+1} = \unlockval) \land \hbox{}
	\\
   	& (\forall t\ldot \papp{y}{t} = \lockval \Rightarrow
		 \lastKey{y \join x} \leq t \lor \papp{y}{t+1} = \unlockval) \land
	 x \rel{} y
\end{aligned}
\end{equation}

\noindent Intuitively, the relation states that whenever the thread
with history $x$ locked at time $t$ then the thread with history $y$
couldn't have proceeded. On the other hand, the thread with history
$x$ could have proceeded by unlocking at the immediate time
$t+1$. Similarly to the \seprels in the previous sections, the
relation symmetrically applies to the history $y$ as well, and
requires that the join of $x$ and $y$ be valid, i.e., that the
histories of two threads don't share timestamps.

\begin{lemma}
	The relation $\relOmega {} {}$ is a \seprel.
\end{lemma}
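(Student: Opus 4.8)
The plan is to verify the five laws of Definition~\ref{def:seprel} in turn, isolating associativity as the only substantial obligation. Throughout I abbreviate $\lastKey{u \join v}$ by $M_{uv}$ and rely on one elementary fact about finite maps: since $\dom{u} \subseteq \dom{u \join v}$ whenever $u \orth v$, we have $\lastKey{u} \leq M_{uv}$, and in particular $M_{uv} \leq t$ forces $t+1 \notin \dom{u \join v}$, so both $\papp{u}{t+1}$ and $\papp{v}{t+1}$ are undefined. Strengthening~(2) is immediate, as $x \rel{\omega} y$ contains the conjunct $x \orth y$; symmetry~(4) follows from the symmetric shape of~(\ref{eqn:seprelomega}) together with $M_{xy} = M_{yx}$; and the unit law~(3) holds vacuously, since the empty history is nowhere $\lockval$ and $\unit \orth \unit$.

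For definedness~(1) I would assume $x \rel{\omega} y$ and derive $x \rel{\omega} \unit$. The disjointness conjunct $x \orth \unit$ holds since $x$ is defined, and the clause on $\unit$ is vacuous, so only the clause on $x$ needs work: given $\papp{x}{t} = \lockval$ I must produce $M_{x\unit} \leq t \lor \papp{x}{t+1} = \unlockval$, where $M_{x\unit} = \lastKey{x}$. From the hypothesis I obtain $M_{xy} \leq t \lor \papp{x}{t+1} = \unlockval$; in the first case monotonicity gives $\lastKey{x} \leq M_{xy} \leq t$, and the second case transfers directly.

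The heart of the argument is associativity~(5): assuming $x \rel{\omega} y$ and $(x \join y) \rel{\omega} z$, I must establish $y \rel{\omega} z$ and $x \rel{\omega} (y \join z)$. The disjointness parts come first: the hypotheses supply $x \orth y$ and $(x\join y)\orth z$, hence all three domains are pairwise disjoint, giving $y \orth z$ and $x \orth (y \join z)$. The remaining locking clauses all follow by the same \emph{saturation} maneuver together with monotonicity. Consider the clause of $y \rel{\omega} z$ governing $y$: given $\papp{y}{t} = \lockval$, the hypothesis $x \rel{\omega} y$ yields $M_{xy} \leq t \lor \papp{y}{t+1} = \unlockval$; the unlock case transfers at once, while in the bound case $M_{xy} \leq t$ leaves $t+1$ unpopulated in $x \join y$, so feeding $\papp{(x\join y)}{t} = \lockval$ into $(x\join y)\rel{\omega} z$ kills its ``unlock-at-$t+1$'' disjunct and forces $M_{xyz} \leq t$, whence $M_{yz} \leq M_{xyz} \leq t$. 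The clause governing $x$ in $x \rel{\omega}(y\join z)$ is identical, stopping at the bound $M_{xyz} \leq t$ that the goal itself demands. The clauses driven by $z$ are simpler, since $z$'s own hypothesis already bounds timestamps by $M_{xyz}$, which $M_{yz} \leq M_{xyz}$ reduces where needed; and the $(y\join z)$-clause of the second goal splits on whether $y$ or $z$ is locked, reducing to these two patterns.

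The main obstacle is precisely this mismatch between the joins appearing in the hypotheses and those demanded by the goals: the hypotheses bound timestamps by $M_{xy}$ and $M_{xyz}$, whereas the goals require bounds by $M_{yz}$ or $M_{xyz}$. The saturation observation---that a $\lastKey$ bound at a narrower join certifies the successor timestamp empty, thereby disabling the ``unlock-at-$t+1$'' disjunct of a wider-join hypothesis and upgrading the bound---is what resolves every case, and I would state it as an auxiliary fact before dispatching the four symmetric clauses.
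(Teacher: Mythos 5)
Your proof is correct and takes essentially the same route as the paper's: verify the laws of Definition~\ref{def:seprel} one by one, with associativity as the only substantial case, settled by case analysis on the disjuncts of the two hypotheses together with monotonicity of $\lastKey{x}$ under $\join$. The differences are cosmetic: you discharge the impossible ``unlock at $t+1$'' disjunct of the wider hypothesis outright (since $\lastKey{x \join y} \leq t$ leaves $t+1$ unpopulated), where the paper instead folds that disjunct into the unlock case by noting the unlock must lie in $y$, and you spell out the remaining clauses that the paper dismisses as similar.
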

\begin{proof}
  The proof 
  shows that properties of
Definition~\ref{def:seprel} hold:
\begin{enumerate}
\item Definedness: 
  Assume that $\relOmega x y$, and show that $\relOmega x
  \unit$. Indeed, consider $t$ such that $\papp{x}{t} = \lockval$.  By
  case analysis on $t$, either $t = \lastKey{x} = \lastKey{x \join
    \unit}$, or $t \leq \lastKey{s} = \lastKey{x \join \unit}$. In
  either case, trivially $\relOmega x \unit$.

\item Strengthening: Follows trivially form definition of $\relOmega {}
  {}$.

\item Unit property: follows from symmetry (4) and definedness (1).

\item Symmetry: Immediate from symmetry of conjunction and symmetry
  of $\valid x y$.

\item Associativity: Assume $\relOmega {(x \join y)} z$ and
  $\relOmega x y$ to show $\relOmega y z$ and $\relOmega x
  {(y \join z)}$. Let $t$ be a timestamp such that $\papp{y}{t} =
  \lockval$ (the cases when $\papp{x}{t} = \lockval$ or $\papp{z}{t} =
  \lockval$ are similar). Then from $\relOmega x y$ we get that
  $\lastKey{x \join y} \leq t$ (and more specifically $\lastKey{x} <
  t$), or $\papp{y}{t+1} = \unlockval$. In the first case, it must be
  $\papp{(x \join y)}{t} = \lockval$. Thus from $\relOmega {(x \join
    y)} z$, we infer that either $\lastKey{z} < t$, and thus
  $\relOmega y z$ and $\relOmega x (y \join z)$, or $\papp{(x \join
    y)}{t+1} = \unlockval$, which implies that $\papp{y}{t+1} =
  \unlockval$, which we consider as part of the second case. In the
  second case, i.e., when $\papp{y}{t+1} = \unlockval$, the property
  $\relOmega y z$ is immediate. On the other hand, we also have
  $\papp{(y \join z)}{t} = \lockval$ and $\papp{(y \join z)}{t+1} =
  \unlockval$, thus $\relOmega x {(y \join z)}$ holds as well.

\end{enumerate}
\end{proof}

Finally, we can define the morphism $\omega : \Hist \to O$.
%
%
%
\begin{equation}\label{eqn:morphomega}
\fapp{\omega}{h} \eqdef \begin{cases}
  \top & \text{if } h = \top \\
  \own & \text{if } $\lastKey{h} > 0$\ \mbox{and}\ \papp{h}{\lastKey{h}} = \lockval \\
  \nown & \text{otherwise}
  \end{cases}
\end{equation}

\begin{lemma}
The map $\omega$ is a morphism with \seprel $\relOmega {} {}$.
\end{lemma}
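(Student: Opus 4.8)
The plan is to check the three conditions of Definition~\ref{def:pcmmorph} for $\omega$ paired with the \seprel $\relOmega {} {}$. The two structural conditions are read straight off the definition of $\omega$: preservation of undefinedness holds because the first clause gives $\fapp{\omega}{\top} = \top$; preservation of unit holds because the empty history has $\lastKey{\emptyset} = 0$, which is not positive, so $\omega$ returns $\nown$, and $\nown$ is exactly $\unit_O$. All the work is therefore in the distributivity clause: assuming $\relOmega x y$, I must show $\fapp{\omega}{x} \orth \fapp{\omega}{y}$ together with $\papp{\omega}{x \join y} = \fapp{\omega}{x} \join \fapp{\omega}{y}$.

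The key auxiliary observation I would prove first is that, under $\relOmega x y$, a thread can own the lock only if it holds the globally latest timestamp: if $\fapp{\omega}{x} = \own$ then $\lastKey{x} = \lastKey{x \join y}$. Indeed, $\fapp{\omega}{x} = \own$ means $\lastKey{x} > 0$ and $\papp{x}{\lastKey{x}} = \lockval$; instantiating the first conjunct of~(\ref{eqn:seprelomega}) at $t = \lastKey{x}$ gives $\lastKey{x \join y} \leq \lastKey{x}$ or $\papp{x}{\lastKey{x}+1} = \unlockval$. The second disjunct cannot hold, since $\lastKey{x}+1 \notin \dom{x}$, so $x$ is undefined there; and the first disjunct, combined with the monotonicity $\lastKey{x} \leq \lastKey{x \join y}$ (because $\dom{x} \subseteq \dom{x \join y}$), forces equality. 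By symmetry the analogous statement holds for $y$.

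Given this observation, $\fapp{\omega}{x} \orth \fapp{\omega}{y}$ is immediate: the only undefined join in $O$ is $\own \join \own$, and if both values were $\own$ then $\lastKey{x} = \lastKey{x \join y} = \lastKey{y}$ would exhibit a common positive timestamp in the disjoint domains $\dom{x}$ and $\dom{y}$, which is impossible. For the equation I would reduce everything to reading $\omega$ at the global maximum $m = \lastKey{x \join y}$. If $x \join y = \emptyset$ all three values are $\nown$. Otherwise $m$ lies in exactly one of the two disjoint domains, say $\dom{x}$, so $m = \lastKey{x}$ and $\papp{(x \join y)}{m} = \papp{x}{m}$; hence $\fapp{\omega}{x \join y} = \own$ iff $\papp{x}{m} = \lockval$ iff $\fapp{\omega}{x} = \own$. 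In the subcase where this value is $\own$ the orthogonality argument already gives $\fapp{\omega}{y} = \nown$, so both sides equal $\own$; in the subcase where it is $\nown$ neither component can be $\own$ (by the auxiliary observation the only possible owner is the holder of $m$), so both sides equal $\nown$. The case $m \in \dom{y}$ is symmetric, and a final glance at the two-element join table of $O$ confirms the equality in every subcase.

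The main obstacle is this distributivity step, and specifically the correct use of~(\ref{eqn:seprelomega}): the crux is that the ``escape'' disjunct $\papp{x}{t+1} = \unlockval$ cannot fire at the last timestamp of $x$, which is exactly what pins any lock owner to the global maximum and thereby rules out two simultaneous owners. Everything else is routine bookkeeping with $\max$, disjoint domains, and the degenerate empty-history case, carried out symmetrically in $x$ and $y$.
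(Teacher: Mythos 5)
Your proof is correct and follows essentially the same approach as the paper's: both pivot on the global maximum timestamp $\lastKey{x \join y}$ and use the fact that the escape disjunct $\papp{x}{t+1} = \unlockval$ of~(\ref{eqn:seprelomega}) cannot fire at a history's own last timestamp, which pins any lock owner to the global maximum and rules out two simultaneous owners. Your write-up is somewhat more thorough, since the paper treats only ``the interesting case'' ($t > 0$, $\papp{x}{t} = \lockval$, w.l.o.g.) while you also spell out the empty-history and $\unlockval$ subcases, but the underlying argument is the same.
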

\begin{proof}
The properties of Definition~\ref{def:pcmmorph} hold as follows:
\begin{enumerate}
\item Map $\omega$ clearly preserves unit since
 $\lastKey{\unit_\Hist} = 0$; thus $\papp{\omega}{\unit_{\Hist}} = \nown = \unit_{O}$.

\item Undefinedness is preserved trivially.

\item To show distributivity, assume that $\relOmega x y$ and let $t =
  \lastKey{x \join y}$. We consider only the interesting case when $t
  > 0$, and w.l.o.g., $t \in x$, and $\papp{x}{t} = \lockval$.  Then
  by definition of $\omega$, $\fapp{\omega}{x} =\own$. But it must
  also be that $\fapp{\omega}{y} = \nown$, for if otherwise, then by
  $\relOmega x y$, the history $y$ must have an unlocking entry at
  time $t+1$ and thus contains a timestamp beyond $t = \lastKey{x
    \join y}$. Therefore $\papp{\omega}{x \join y} = \own =
  \fapp{\omega}{x} \join \fapp{\omega}{y}$.

\end{enumerate}
\end{proof}

\section{Invertible Morphisms and \SepRels}
\label{s:inversion}

\subsection{Invertibility of Morphisms}
\label{s:invmorph}

As we have seen in the previous sections, the key property of a
morphism $\phi : A \rightarrow B$ is $\phi$ distributes over
$\join$. In other words, if the \emph{argument} of $\phi$ splits into
$a_1 \join a_2$, then the result splits as well, that is:
\[
  \fapp{\phi}{(a_1 \join a_2)} = \fapp{\phi}{a_1} \join \fapp{\phi}{a_2}
\]
under a suitable condition on $a_1$ and $a_2$ expressed as a \seprel
$a_1 \rel{\phi} a_2$.

In verification practice, however, we often have to show the converse:
that if the \emph{result} of $\phi$ is defined
and splits into $b_1 \join b_2$, then the argument must split as well,
that is:
\begin{align}
\label{eq:invmorph}
a \rel{\phi} \unit \land \fapp{\phi}{a} = b_1\join b_2 \Rightarrow \exists a_1\ a_2\ldot a = a_1 \join
a_2 \wedge a_1 \orth_\phi a_2 \wedge \fapp{\phi}{a_1} = b_1 \wedge \fapp{\phi}{a_2} = b_2
\end{align}
We call this property \emph{invertibility}, because it can be seen as
imposing a form of distributivity on the \emph{inverse image}
$\phi^{-1} : \mathcal{P}(B) \rightarrow \mathcal{P}(A)$, where we take
only inverses that are \separate from $\unit$ by $\rel{\phi}$, i.e.
$\phi^{-1} (X) = \{x \in A \mid \fapp{\phi}{x} \in B \wedge x
\rel{\phi} \unit\}$.
Indeed, property (\ref{eq:invmorph}) can be restated compactly as
\begin{align}
\label{eq:finvmorph}
\fapp{\phi^{-1}}{\{b_1 \join b_2\}} \subseteq (\fapp{\phi^{-1}}\{b_1\}) \join_{\orth_\phi} (\fapp{\phi^{-1}}\{b_2\})
\end{align}
where $\join_{\orth_\phi}$ is a special case of the more general
operation $\join_R$ that \emph{lifts} a \seprel $R$ on $A$ to an
operation on sets
$X_1, X_2 \in \mathcal{P}(A)$ as follows.
\[
X_1 \join_R X_2 = \{ a_1 \join a_2 \mid a_1 \in X_1 \wedge a_2 \in X_2 \wedge a_1\,R\,a_2\}
\]

Invertibility of morphisms appears naturally in separation logic when
reasoning by framing or parallel composition. For example, imagine a program
$e$ with the following spec, similar to our abstract spec for $\progLock$.
\[
e : \spec{\lambda s\ldot\phiS{s} = b_1} 
    \spec{\lambda s\ldot \phiS{s} = b'_1}
\]
Here $\phi$ is a total morphism (i.e., $\phi$ has a trivial
\separating relation), and
we want to frame it by $\frameC{(\lambda s\ldot\phiS{s} = b_2)}$. The
direct application of the frame rule, unfolding the definition of
$\bstar$ that we introduced in Section~\ref{s:example}, derives
\begin{align*}
e :~ & \spec{\lambda s\ldot \exists s_1\, s_2\ldot 
            s = s_1 \bstar \frameC{s_2} \land
	    \phiS{s_1} = b_1 \land
            \frameC{\phiS{s_2} = b_2}}\\
     & \spec{\lambda s\ldot \exists s_1\, s_2\ldot 
            s = s_1 \bstar \frameC{s_2} \land
	    \phiS{s_1} = b'_1 \land
            \frameC{\phiS{s_2} = b_2}}
\end{align*}
Of course, we would like to strengthen the precondition and weaken the
postcondition of this spec into the more compact and ultimately
desirable form
\[
e : \spec{\lambda s\ldot \phiS{s} = b_1 \join \frameC{b_2}}
    \spec{\lambda s\ldot \phiS{s} = b'_1 \join \frameC{b_2}}
\]
Here's where invertibility comes in. It's easy to see that the
postcondition readily weakens into the desired form just by using that
$\phi$ is a (total) morphism, and the fact that
$\cs{s} = \cs{s_1} \join \cs{s_2}$. However, the precondition doesn't
strengthen immediately. We need to show
\[
\phiS{s} = b_1 \join b_2 \Rightarrow \exists s_1\ s_2\ldot 
            s = s_1 \bstar s_2 \wedge \phiS{s_1} = b_1 \wedge 
            \phiS{s_2} = b_2
\]
but this doesn't follow from distributivity of $\phi$. It does follow,
however, if $\phi$ is invertible. To see this, assume that
$\cs{s} = a$. Then $\phiS{s} = b_1 \join b_2$ transforms into
$\fapp{\phi}{a} = b_1 \join b_2$. From the assumption that
$\cs{s} \orth \co{s}$ and the properties of \seprels, we get
$\cs{s} \orth \unit$ and thus $a \orth \unit$ as well.  Then
invertibility of $\phi$ gives us $a_1$ and $a_2$ such that
$a = a_1 \join a_2$, $a_1 \orth_\phi a_2$ (which equals
$a_1 \orth a_2$ because $\phi$ is a total morphism),
$\fapp{\phi}{a_1} = b_1$, and $\fapp{\phi}{a_2} = b_2$.  Choosing
$s_1 = (a_1, a_2 \join \co{s})$ and $s_2 = (a_2, a_1 \join \co{s})$
gives us $s = s_1 \bstar s_2$ such that
$\phiS{s_1} = \fapp{\phi}{a_1} = b_1$ and
$\phiS{s_2} = \fapp{\phi}{a_2} = b_2$. This strengthens the
precondition as desired.

\subsection{Invertibility of \SepRels}
\label{s:invseprel}

Similar style of reasoning applies if $\phi$ isn't total, but has a
non-trivial \seprel $\orth_\phi$. It turns out, however, that then we
need to impose an additional condition of $\orth_\phi$, thus giving
rise to a notion of \emph{invertible} \seprels also.
To see what this condition should be, imagine that we have a program
$e$ with the following spec, similar to our intermediate abstract spec
for $\progLock$.
\[
e : \spec{\lambda s\ldot\phiS{s} = b_1 \wedge \cs{s} \rel{\phi} \co{s}} 
    \spec{\lambda s\ldot \phiS{s} = b'_1 \wedge \cs{s} \rel{\phi} \co{s}}
\]
Because $\phi$ is not total, we include the conjunct
$\cs{s} \rel{\phi} \co{s}$ into the spec to ensure that $\phi$
distributes when framed. We now want to frame with
$\frameC{(\lambda s\ldot \phiS{s}=b_2 \wedge \cs{s} \rel{\phi} \co{s})}$.
Similarly to the previous Section~\ref{s:invmorph}, unfolding the definition of
$\star$ derives us the following spec:
\begin{align}\label{eq:framedspec}
e :~ & \spec{\lambda s\ldot \exists s_1\, s_2\ldot 
            s = s_1 \bstar \frameC{s_2} \land
	    \phiS{s_1} = b_1 \land
	    \cs{s_1} \rel{\phi} \co{s_1} \land
            \frameC{\phiS{s_2} = b_2 \land
	    \cs{s_2} \rel{\phi} \co{s_2}}
	  }\\
     & \spec{\lambda s\ldot \exists s_1\, s_2\ldot \notag
            s = s_1 \bstar \frameC{s_2} \land
	    \phiS{s_1} = b'_1 \land
	    \cs{s_1} \rel{\phi} \co{s_1} \land
            \frameC{\phiS{s_2} = b_2 \land
	    \cs{s_2} \rel{\phi} \co{s_2}}
	 }
\end{align}
However, we ultimately desire to obtain a compact spec in the following form:
\begin{align}\label{eq:compactspec}
e : \spec{\lambda s\ldot
	\phiS{s} = b_1 \join \frameC{b_2} \land 
	\cs{s} \rel{\phi} \co{s}} 
    \spec{\lambda s\ldot
    	\phiS{s} = b'_1 \join \frameC{b_2}  \land 
    	\cs{s} \rel{\phi} \co{s}}
\end{align}
As before, we need to prove two implications to weaken (\ref{eq:framedspec})
to (\ref{eq:compactspec}).
%
%
\begin{align}
& \phiS{s} = b_1 \join b_2 \land \cs{s} \rel{\phi}
                \co{s} \Rightarrow \hbox{} \label{eq:inv1} \\
& \qquad \begin{aligned}[t]
  \exists s_1\, s_2\ldot s = s_1 * s_2 \land 
                          \phiS{s_1} = b_1 \land \cs{s_1} \rel{\phi} \co{s_1} \land 
                          \phiS{s_2} = b_2 \land \cs{s_2} \rel{\phi} \co{s_2} \notag
  \end{aligned}\\
&  s = s_1 * s_2 \land 
      \phiS{s_1} = b'_1 \land \cs{s_1} \rel{\phi} \co{s_1} \land 
      \phiS{s_2} = b_2 \land \cs{s_2} \rel{\phi} \co{s_2} 
  \Rightarrow \hbox{} \label{eq:inv2} \\
& \qquad \phiS{s} = b'_1 \join b_2 \land \cs{s} \rel{\phi} \co{s} \notag
\end{align}
Or alternatively, if we replace the state variables by pairs of
\emph{self} and \emph{other} components, e.g., $s = (a, a')$,
$s_1 = (a_1, a_2 \join a')$, $s_2 = (a_2, a_1 \join a')$, 
%
we obtain after some simplification:
\begin{align}
& 
    \fapp{\phi}{a} = b_1 \join b_2 \land a
                \rel{\phi} a' \Rightarrow \hbox{} \label{eq:inv3}\\
& \qquad \begin{aligned}[t]
  \exists a_1\, a_2\ldot a = a_1 \join a_2 \land 
                         \fapp{\phi}{a_1} = b_1 \land a_1 \rel{\phi} (a_2 \join a') \land 
                         \fapp{\phi}{a_2} = b_2 \land a_2 \rel{\phi}
                         (a_1 \join a') \notag
  \end{aligned}\\
&  \fapp{\phi}{a_1} = b'_1 \land a_1 \rel{\phi} (a_2 \join a') \land 
      \fapp{\phi}{a_2} = b_2 \land a_2 \rel{\phi}  (a_1 \join a') 
  \Rightarrow \hbox{} \label{eq:inv4}\\
& \qquad \fapp{\phi}{(a_1 \join a_2)} = b'_1 \join b_2 \land (a_1
                                          \join a_2) \rel{\phi} a' \notag
\end{align}
If we assume that $\phi$ is invertible, then from $a \rel{\phi} a'$,
we get $a \rel{\phi} \unit$ by the definedness property of \seprels,
and then (\ref{eq:inv3}) immediately follows by associativity of
\seprels.  However, to obtain the first conjunct in the conclusion of
(\ref{eq:inv4}),
we require that $a_1 \rel{\phi} a_2$, so that we can distribute $\phi$
over $a_1 \join a_2$ and then use that $\fapp{\phi}{a_1} = b_1$ and
$\fapp{\phi}{a_2} = b_2$. To obtain the second conjunct in
(\ref{eq:inv4}), we need to reassociate $a_1$, $a_2$ and $a'$, which
can be done if $a_1 \rel{\phi} a_2 \rel{\phi} a'$.
Thus, we obtain the
required condition that makes it possible to derive (\ref{eq:compactspec}).
\[
a_1 \rel{\phi} (a_2 \join a') \wedge a_2 \rel{\phi} (a_1 \join a')
\Rightarrow a_1 \rel{\phi} a_2 \rel{\phi} a'
\]
To establish this implication it suffices to show that either $a_1 \rel{\phi} a'$ or $a_2 \rel{\phi} a'$ as 
the consequent $a_1 \rel{\phi} a_2 \rel{\phi} a'$ then follows from
associativity of \seprels.


\subsection{Duality of Invertibility of Morphisms and \SepRels}

We note an interesting duality in the interplay of $\phi$ and
$\rel{\phi}$ in the above framing process. When strengthening the
precondition, it's the invertibility of $\phi$ that provides the split
of $a$ into $a_1 \join a_2$ such that $a_1 \rel{\phi} a_2$, which is
then used to reassociate $\rel{\phi}$. When weakening the
postcondition, the situation is dual. We start with $a$ already split
into $a = a_1 \join a_2$, but it's the invertibility of $\rel{\phi}$
that ensures the split is such that $\phi$ can distribute over it.
Thus, in the precondition, $\phi$ helps $\rel{\phi}$ and in the
postcondition $\rel{\phi}$ helps $\phi$.

Thus, to summarize, we have the following definitions of invertibility
for \seprels and morphisms that enable framing in the abstract of
specs of above form, i.e., without relying on the definitions of
morphism or its \seprel.

\begin{definition}
\label{d:invseprel}
A \seprel $R$ on the PCM $A$ is \emph{invertible} if for all $a_1$, 
$a_2$, $a'$ such that $a_1 ~R~ (a_2 \join a')$ and 
$a_2 ~R~ (a_1 \join a')$, it must also be $a_1 ~R~ a_2 ~R~
a'$. Moreover, it suffices to prove $a_1~R~a'$ or $a_2~R~a'$, as $a_1~R~a_2~R~a'$ follows by associativity. 

\end{definition}

\begin{definition}
\label{d:invmorph}
A morphism $\phi : A \to B$ is \emph{invertible} if $\rel{\phi}$ is an
invertible \seprel and for all $a \in A$ such that
$a \rel{\phi} \unit$, and $b_1, b_2 \in B$ where
$\fapp{\phi}{a} = b_1 \join b_2$, there exist $a_1, a_2 \in A$, such
that $a = a_1 \join a_2$, $a_1 \rel{\phi} a_2$,
$\fapp{\phi}{a_1} = b_1$ and $\fapp{\phi}{a_2} = b_2$.
\end{definition}


We 
now demonstrate the invertibility of various constructions we
introduced earlier. First, \seprels of total morphisms are always
invertible.

\begin{proposition}
	\label{p:trivinv}
	Let $A$ be a PCM. The trivial separating relation $\rel{A}$ is
	invertible.
\end{proposition}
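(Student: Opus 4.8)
The plan is to unfold both definitions and observe that the statement reduces to the decomposition property of topped PCMs. Recall that the trivial \seprel is $x \rel{A} y \eqdef x \join y \in D$ (i.e.\ $x \orth y$), and that by Definition~\ref{d:invseprel} it is invertible exactly when, for all $a_1, a_2, a'$ with $a_1 \rel{A} (a_2 \join a')$ and $a_2 \rel{A} (a_1 \join a')$, we can conclude $a_1 \rel{A} a_2 \rel{A} a'$, where the latter abbreviates $a_1 \rel{A} a_2 \land (a_1 \join a_2) \rel{A} a'$.

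First I would take the first hypothesis $a_1 \rel{A} (a_2 \join a')$, which by definition states that $a_1 \join (a_2 \join a') \in D$. Since $\join$ is total and associative in a topped PCM, this equals $(a_1 \join a_2) \join a' \in D$, which already yields the second conjunct $(a_1 \join a_2) \rel{A} a'$ for free. Next I would apply property~(3) of Definition~\ref{def:TPCM} (a defined join has defined operands) to $(a_1 \join a_2) \join a' \in D$, peeling off $a_1 \join a_2 \in D$; this is precisely the first conjunct $a_1 \rel{A} a_2$. Hence $a_1 \rel{A} a_2 \rel{A} a'$ holds, and notably only the first hypothesis is used, the second being redundant for the trivial relation.

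Alternatively, following the convenience remark in Definition~\ref{d:invseprel}, I could instead extract just $a_2 \rel{A} a'$---which again follows by applying property~(3) to peel $a_2 \join a'$ off the defined triple join---and then let $a_1 \rel{A} a_2 \rel{A} a'$ follow from associativity of \seprels. There is no substantial obstacle here: the whole argument is bookkeeping with the associativity of the (total) operation $\join$ and the topped-PCM decomposition property. The only point requiring care is to reassociate $a_1 \join (a_2 \join a')$ into $(a_1 \join a_2) \join a'$ before invoking property~(3), so that the operands peeled off are exactly $a_1 \join a_2$ and $a'$ rather than some other grouping.
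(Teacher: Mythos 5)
Your proof is correct, and it differs from the paper's own proof in a small but genuine way. The paper likewise uses only the single hypothesis $a_1 \rel{A} (a_2 \join a')$, but it regroups the defined triple join as $(a_1 \join a') \join a_2$ (using commutativity as well as associativity of the total $\join$), peels off $a_1 \join a'$ by law~(3) of Definition~\ref{def:TPCM} to obtain $a_1 \rel{A} a'$, and then finishes via the sufficiency clause of Definition~\ref{d:invseprel}, i.e., by appealing to the associativity law of \seprels. Your main argument instead regroups as $(a_1 \join a_2) \join a'$, which is already the second conjunct $(a_1 \join a_2) \rel{A} a'$ of the goal, and peels off $a_1 \join a_2$ by law~(3) to get the first conjunct $a_1 \rel{A} a_2$; this reaches the conclusion outright, needs no commutativity, and never invokes the \seprel associativity law, so it is marginally more elementary and self-contained. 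Your alternative route --- peel off $a_2 \join a'$ to get $a_2 \rel{A} a'$ and then invoke the sufficiency clause --- is essentially the paper's proof, differing only in which pair is peeled off and in avoiding the commutativity step. Both you and the paper correctly observe that the second hypothesis $a_2 \rel{A} (a_1 \join a')$ is redundant for the trivial relation.
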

\begin{proof}
	Let $a_1 \rel{A} (a_2 \join a')$. Recall that the trivial \seprel is
	given by $x \rel{A} y \eqdef (x \join y) \text{ is defined}$. 
	Hence we obtain that $a_1 \join {(a_2 \join a')}$ is defined, and, using
	commutativity and associativity of join $\join$, we have that $(a_1 \join a') \join
	a_2$ is defined. Thus, by law (3) of Definition \ref{def:TPCM}, also $a_1 \join a'$ is
	defined whence $a_1 \rel{A} a'$. Similarly for $a_2 \rel{A} a'$.
\end{proof}

\noindent Similarly, other basic constructions on \seprel preserve
invertibility. So do the construction on morphisms.
%
%
%
Recall the composition of morphisms, tensor and arrow product in
Definition~\ref{def:constrs}.

\begin{proposition}
\label{p:invcomptimes}
Let $\alpha$, $\beta$ be invertible morphisms. Then
$\comp{\alpha}{\beta}$ and 
$\alpha \times \beta$ 
are invertible morphisms.
\end{proposition}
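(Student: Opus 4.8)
The plan is to verify, for each of the two constructions, the two clauses of Definition~\ref{d:invmorph}: first that the associated \seprel is an invertible \seprel in the sense of Definition~\ref{d:invseprel}, and second that the morphism itself satisfies the inversion property. Throughout I would discharge the invertible-\seprel obligations using the shortcut noted in Definition~\ref{d:invseprel}: given $a_1 \rel{R} (a_2 \join a')$ and $a_2 \rel{R} (a_1 \join a')$, it suffices to exhibit $a_1 \rel{R} a'$ (or $a_2 \rel{R} a'$), since $a_1 \rel{R} a_2 \rel{R} a'$ then follows by associativity of \seprels. The essential idea in both cases is to feed the invertibility hypotheses of $\alpha$ and $\beta$ to each other in the correct order.

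For the composition $\comp{\alpha}{\beta}$, recall that $x \rel{\comp{\alpha}{\beta}} y$ unfolds to $x \rel{\beta} y \land \fapp{\beta}{x} \rel{\alpha} \fapp{\beta}{y}$. To show $\rel{\comp{\alpha}{\beta}}$ is invertible, I would assume $a_1 \rel{\comp{\alpha}{\beta}} (a_2 \join a')$ and $a_2 \rel{\comp{\alpha}{\beta}} (a_1 \join a')$ and aim for $a_1 \rel{\comp{\alpha}{\beta}} a'$. The $\beta$-conjuncts give $a_1 \rel{\beta} (a_2 \join a')$ and $a_2 \rel{\beta} (a_1 \join a')$, so invertibility of $\rel{\beta}$ yields $a_1 \rel{\beta} a_2 \rel{\beta} a'$, and in particular the pairwise separations $a_1 \rel{\beta} a'$ and $a_2 \rel{\beta} a'$. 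These separations let me distribute $\beta$ over the joins inside the $\alpha$-conjuncts, rewriting $\fapp{\beta}{a_1} \rel{\alpha} \fapp{\beta}{(a_2 \join a')}$ as $\fapp{\beta}{a_1} \rel{\alpha} (\fapp{\beta}{a_2} \join \fapp{\beta}{a'})$ and symmetrically. Now invertibility of $\rel{\alpha}$ applied to these two facts gives $\fapp{\beta}{a_1} \rel{\alpha} \fapp{\beta}{a_2} \rel{\alpha} \fapp{\beta}{a'}$, hence $\fapp{\beta}{a_1} \rel{\alpha} \fapp{\beta}{a'}$; combined with $a_1 \rel{\beta} a'$ this is exactly $a_1 \rel{\comp{\alpha}{\beta}} a'$. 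For the inversion property of $\comp{\alpha}{\beta}$ itself, I would assume $a \rel{\comp{\alpha}{\beta}} \unit$ (so $a \rel{\beta} \unit$ and $\fapp{\beta}{a} \rel{\alpha} \unit$) and $\fapp{\alpha}{(\fapp{\beta}{a})} = c_1 \join c_2$, then first invert $\alpha$ to split $\fapp{\beta}{a} = b_1 \join b_2$ with $b_1 \rel{\alpha} b_2$ and $\fapp{\alpha}{b_i} = c_i$, and then invert $\beta$ to split $a = a_1 \join a_2$ with $a_1 \rel{\beta} a_2$ and $\fapp{\beta}{a_i} = b_i$. The pair $a_1, a_2$ is the desired witness: $a_1 \rel{\comp{\alpha}{\beta}} a_2$ holds because $a_1 \rel{\beta} a_2$ and $\fapp{\beta}{a_1} \rel{\alpha} \fapp{\beta}{a_2}$ is just $b_1 \rel{\alpha} b_2$, while $\fapp{(\comp{\alpha}{\beta})}{a_i} = \fapp{\alpha}{b_i} = c_i$.

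For the arrow product $\alpha \times \beta$, everything decouples componentwise, since join on the product is pointwise (Proposition~\ref{prop:prodPCM}) and $(x_1, x_2) \orth_{\alpha\times\beta} (y_1, y_2)$ is $x_1 \orth_\alpha y_1 \land x_2 \orth_\beta y_2$. To show $\orth_{\alpha\times\beta}$ is invertible, I would apply invertibility of $\orth_\alpha$ to the first components and invertibility of $\orth_\beta$ to the second components of the two hypotheses, obtaining $p_1 \orth_\alpha r_1$ and $p_2 \orth_\beta r_2$, which together give $p \orth_{\alpha\times\beta} r$ and discharge the obligation via the Definition~\ref{d:invseprel} shortcut. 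For the inversion property, given $p \orth_{\alpha\times\beta} \unit$ and $(\alpha\times\beta)(p) = (c_1, e_1) \join (c_2, e_2)$, I would read off $\fapp{\alpha}{p_1} = c_1 \join c_2$ and $\fapp{\beta}{p_2} = e_1 \join e_2$, invert $\alpha$ and $\beta$ independently to get splits $p_1 = u_1 \join u_2$ and $p_2 = v_1 \join v_2$, and assemble the witnesses $(u_1, v_1)$ and $(u_2, v_2)$: their pointwise join is $p$, they are $\orth_{\alpha\times\beta}$-separate by construction, and applying $\alpha \times \beta$ returns the two required summands.

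I expect the composition case to be the crux, and the delicate point is the ordering of the two appeals to invertibility. One must extract the $\beta$-level separations $a_i \rel{\beta} a'$ \emph{first}, precisely so that $\beta$ distributes over the joins appearing under $\rel{\alpha}$ before invoking invertibility of $\rel{\alpha}$; without this distribution step the $\alpha$-conjuncts of the hypotheses are not yet in the shape demanded by Definition~\ref{d:invseprel}. By contrast the arrow product is routine because its two components never interact. It is worth noting that this independence is exactly what the tensor product $\tepr{\alpha}{\beta}$ lacks---there the splits produced by inverting $\alpha$ and by inverting $\beta$ act on the \emph{same} argument and need not coincide---which is why $\tepr{\alpha}{\beta}$ is absent from the statement.
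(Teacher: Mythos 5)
Your proposal is correct and takes essentially the same approach as the paper's proof: for $\comp{\alpha}{\beta}$ you invert $\rel{\beta}$ first, use the resulting pairwise separations to distribute $\beta$ under the $\rel{\alpha}$-conjuncts, then invert $\rel{\alpha}$; and for the inversion property you apply invertibility of $\alpha$ and then of $\beta$ and assemble the witnesses exactly as the paper does. The only differences are cosmetic: you spell out the arrow-product case, which the paper dismisses as simple and omits, and your closing remark about why $\tepr{\alpha}{\beta}$ is excluded matches the paper's own observation immediately following its proof.
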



\begin{proof}
  We just show the case for $\circ$ as the one for $\times$ is simple.
  Let $\alpha : C \to A$, $\beta : B \to C$ be invertible morphisms.
First, we show that $\rel{\comp{\alpha}{\beta}}$, the induced \seprel of $\comp{\alpha}{\beta}$, is invertible. 
Consider
$a_1$, $a_2$, $a'$,
such that
$a_1 \rel{\comp{\alpha}{\beta}} (a_2 \join a')$, and
$a_2 \rel{\comp{\alpha}{\beta}} (a_1 \join a')$. 
We need to show that $a_1 \rel{\comp{\alpha}{\beta}}
a_2 \rel{\comp{\alpha}{\beta}} a'$.
From Definition~\ref{def:constrs} of composition, we obtain
$a_1 \rel{\beta} (a_2 \join a') \land \fapp{\beta}{a_1} \rel{\alpha} \papp{\beta}{a_2 \join a'}$, and
$a_2 \rel{\beta} (a_1 \join a') \land \fapp{\beta}{a_2} \rel{\alpha} \papp{\beta}{a_1 \join a'}$.
%
%
Since $\beta$ is an invertible morphism also its \seprel $\rel{\beta}$ is
invertible. We use invertibility of $\rel{\beta}$ and the first conjunct to
obtain that
$a_1 \rel{\beta} a_2 \rel{\beta} a'$.
Now considering the second conjuncts, we get by distributivity of $\beta$,
$\fapp{\beta}{a_1} \rel{\alpha} (\papp{\beta}{a_2} \join \papp{\beta}{a'})$ and 
$\fapp{\beta}{a_2} \rel{\alpha} (\papp{\beta}{a_1} \join \papp{\beta}{a'})$.
Because $\alpha$ is invertible, so is $\rel{\alpha}$. We therefore obtain
$\fapp{\beta}{a_1} \rel{\alpha} \fapp{\beta}{a_2} \rel{\alpha} \fapp{\beta}{a'}$.
Thus $\rel{\comp{\alpha}{\beta}}$ is invertible.

Second, we show that $\comp{\alpha}{\beta}$ is an invertible morphism. 
Assume 
$b \in B$ and $a_1,a_2 \in A$ 
such that
$\fapp{(\comp{\alpha}{\beta})}{b} =
a_1 \join a_2$. Also assume that $b \rel{\comp{\alpha}{\beta}} \unit$;
that is, $b \rel{\beta} \unit$ and
$\fapp{\beta}{b} \rel{\alpha} \unit$.
Using invertibility of $\alpha$ on $\fapp{\beta}{b} \in C$,
we obtain
$c_1,c_2 \in C$, such that
$\fapp{\beta}{b} = c_1 \join c_2$,
$c_1 \rel{\alpha} c_2$,
$\fapp{\alpha}{c_1} = a_1$, and
$\fapp{\alpha}{c_2} = a_2$.
Using invertibility of $\beta $ on $b$ we further obtain
$b_1, b_2 \in B$, such that
$b = b_1 \join b_2$,
$b_1 \rel{\beta} b_2$,
$\fapp{\beta}{b_1} = c_1$, and
$\fapp{\beta}{b_2} = c_2$.
%
Consequently 
$\fapp{\beta}{b_1} \rel{\alpha} \fapp{\beta}{b_2}$.
Hence, using $b_1\rel{\beta}b_2$, 
we obtain $b_1 \rel{\comp{\alpha}{\beta}} b_2$.
Finally,
$\fapp{(\comp{\alpha}{\beta})}{b_1} = \papp{\alpha}{\fapp{\beta}{b_1}} =
\papp{\alpha}{c_1} = a_1$ and similarly for $a_2$. 
Therefore, we have
$b_1, b_2 \in B$ such that 
$b = b_1 \join b_2$,
$b_1 \rel{\comp{\alpha}{\beta}} b_2$,
$\fapp{(\comp{\alpha}{\beta})}{b_1} = a_1$ and
$\fapp{(\comp{\alpha}{\beta})}{b_2} = a_2.$
Hence morphism $\comp{\alpha}{\beta}$ is invertible.
\end{proof}

Notice that $\alpha \otimes \beta$ is an example of a morphism that
isn't necessarily invertible, even if $\alpha$ and $\beta$ are.  By
definition,
$\fapp{(\alpha \otimes \beta)}{x} = (\fapp{\alpha}{x},
\fapp{\beta}{x})$. Thus, if we're given
$\fapp{(\alpha \otimes \beta)}{x} = (y, z)$, we can induce one split
of $x$ by $\alpha$ and $y$, and another by $\beta$ and $z$. However,
there's no reason to expect that these splits are equal, which is
required for $\alpha \otimes \beta$ to be invertible.


We also introduced the notions of kernel and equalizer, which are
\seprels. These illustrate constructions that turn invertible
morphisms into invertible \seprels.

\begin{proposition} 
  \label{p:invkereqlz}
  Let $\alpha$, $\beta$ be morphisms with invertible \seprels. Then
  $\fapp{\ker}{\alpha}$ is an invertible \seprel, while
  $\fapp{\fapp{\eql}{\alpha}}{\beta}$ is so if the range PCM of
  $\alpha$ and $\beta$ is cancellative.\footnote{A PCM is cancellative
    if $a \join b = a \join c$ implies $b = c$, whenever $a \orth b$
    and $a \orth c$.}
\end{proposition}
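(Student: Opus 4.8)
The plan is to verify Definition~\ref{d:invseprel} for each of $\ker\alpha$ and $\eql\,\alpha\,\beta$, exploiting the clause in that definition which lets us discharge invertibility by establishing just $a_1\,R\,a'$ (or $a_2\,R\,a'$). In both cases I would start from the hypotheses $a_1\,R\,(a_2 \join a')$ and $a_2\,R\,(a_1 \join a')$ and immediately strip off the $\rel{\alpha}$- (and, for the equalizer, $\rel{\beta}$-) conjuncts: since $\alpha$ (resp.\ $\beta$) has an invertible \seprel by assumption, invertibility of $\rel{\alpha}$ applied to $a_1 \rel{\alpha} (a_2 \join a')$ and $a_2 \rel{\alpha} (a_1 \join a')$ yields $a_1 \rel{\alpha} a_2 \rel{\alpha} a'$, and hence, by the symmetry and associativity manipulations noted after Definition~\ref{def:seprel}, all the pairwise relations $a_1 \rel{\alpha} a'$, $a_2 \rel{\alpha} a'$, $a_1 \rel{\alpha} a_2$. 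The same reasoning on $\rel{\beta}$ gives the corresponding $\beta$-facts for the equalizer. What then remains is only to recover the \emph{value} conditions that $\ker$ and $\eql$ impose on morphism outputs.

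For $\ker\alpha$, the residual goal is $\alpha(a') = \unit$ (the relation $a_1 \rel{\alpha} a'$ and the equation $\alpha(a_1)=\unit$ are already in hand). I would obtain it by distributing $\alpha$ over $a_2 \join a'$, which is legal because $a_2 \rel{\alpha} a'$ was just established; this gives $\alpha(a_2 \join a') = \alpha(a_2) \join \alpha(a')$. The kernel hypotheses supply $\alpha(a_2 \join a') = \unit$ and $\alpha(a_2) = \unit$, so $\unit = \unit \join \alpha(a') = \alpha(a')$. Together with $\alpha(a_1)=\unit$ and $a_1 \rel{\alpha} a'$ this is exactly $a_1\,(\ker\alpha)\,a'$, which suffices.

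For $\eql\,\alpha\,\beta$, the residual goal is $\alpha(a') = \beta(a')$ (the conditions $\alpha(a_1)=\beta(a_1)$, $a_1 \rel{\alpha} a'$ and $a_1 \rel{\beta} a'$ are already available). Here I would distribute both morphisms over $a_2 \join a'$, valid since $a_2 \rel{\alpha} a'$ and $a_2 \rel{\beta} a'$, to get $\alpha(a_2) \join \alpha(a') = \alpha(a_2 \join a')$ and $\beta(a_2)\join\beta(a') = \beta(a_2 \join a')$. The equalizer hypotheses give $\alpha(a_2 \join a') = \beta(a_2 \join a')$ and $\alpha(a_2) = \beta(a_2)$, so writing $c = \alpha(a_2) = \beta(a_2)$ I arrive at $c \join \alpha(a') = c \join \beta(a')$. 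This is where cancellativity is essential: distributivity of the morphisms over $a_2 \rel{\alpha} a'$ and $a_2 \rel{\beta} a'$ also yields $c \orth \alpha(a')$ and $c \orth \beta(a')$, so cancelling $c$ gives $\alpha(a') = \beta(a')$, completing $a_1\,(\eql\,\alpha\,\beta)\,a'$.

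The main obstacle, and the reason for the cancellativity side-condition, is exactly this final cancellation step: without it, $c \join \alpha(a') = c \join \beta(a')$ does not force $\alpha(a') = \beta(a')$. The kernel case avoids the issue precisely because its common value $c$ is $\unit$, which is trivially cancellable ($\unit \join x = x$); indeed $\ker\alpha$ is the equalizer of $\alpha$ against the always-unit morphism, and this specialization is what removes the cancellativity requirement. I would also double-check that the disjointness premises needed for the cancellation ($c \orth \alpha(a')$ and $c \orth \beta(a')$) are genuinely delivered by clause~(3) of Definition~\ref{def:pcmmorph}, since that is the only place the morphism structure of $\alpha$ and $\beta$, as opposed to the \seprel structure, enters the argument.
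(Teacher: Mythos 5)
Your proposal is correct and follows essentially the same route as the paper's proof: use invertibility of the component \seprels $\rel{\alpha}$, $\rel{\beta}$ to recover the pairwise relational facts, then distribute the morphisms over the join and cancel the common value to obtain the residual equation, with the kernel case escaping cancellativity precisely because the value being cancelled is $\unit$. The only (cosmetic) difference is that the paper proves just the equalizer case and dismisses the kernel as the special case $\beta = $ always-unit morphism, whereas you work the kernel out directly and note the specialization afterwards; your extra care in checking the disjointness premises $c \orth \fapp{\alpha}{a'}$ and $c \orth \fapp{\beta}{a'}$ needed for cancellation is left implicit in the paper.
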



\begin{proof}
  We show the proof for equalizers. A kernel is a special case of an
  equalizer when one of the morphisms is the always-unit one, which
  circumvents the need for cancellativity.
  Consider $x$, $y$, $z$, such that
  $x\,(\fapp{\fapp{\eql}{\alpha}}{\beta})\,(y \join z)$ and
  $y\,(\fapp{\fapp{\eql}{\alpha}}{\beta})\,(x \join z)$. By
  Definition~\ref{d:invseprel}, it suffices to show
  $y\,(\fapp{\fapp{\eql}{\alpha}}{\beta})\,z$; that is
  $y \rel{\alpha} z \wedge y \rel{\beta} z \wedge \fapp{\alpha}{y} =
  \fapp{\beta}{y} \wedge \fapp{\alpha}{z} = \fapp{\beta}{z}$.
  From the assumptions, we get
  $x \rel{\alpha} (y \join z) \land x \rel{\beta} (y \join z) \wedge
  \fapp{\alpha}{x} = \fapp{\beta}{x} \wedge \papp{\alpha}{y \join z} =
  \papp{\beta}{y \join z}$, and
  $y \rel{\alpha} (x \join z) \land y \rel{\beta} (x \join z) \wedge
  \fapp{\alpha}{y} = \fapp{\beta}{y} \wedge \papp{\alpha}{x \join z} =
  \papp{\beta}{x \join z}$.
  Since $\rel{\alpha}$ and $\rel{\beta}$ are both invertible \seprels,
  this obtains $y \rel{\alpha} z \wedge y \rel{\beta} z$.
  Thus, we can distribute $\alpha$ and $\beta$ over $y \join z$ to
  derive:
  $\fapp{\alpha}{y} \join \fapp{\alpha}{z} = \fapp{\beta}{y} \join
  \fapp{\beta}{z}$. Since we already have
  $\fapp{\alpha}{y} = \fapp{\beta}{y}$, we apply cancellativity to
  derive $\fapp{\alpha}{z} = \fapp{\beta}{z}$ and conclude the proof.
\end{proof}

\subsubsection*{Sub-PCM}

Section~\ref{s:examplesubpcm} demonstrates how to use the sub-PCM
construction to provide a compact spec. First, we start with a spec like the
following:
\[
e : \spec{\lambda s \ldot \alphaS{s} = b_1 \land \relAlpha {\cs{s}} {\co{s}}} 
    \spec{\lambda s \ldot \alphaS{s} = b'_1 \land \relAlpha {\cs{s}} {\co{s}}} 
\]
Using the sub-PCM construction, we can write the spec compactly 
as follows:
\[
e : \spec{\lambda s \ldot \fapp{\cs{(\comp{\alpha}{\iota})}}{s} = b_1}
    \spec{\lambda s \ldot \fapp{\cs{(\comp{\alpha}{\iota})}}{s} = b'_1}
\]
Note that, implicitly, we also have that $\valid {\cs{s}} {\co{s}}$. 
We show the following theorem that states that invertibility is preserved by
such construction:
\begin{theorem}
\label{thm:iota}
Let $\alpha : A \to B$ be an invertible morphism and let
$\iota : A/{\rel{\alpha}} \to A$ be a sub-PCM injection. Then
$\comp{\alpha}{\iota} : A/{\rel{\alpha}} \to B$ is invertible.
\end{theorem}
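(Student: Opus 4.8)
The plan is to verify the two clauses of Definition~\ref{d:invmorph} for $\comp{\alpha}{\iota}$ \emph{directly}, transporting invertibility from $\alpha$, rather than attempting to invoke Proposition~\ref{p:invcomptimes}. The first step is bookkeeping: unfolding the sub-PCM construction of Section~\ref{s:subpcm} together with the definition of composition from Definition~\ref{def:constrs}, I would observe that on a defined element $x \in A/{\rel{\alpha}}$ the map $\comp{\alpha}{\iota}$ acts simply as $\fapp{\alpha}{x}$, since $\iota$ is the identity on carriers, and that its induced \seprel $\rel{\comp{\alpha}{\iota}}$ coincides with $\rel{\alpha}$ on the sub-PCM carrier (because $\rel{\iota}$ is the trivial \seprel of $A/{\rel{\alpha}}$, which relates two elements exactly when they are $\rel{\alpha}$-related). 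The essential compatibility I will use repeatedly is that $x \join_{A/{\rel{\alpha}}} y$ agrees with $x \join_A y$ precisely when $x \rel{\alpha} y$. Hence invertibility of $\comp{\alpha}{\iota}$ ought to reduce to invertibility of $\alpha$.

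For the first clause, invertibility of the \seprel $\rel{\comp{\alpha}{\iota}}$: given sub-PCM elements $a_1, a_2, a'$ with $a_1 \rel{\comp{\alpha}{\iota}} (a_2 \join a')$ and $a_2 \rel{\comp{\alpha}{\iota}} (a_1 \join a')$ (joins in $A/{\rel{\alpha}}$), I would first note that each premise forces the inner sub-PCM join to be defined, hence $a_2 \rel{\alpha} a'$ and $a_1 \rel{\alpha} a'$, and that these sub-PCM joins then equal the corresponding $A$-joins. Rewriting the premises as $a_1 \rel{\alpha} (a_2 \join_A a')$ and $a_2 \rel{\alpha} (a_1 \join_A a')$, I apply invertibility of $\rel{\alpha}$ (available since $\alpha$ is an invertible morphism) to conclude $a_1 \rel{\alpha} a_2 \rel{\alpha} a'$, which is exactly $a_1 \rel{\comp{\alpha}{\iota}} a_2 \rel{\comp{\alpha}{\iota}} a'$ once the agreement of joins is used once more.

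For the second clause, inverse distribution: given $a \in A/{\rel{\alpha}}$ with $a \rel{\comp{\alpha}{\iota}} \unit$ (i.e. $a \rel{\alpha} \unit$) and $b_1, b_2 \in B$ with $\fapp{\alpha}{a} = b_1 \join b_2$, I apply invertibility of $\alpha$ directly to obtain $a_1, a_2 \in A$ with $a = a_1 \join_A a_2$, $a_1 \rel{\alpha} a_2$, $\fapp{\alpha}{a_1} = b_1$, and $\fapp{\alpha}{a_2} = b_2$. The remaining point is that $a_1, a_2$ genuinely lie in the sub-PCM: this follows from $a_1 \rel{\alpha} a_2$ together with the definedness law of \seprels (Definition~\ref{def:seprel}), giving $a_1 \rel{\alpha} \unit$ and, by symmetry, $a_2 \rel{\alpha} \unit$. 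Since $a_1 \rel{\alpha} a_2$, the sub-PCM join $a_1 \join_{A/{\rel{\alpha}}} a_2$ equals $a_1 \join_A a_2 = a$, and $a_1 \rel{\comp{\alpha}{\iota}} a_2$, so $a_1, a_2$ are the required witnesses.

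The main subtlety, worth flagging explicitly, is why this cannot be obtained merely by applying Proposition~\ref{p:invcomptimes} to $\alpha$ and $\iota$: the injection $\iota$ is in general \emph{not} an invertible morphism. Its inverse-distribution clause would demand that every $A$-split $a = b_1 \join_A b_2$ of an $\alpha$-defined element again respect $\rel{\alpha}$, which can fail (for instance, for $\rel{\unit_A}$ over a PCM whose unit admits nontrivial splits, where the sub-PCM collapses to $\{\unit\}$). Routing through the invertibility of $\alpha$ itself is precisely what avoids this: the split $\alpha$ hands back is guaranteed to satisfy $a_1 \rel{\alpha} a_2$, hence lands inside the sub-PCM. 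This is the crux that makes the direct argument succeed where the naive compositional one does not.
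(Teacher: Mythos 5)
Your proof is correct, and its core is the same as the paper's: for the inverse-distribution clause both arguments invoke invertibility of $\alpha$ to split $a = a_1 \join_A a_2$ with $a_1 \rel{\alpha} a_2$ and $\fapp{\alpha}{a_i} = b_i$, and then use the \seprel laws of Definition~\ref{def:seprel} (definedness and symmetry) to conclude that $a_1, a_2$ lie in $A/{\rel{\alpha}}$, so that the sub-PCM join and the composite's \seprel agree with their $A$-counterparts. You differ only on the first clause: you transport invertibility of $\rel{\alpha}$ through the identification of $\rel{\comp{\alpha}{\iota}}$ with the restriction of $\rel{\alpha}$ to the sub-PCM carrier, whereas the paper observes that $\comp{\alpha}{\iota}$ is a \emph{total} morphism---its \seprel is the trivial \seprel $\rel{A/{\rel{\alpha}}}$---and cites Proposition~\ref{p:trivinv}; the paper's route is marginally more economical and shows this clause needs only that $\rel{\alpha}$ is a \seprel (so that $A/{\rel{\alpha}}$ is a PCM), not that it is invertible. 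Your closing observation is also correct and matches what the paper leaves implicit: $\iota$ need not be an invertible morphism, since its inverse-distribution clause would force \emph{every} $A$-split of a sub-PCM element to be $\rel{\alpha}$-related. In fact the paper's own \seprel $\rel{\omega}$ from Section~\ref{sec:spinex} witnesses this under the theorem's actual hypotheses (with $\alpha = \omega$ invertible): the history $(1 \hmapsto \lockval) \join (2 \hmapsto \unlockval)$ satisfies $\rel{\omega}$-definedness, yet its two singleton parts are not $\rel{\omega}$-related; this is a sharper counterexample than your $\rel{\unit_A}$ one, which does not obviously arise as the \seprel of an invertible morphism. Either way, this is exactly why the paper, like you, proves the theorem directly rather than by appeal to Proposition~\ref{p:invcomptimes}.
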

\begin{proof}
  Recall the sub-PCM construction in Section~\ref{s:subpcm} and use
  $\rel{\alpha}$ as the \seprel for the construction of sub-PCM.  Then
  $\fapp{\iota}{x} \eqdef x$ and both $\iota$ and
  $\comp{\alpha}{\iota}$ are total morphisms, with the \seprel
  $\rel{A/\rel{\alpha}}$. This \seprel is a restriction of
  $\rel{\alpha}$ to the set
  $A/{\rel{\alpha}} = \{a \in A \mid a \rel{\alpha} \unit_A\}$. The
  \seprel is also trivial and thus invertible, by
  Proposition~\ref{p:trivinv}.
%
%

Now we proceed with the proof of invertibility itself.  Assume that we
are given $a \in A/{\rel{\alpha}}$ and $b_1, b_2 \in B$ such that
$\fapp{(\comp{\alpha}{\iota})}{a} = \papp{\alpha}{\fapp{\iota}{a}} =
\fapp{\alpha}{a} = b_1 \join b_2$ and $a \rel{A/{\rel{\alpha}}} \unit$.
The second conjunct implies $a \rel{\alpha} \unit_A$.
Now, because $\alpha$ is invertible, there exist $a_1, a_2 \in A$ such
that $a = a_1 \join_A a_2$, $\relAlpha a_1 a_2$,
$\fapp{\alpha}{a_1} = b_1$, and $\fapp{\alpha}{a_2} = b_2$.
But, because $a_1 \rel{\alpha} a_2$ it follows that
$a_1, a_2 \in A/{\rel{\alpha}}$,
$a = a_1 \join_{A/{\rel{\alpha}}} a_2$, and
$a_1 \rel{A /{\rel{\alpha}}} a_2$.  Since also
$\fapp{(\comp{\alpha}{\iota})}{a_i} = \fapp{\alpha}{a_i} = b_i$, the
morphism $\comp{\alpha}{\iota}$ is invertible.
\end{proof}

We can now show that the morphisms and \seprels used in our abstract
specs (both the intermediate and final one) of ticket lock are invertible. 

\begin{lemma}
\label{lem:alphaseprel}
 The \seprel $\rel{\alpha}$ from (\ref{eqn:seprelalpha}) is invertible. 
\end{lemma}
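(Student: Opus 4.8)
The plan is to unfold Definition~\ref{d:invseprel} for the relation $R = \rel{\alpha}$ defined in~(\ref{eqn:seprelalpha}) and then exploit two facts: that $\numCurrentX$ is a (total) morphism and hence distributes over $\join$ on separate inputs, and that the defined-ness laws of topped PCMs (Definition~\ref{def:TPCM}) let us reshuffle $\orth$. Recall that $\sigma$ is the identity morphism, so the relation reads $x \rel{\alpha} y \eqdef \numCurrent{x} + \numCurrent{y} \leq 1 \land x \orth y$.

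First I would fix $a_1$, $a_2$, $a'$ and assume the two invertibility hypotheses $a_1 \rel{\alpha} (a_2 \join a')$ and $a_2 \rel{\alpha} (a_1 \join a')$. By the remark following Definition~\ref{d:invseprel}, it suffices to establish a single disjunct, say $a_1 \rel{\alpha} a'$; the full conclusion $a_1 \rel{\alpha} a_2 \rel{\alpha} a'$ then follows by associativity of \seprels. In fact only the first hypothesis is needed here, since the count condition is symmetric in its two arguments and the two hypotheses collapse to the same inequality.

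Next I would split $a_1 \rel{\alpha} (a_2 \join a')$ into its two conjuncts: the count bound $\numCurrent{a_1} + \numCurrent{a_2 \join a'} \leq 1$ and the disjointness $a_1 \orth (a_2 \join a')$. From the latter, law~(3) of Definition~\ref{def:TPCM} gives that $a_2 \join a'$ is defined, so $\numCurrentX$ may be distributed over it, rewriting the count bound to $\numCurrent{a_1} + \numCurrent{a_2} + \numCurrent{a'} \leq 1$. Since all summands are non-negative, I immediately obtain the weaker $\numCurrent{a_1} + \numCurrent{a'} \leq 1$, which is the count half of $a_1 \rel{\alpha} a'$. For the disjointness half, I would reassociate $a_1 \join (a_2 \join a') = (a_1 \join a') \join a_2$ by commutativity and associativity of $\join$, and apply law~(3) of Definition~\ref{def:TPCM} once more to conclude that $a_1 \join a'$ is defined, i.e.\ $a_1 \orth a'$. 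Combining both halves yields $a_1 \rel{\alpha} a'$, discharging the obligation.

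I expect no serious obstacle; the argument is essentially bookkeeping over the count inequality and the defined-ness laws. The one point that requires care is the side condition for distributing $\numCurrentX$ over $a_2 \join a'$: one must first note that $a_2 \join a'$ is defined (a consequence of $a_1 \orth (a_2 \join a')$ via law~(3)), so that $\numCurrentX$ is applied to a genuine element rather than to $\undefOp$ and its distributivity law actually applies.
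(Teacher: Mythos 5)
Your proof is correct and follows essentially the same route as the paper's: unfold $\rel{\alpha}$, use the morphism property of $\numCurrentX$ to distribute over the join and obtain the three-way count bound, drop the middle summand, recover $a_1 \orth a'$ by reassociating $\join$ via the definedness laws, and conclude $a_1 \rel{\alpha} a_2 \rel{\alpha} a'$ by associativity of \seprels. Your additional observation that the two invertibility hypotheses collapse into one (so the first alone yields the disjunct $a_1 \rel{\alpha} a'$) is sound and slightly sharper than the paper's phrasing, which invokes both disjointness conjuncts, but it does not change the substance of the argument.
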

\begin{proof}
Assume that there are
$a_1, a_2, a'$ such that
$\relAlpha a_1 {(a_2 \join a')}$ and
$\relAlpha a_2 {(a_1 \join a')}$.
Using the definition of $\rel{\alpha}$, we obtain
$\numCurrent{\fapp{\sigma}{a_1}} + \numCurrent{\fapp{\sigma}{(a_2 \join a')}}
\leq 1 \land a_1 \orth (a_2 \join a')$ and
$\numCurrent{\fapp{\sigma}{a_2}} + \numCurrent{\fapp{\sigma}{(a_2 \join a')}}
\leq 1 \land a_2 \orth (a_1 \join a')$.
This gives us,
using the second conjuncts and commutativity and associativity of $\join$ as
in the proof of Proposition~\ref{p:trivinv},
that $a_1 \orth a_2 \orth a'$. Further, since $\numCurrent{-}$ and
$\sigma$ are morphisms, we obtain, using either of the first conjuncts,
$\numCurrent{\fapp{\sigma}{a_1}} + 
\numCurrent{\fapp{\sigma}{a_2}}  + 
\numCurrent{\fapp{\sigma}{a'}}
\leq 1$.
Therefore 
$\numCurrent{\fapp{\sigma}{a_1}} + 
\numCurrent{\fapp{\sigma}{a'}}
\leq 1$ 
and we conclude that
$\relAlpha a_1 a'$.
The rest follows from associativity of \seprels.
\end{proof}

\begin{lemma}
\label{lem:alphaprime}
 The morphism $\alpha$ from (\ref{eqn:morphalpha}) is invertible. 
\end{lemma}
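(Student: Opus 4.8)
The plan is to verify the two clauses of Definition~\ref{d:invmorph} directly. The first clause requires that the associated \seprel $\rel{\alpha}$ from~(\ref{eqn:seprelalpha}) be an invertible \seprel; but this is precisely the content of Lemma~\ref{lem:alphaseprel}, which has just been established, so that clause is discharged immediately. All the work therefore concentrates on the second clause: fixing $a \in U$ with $a \rel{\alpha} \unit$ and $b_1, b_2 \in O$ with $\fapp{\alpha}{a} = b_1 \join b_2$, I must exhibit a split $a = a_1 \join a_2$ with $a_1 \rel{\alpha} a_2$, $\fapp{\alpha}{a_1} = b_1$, and $\fapp{\alpha}{a_2} = b_2$.

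The key observation is that the target PCM $O$ is extremely rigid: $\nown$ is its unit while $\own \join \own$ is undefined. Since $a \rel{\alpha} \unit$ entails $a \orth \unit$ by the definedness and strengthening laws of \seprels, the element $a$ is a genuine (defined) ticket map, so $\fapp{\alpha}{a} \in \{\own, \nown\}$ is itself defined; consequently $b_1 \join b_2$ is defined, which rules out the sole problematic decomposition $b_1 = b_2 = \own$. Hence at least one of $b_1, b_2$ equals $\nown$, and this suggests always using the \emph{trivial} split in which one summand is all of $a$ and the other is the empty map $\unit$, with the $\own$-valued summand (if any) assigned $a$. Concretely I would case split on $\fapp{\alpha}{a}$: if $\fapp{\alpha}{a} = \nown$ then $b_1 = b_2 = \nown$ and I take $(a_1, a_2) = (a, \unit)$; if $\fapp{\alpha}{a} = \own$ then exactly one of $b_1, b_2$ is $\own$, and I put $a$ on that side and $\unit$ on the other. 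In each case $a = a_1 \join a_2$ holds because $a \join \unit = a$, and $\fapp{\alpha}{a_i} = b_i$ follows from the definition of $\alpha$ together with $\fapp{\alpha}{\unit} = \nown$.

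It then remains to check $a_1 \rel{\alpha} a_2$, i.e.\ $\numCurrent{\fapp{\sigma}{a_1}} + \numCurrent{\fapp{\sigma}{a_2}} \leq 1$ and $a_1 \orth a_2$. Recalling that $\sigma$ is the identity morphism (so $\fapp{\sigma}{a} = a$) and that $\numCurrent{\fapp{\sigma}{\unit}} = 0$, one summand being $\unit$ reduces both requirements to $\numCurrent{\fapp{\sigma}{a}} \leq 1$ and $a \orth \unit$, which are exactly the two conjuncts furnished by the hypothesis $a \rel{\alpha} \unit$. I do not anticipate any genuine obstacle here: the argument hinges entirely on recognizing that the premise $a \rel{\alpha} \unit$ supplies precisely the numerical bound that makes the trivial split separating, and that the algebra of $O$ forecloses the only bad decomposition. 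The one point I would take care to record explicitly is why $\fapp{\alpha}{a}$ is defined, since that is what excludes the case $b_1 = b_2 = \own$ and keeps the case analysis honest.
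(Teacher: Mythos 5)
Your proposal is correct and follows essentially the same route as the paper's own proof: both discharge the \seprel clause via Lemma~\ref{lem:alphaseprel} and then, by a case analysis on $\fapp{\alpha}{a} = b_1 \join b_2$ being $\own$ or $\nown$, use the trivial split $(a, \unit)$ with $a$ placed on the $\own$-valued side, noting that $a \rel{\alpha} \unit$ is exactly the required \separateness of the split. Your explicit justification of why the case $b_1 = b_2 = \own$ is excluded (definedness of $\fapp{\alpha}{a}$) is a point the paper leaves implicit, but it is the same argument.
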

\begin{proof}
The \seprel $\rel{\alpha}$ is invertible by Lemma~\ref{lem:alphaseprel}.
Now, assume that there are $a \in U$ and $b_1, b_2 \in O$ such that
$\fapp{\alpha}{a} = b_1 \join b_2$ and $a \rel{\alpha} \unit$.
We must show there exist $a_1, a_2\in U$, such that $a=a_1\join a_2$,
$\relAlpha a_1 a_2$, $\alpha\;a_1=b_1$, and $\alpha\;a_2=b_2$.
Proceed by case analysis on $b_1 \join b_2$.

\emph{Case:} $b_1 \join b_2 = \own$. 
W.l.o.g. $b_1 = \own$ and $b_2 = \nown$. Choose $a_1=a$, $a_2=\unit$.
Then trivially $a=a_1\join a_2$ and by assumption $\relAlpha a_1
a_2$. Also, $\fapp{\alpha}{a_1} = \fapp{\alpha}{a} = \own = b_1$ and
$\fapp{\alpha}{a_2} = \fapp{\alpha}{\unit} = \unit_O = \nown = b_2$.

\emph{Case:} $b_1 \join b_2 = \nown$. Then
$b_1 = b_2 = \nown$. Choose $a_1=a$, $a_2=\unit$. Again trivially
$a=a_1\join a_2$ and by assumption $\relAlpha a_1 a_2$. Also,
$\fapp{\alpha}{a_1} = \fapp{\alpha}{a} = \nown = b_1$ and 
$\fapp{\alpha}{a_2} = \fapp{\alpha}{\unit} = \unit_O = \nown = b_2$.
\end{proof}

\begin{corollary}
  The morphism $\alpha' = \alpha \join \iota_{\rTicket}$ from
  Section~\ref{s:examplesubpcm} is invertible.
\end{corollary}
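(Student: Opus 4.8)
The plan is to recognize that this corollary is nothing more than a direct instantiation of Theorem~\ref{thm:iota}, with no genuinely new work required. Recall that in Section~\ref{s:examplesubpcm} we set $\UTicket = U/{\rel{\alpha}}$, defined the sub-PCM injection $\iota_\rTicket : \UTicket \to U$, and put $\alpha' = \comp{\alpha}{\iota_\rTicket}$ (the ``$\join$'' appearing in the displayed statement is a slip for the composition $\circ$ fixed there). So the object whose invertibility we must establish is precisely the composite $\comp{\alpha}{\iota_\rTicket} : U/{\rel{\alpha}} \to O$.

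First I would assemble the two substantive inputs, both already proved. Lemma~\ref{lem:alphaprime} gives that the morphism $\alpha : U \to O$ of~(\ref{eqn:morphalpha}) is invertible; this is the single nontrivial hypothesis the corollary consumes, and it in turn rests on Lemma~\ref{lem:alphaseprel}, which supplies invertibility of the associated \seprel $\rel{\alpha}$ of~(\ref{eqn:seprelalpha}). With $\alpha$ invertible in hand, the injection $\iota_\rTicket$ is by construction the canonical sub-PCM injection out of $U/{\rel{\alpha}}$, exactly the injection named $\iota$ in the statement of Theorem~\ref{thm:iota}.

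Then I would apply Theorem~\ref{thm:iota} with the instantiation $A := U$ and $B := O$. The theorem states that whenever $\alpha : A \to B$ is an invertible morphism and $\iota : A/{\rel{\alpha}} \to A$ is the sub-PCM injection, the composite $\comp{\alpha}{\iota}$ is invertible. Feeding in our $\alpha$ and $\iota_\rTicket$ immediately yields that $\comp{\alpha}{\iota_\rTicket} = \alpha'$ is invertible, which is the claim.

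I do not expect any real obstacle here: the entire content is bookkeeping, namely confirming that the $\alpha'$ named in the corollary matches the composite appearing in Theorem~\ref{thm:iota} and that the range PCM is $O$. All the mathematical force was already expended in proving Theorem~\ref{thm:iota} (the preservation of invertibility under the sub-PCM construction) and Lemma~\ref{lem:alphaprime} (invertibility of $\alpha$ itself); the corollary merely records their combination.
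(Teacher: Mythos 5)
Your proposal is correct and matches the paper's intent exactly: the corollary is stated immediately after Theorem~\ref{thm:iota} and Lemma~\ref{lem:alphaprime} precisely so that it follows by instantiating the theorem (with $A := U$, $B := O$, $\iota := \iota_\rTicket$) using the invertibility of $\alpha$ supplied by the lemma. You are also right that the $\join$ in the corollary's statement is a typo for the composition $\comp{\alpha}{\iota_\rTicket}$ defined in Section~\ref{s:examplesubpcm}.
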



The same holds also for the morphism and the separating relation we discussed
in Section~\ref{sec:spinex}:

\begin{lemma}
\label{lem:omegaseprel}
 The \seprel $\rel{\omega}$ from (\ref{eqn:seprelomega}) is invertible. 
\end{lemma}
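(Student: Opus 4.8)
The plan is to apply Definition~\ref{d:invseprel} directly. Renaming the three elements to histories $x_1, x_2, x'$, I assume $\relOmega {x_1} {(x_2 \join x')}$ and $\relOmega {x_2} {(x_1 \join x')}$, and by the closing remark of Definition~\ref{d:invseprel} it suffices to establish the single relation $\relOmega {x_1} {x'}$: the full conclusion $x_1 \rel{\omega} x_2 \rel{\omega} x'$ then follows from associativity of \seprels. Unfolding (\ref{eqn:seprelomega}), proving $\relOmega {x_1} {x'}$ splits into three obligations: disjointness $x_1 \orth x'$, the ``lock'' clause for $x_1$, and the ``lock'' clause for $x'$. Disjointness is immediate: from $x_1 \orth (x_2 \join x')$ in the first hypothesis, rewriting $x_1 \join (x_2 \join x') = (x_1 \join x') \join x_2$ by commutativity and associativity of $\join$ and applying law (3) of Definition~\ref{def:TPCM} gives that $x_1 \join x'$ is defined.

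For the lock clause on $x_1$, I would fix $t$ with $\papp{x_1}{t} = \lockval$ and invoke the first conjunct of $\relOmega {x_1} {(x_2 \join x')}$, which yields $\lastKey{x_1 \join x_2 \join x'} \leq t$ or $\papp{x_1}{t+1} = \unlockval$. The second disjunct is exactly what is required; in the first, since $\dom{x_1 \join x'} \subseteq \dom{x_1 \join x_2 \join x'}$, monotonicity of $\lastKey$ under sub-maps gives $\lastKey{x_1 \join x'} \leq \lastKey{x_1 \join x_2 \join x'} \leq t$. This part is routine.

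The crux, which I expect to be the main obstacle, is the lock clause on $x'$. Fixing $t$ with $\papp{x'}{t} = \lockval$, disjointness gives $\papp{(x_2 \join x')}{t} = \lockval$, so the \emph{second} conjunct of $\relOmega {x_1} {(x_2 \join x')}$ yields $\lastKey{x_1 \join x_2 \join x'} \leq t$ or $\papp{(x_2 \join x')}{t+1} = \unlockval$. The first disjunct again gives $\lastKey{x' \join x_1} \leq t$ by sub-map monotonicity. In the second disjunct, since $x_2 \orth x'$, exactly one of $\papp{x'}{t+1} = \unlockval$ or $\papp{x_2}{t+1} = \unlockval$ holds; the former is the desired conclusion. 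The latter is the problematic case, and here I would derive a contradiction from the \emph{other} hypothesis: from $\papp{x'}{t} = \lockval$ we also have $\papp{(x_1 \join x')}{t} = \lockval$, so the second conjunct of $\relOmega {x_2} {(x_1 \join x')}$ forces $\lastKey{x_1 \join x_2 \join x'} \leq t$ or $\papp{(x_1 \join x')}{t+1} = \unlockval$. But $\papp{x_2}{t+1} = \unlockval$ puts $t+1 \in \dom{x_2}$, so $\lastKey{x_1 \join x_2 \join x'} \geq t+1 > t$, refuting the first disjunct; and disjointness of $x_2$ from both $x_1$ and $x'$ makes $\papp{(x_1 \join x')}{t+1}$ undefined, refuting the second. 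Hence this case cannot arise, and the clause holds. The key insight is that the configuration ``$x'$ locks while $x_2$ unlocks'' violates mutual exclusion and is ruled out precisely by the symmetric hypothesis on $x_2$.

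With $\relOmega {x_1} {x'}$ established (whence $\relOmega {x'} {x_1}$ by symmetry), I would finish by associativity. The second hypothesis symmetrizes to $(x' \join x_1) \rel{\omega} x_2$, so applying law (5) of Definition~\ref{def:seprel} with arguments $x', x_1, x_2$ (premises $\relOmega {x'} {x_1}$ and $(x' \join x_1) \rel{\omega} x_2$) yields $\relOmega {x'} {(x_1 \join x_2)}$ and $\relOmega {x_1} {x_2}$. By symmetry the former is $(x_1 \join x_2) \rel{\omega} x'$, so together these give $x_1 \rel{\omega} x_2 \rel{\omega} x'$, as required. The one subtlety in this last step is the variable assignment: grouping $x'$ with $x_1$ first is what extracts $\relOmega {x_1} {x_2}$, whereas the naive grouping of $x_1$ with $x'$ merely reproduces the hypotheses.
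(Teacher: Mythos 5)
Your proof is correct and takes essentially the same approach as the paper's: reduce via the closing remark of Definition~\ref{d:invseprel} to establishing $x_1 \rel{\omega} x'$, handle the lock clause for $x'$ by case analysis on the hypothesis $x_1 \rel{\omega} (x_2 \join x')$, and eliminate the case $\papp{x_2}{t+1} = \unlockval$ as contradicting $x_2 \rel{\omega} (x_1 \join x')$. You simply spell out what the paper elides --- the disjointness and $x_1$-clause obligations, the details of the contradiction, and the final associativity bookkeeping.
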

\begin{proof}
  Assuming $\relOmega a_1 {(a_2 \join a')}$ and
  $\relOmega a_2 {(a_1 \join a')}$, by associativity of
  $\relOmega {} {}$, it suffices to establish $\relOmega a_1 a'$. In
  Definition~\ref{d:invseprel}, we only consider the clause whereby
  $\papp{a'}{t} = \lockval$ implies $\lastKey{a_1 \join a'} \leq t$ or
  $\papp{a'}{t+1} = \unlockval$. From $\papp{a'}{t} = \lockval$, it
  follows that $\papp{(a_2 \join a')}{t} = \lockval$. Therefore,
  $\relOmega a_1 {(a_2 \join a')}$ derives that
  $\lastKey{a_1 \join (a_2 \join a')} \leq t$ or
  $\papp{(a_2 \join a')}{t+1} = \unlockval$. In the first case, it
  must also be
  $\lastKey{a_1 \join a'} \leq \lastKey {a_1 \join (a_2 \join a')}
  \leq t$, which completes the proof. In the second case, it can be
  either $\papp{a'}{t+1} = \unlockval$ or
  $\papp{a_2}{t+1} = \unlockval$. The first case also completes the
  proof. The second case contradicts the assumption
  $\relOmega a_2 {(a_1 \join a')}$, and is thus impossible.
%
\end{proof}

\begin{lemma}
\label{lem:omega}
 The morphism $\omega$ from (\ref{eqn:morphomega}) is invertible. 
\end{lemma}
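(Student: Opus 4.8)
The plan is to discharge the two requirements of Definition~\ref{d:invmorph} directly, mirroring the proof of Lemma~\ref{lem:alphaprime}. The first requirement---that the \seprel $\rel{\omega}$ be an invertible \seprel---is already supplied by Lemma~\ref{lem:omegaseprel}. Hence it remains only to establish the inverse-image splitting property: given $h \in \Hist$ with $h \rel{\omega} \unit$, and $b_1, b_2 \in O$ with $\fapp{\omega}{h} = b_1 \join b_2$, I must produce $h_1, h_2$ such that $h = h_1 \join h_2$, $h_1 \rel{\omega} h_2$, $\fapp{\omega}{h_1} = b_1$, and $\fapp{\omega}{h_2} = b_2$.

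The key observation is that $O$ has only two elements and $\nown$ is its unit, so $b_1 \join b_2$ is defined precisely when at least one operand is $\nown$; it equals $\own$ exactly when one operand is $\own$ (the other being $\nown$), and $\nown$ when both are $\nown$. I would therefore case split on the value $\fapp{\omega}{h} = b_1 \join b_2$. In every case I take the \emph{trivial split} $h_1 = h$ and $h_2 = \unitHist$, exactly as in Lemma~\ref{lem:alphaprime}. Then $h = h_1 \join h_2$ holds because $\unitHist$ is the unit, and $h_1 \rel{\omega} h_2$ reduces to $h \rel{\omega} \unit$, which is precisely the hypothesis. Since $\omega$ preserves the unit, $\fapp{\omega}{h_2} = \fapp{\omega}{\unitHist} = \nown$.

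It then remains to match $\fapp{\omega}{h_1}$ and $\fapp{\omega}{h_2}$ against $b_1$ and $b_2$. When $b_1 \join b_2 = \nown$, both summands equal $\nown$, and indeed $\fapp{\omega}{h} = \nown = b_1$ and $\fapp{\omega}{\unitHist} = \nown = b_2$. When $b_1 \join b_2 = \own$, one summand is $\own$ and the other $\nown$; taking w.l.o.g.\ $b_1 = \own$ and $b_2 = \nown$ (otherwise swap the roles of $h_1$ and $h_2$), we have $\fapp{\omega}{h} = \own = b_1$ and $\fapp{\omega}{\unitHist} = \nown = b_2$, closing both cases. I expect no genuine obstacle here: the argument is entirely parallel to Lemma~\ref{lem:alphaprime}, and the simplicity of the target PCM $O$ means the trivial split $(h, \unitHist)$ always suffices. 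The only point needing care is the routine confirmation that the given hypothesis $h \rel{\omega} \unit$ is exactly what certifies $h_1 \rel{\omega} h_2$ for this split; notably, the detailed shape of $\omega$ (via $\lastKey{h}$ and $\papp{h}{\lastKey{h}}$) is never needed beyond its unit-preservation, so no computation on histories is involved.
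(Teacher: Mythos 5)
Your proposal is correct and takes essentially the same route as the paper: the paper's proof likewise invokes Lemma~\ref{lem:omegaseprel} for the invertibility of $\rel{\omega}$ and then says the rest ``follows similarly as in the case of Lemma~\ref{lem:alphaprime}'', which is precisely the trivial split $h_1 = h$, $h_2 = \unitHist$ with the case analysis on $b_1 \join b_2 \in \{\own, \nown\}$ that you spell out. You have simply made explicit the details the paper elides, with no gaps.
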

\begin{proof}
The \seprel $\rel{\omega}$ is invertible by Lemma~\ref{lem:omegaseprel}.
The rest of the proof follows similarly as in the case of
Lemma~\ref{lem:alphaprime}.
\end{proof}
%



\subsection{Invertibility and Separating Conjunction}
\label{s:cap}

We close this section with two lemmas
that show how invertible
morphisms and \seprels interact with separating conjunction. We'll
elaborate more on these properties in Section~\ref{s:relwork} on the
related work.

\begin{lemma}\label{p:duplicable}
  Let $S$ be an invertible \seprel, and let 
  $R = \lambda s\ldot (\cs{s}~S~\co{s})$. Then $R$ is duplicable, 
  i.e., $R \Leftrightarrow R \bstar R$. 
\end{lemma}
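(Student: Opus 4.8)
The plan is to prove the biconditional pointwise, splitting it into the two implications $R \bstar R \Rightarrow R$ and $R \Rightarrow R \bstar R$. Throughout I would work in coordinates: for any state $s$ I write $\cs{s}$ and $\co{s}$ for its self and other components, and I would repeatedly use that $s = s_1 \star s_2$ unfolds (Figure~\ref{fig:subjectivity}) into $\cs{s} = \cs{s_1} \join \cs{s_2}$, $\co{s_1} = \cs{s_2} \join \co{s}$, and $\co{s_2} = \cs{s_1} \join \co{s}$. Recall also that the separating conjunction unfolds as $(R \bstar R)(s) = \exists s_1\, s_2\ldot s = s_1 \star s_2 \land \cs{s_1}~S~\co{s_1} \land \cs{s_2}~S~\co{s_2}$.

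For the implication $R \bstar R \Rightarrow R$, I would unfold $\bstar$ to obtain $s_1, s_2$ with $s = s_1 \star s_2$ and $\cs{s_1}~S~\co{s_1}$, $\cs{s_2}~S~\co{s_2}$. Setting $a_1 = \cs{s_1}$, $a_2 = \cs{s_2}$, and $a' = \co{s}$, the splitting equations rewrite these two hypotheses as $a_1 ~S~ (a_2 \join a')$ and $a_2 ~S~ (a_1 \join a')$. This is precisely the antecedent of invertibility (Definition~\ref{d:invseprel}), so invertibility of $S$ delivers $a_1 ~S~ a_2 ~S~ a'$, whose second conjunct $(a_1 \join a_2) ~S~ a'$ is exactly $\cs{s}~S~\co{s}$, i.e.\ $R(s)$. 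Thus this direction is a direct appeal to invertibility.

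For the converse $R \Rightarrow R \bstar R$, I would exhibit an explicit split. Assuming $R(s)$, that is $\cs{s}~S~\co{s}$, I take $s_1 = s$ and $s_2 = (\unit, \cs{s} \join \co{s})$. Checking against the splitting equations is immediate: $\cs{s_1} \join \cs{s_2} = \cs{s} \join \unit = \cs{s}$, $\co{s_1} = \unit \join \co{s} = \co{s}$, and $\co{s_2} = \cs{s} \join \co{s}$, so $s = s_1 \star s_2$. Then $R(s_1)$ is just the hypothesis, and $R(s_2)$ reads $\unit ~S~ (\cs{s} \join \co{s})$, which follows by applying Proposition~\ref{p:sepU0} to $\cs{s}~S~\co{s}$ to get $(\cs{s} \join \co{s})~S~\unit$ and then the symmetry law of \seprels. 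The strengthening law gives $\cs{s} \orth \co{s}$, so $\cs{s} \join \co{s}$ is defined and $s_2$ is a genuine state.

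I do not anticipate a serious obstacle; the only care needed is bookkeeping the self/other coordinates correctly through $\star$, and recognizing that the two directions are powered by complementary pieces of the machinery: invertibility of $S$ collapses a nontrivial split back into $R$, while Proposition~\ref{p:sepU0} (validity of the parent state) manufactures the trivial split that duplicates $R$.
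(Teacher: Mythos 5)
Your proposal is correct and matches the paper's own proof essentially step for step: the $R \Rightarrow R \bstar R$ direction uses the same witness split $s_1 = s$, $s_2 = (\unit, \cs{s} \join \co{s})$ justified by Proposition~\ref{p:sepU0} (plus symmetry), and the converse direction is the same direct appeal to invertibility of $S$ followed by extracting $(a_1 \join a_2)~S~a'$. The only differences are cosmetic: you treat the directions in the opposite order and add the (harmless, correct) remark that strengthening guarantees $s_2$ is a genuine state.
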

\begin{proof}
   For the $\Rightarrow$ direction, let's assume that $s = (a, a')$
   and $\fapp{R}{s}$; that is $a~S~a'$. 
   Consider states $s_1 = s = (a, a')$ and $s_2 = (\unit, a \join a')$. 
   By definition, $s = s_1 \star s_2$. 
   For $s_1$, we do have $\cs{s_1}~S~\co{s_1}$. Indeed, the latter by 
   definition equals $a~S~a'$, and thus holds by assumption.  For 
   $s_2$, we do have $\cs{s_2}~S~\co{s_2}$. Indeed, the latter by 
   definition equals $\unit~S~(a \join a')$, which holds by 
   Proposition~\ref{p:sepU0}. But then $\fapp{R}{s_1}$ and 
   $\fapp{R}{s_2}$, and thus $\papp{(R \bstar R)}{s}$. 

   For the $\Leftarrow$ direction, let's assume $s = s_1 \star s_2$
   where $s = (a_1 \join a_2, a')$, $s_1 = (a_1, a_2 \join a')$ and 
   $s_2 = (a_2, a_1 \join a')$, such that $\fapp{R}{s_1}$ and 
   $\fapp{R}{s_2}$. 
   That is, for $s_1$: $a_1~S~(a_2 \join a')$. And for $s_2$: 
   $a_2~S~(a_1 \join a')$. By invertibility of $S$
   then $a_1~S~a_2~S~a'$, and thus by associativity 
   $(a_1 \join a_2)~S~a'$, i.e. $\fapp{R}{s}$. 
\end{proof}

\begin{lemma}\label{p:nonduplicable}
  Let $\phi$ be an invertible morphism, and let 
  $\fapp{F}(b) = \lambda s\ldot (\phiS{s} = b \wedge \cs{s} \rel{\phi}
  \co{s})$. Then 
  $\fapp{F}{(b_1 \join b_2)} \Leftrightarrow \fapp{F}{b_1} \bstar \fapp{F}{b_2}$.
\end{lemma}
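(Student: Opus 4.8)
The plan is to prove the two implications of the biconditional separately, observing that this statement merely repackages, as a fact about separating conjunction, the framing derivation carried out by hand in Section~\ref{s:invseprel}: the $\Rightarrow$ direction is precisely the precondition-strengthening step, and the $\Leftarrow$ direction is the postcondition-weakening step. Following the duality highlighted there, the forward direction will invoke invertibility of the morphism $\phi$ (Definition~\ref{d:invmorph}), whereas the backward direction will invoke invertibility of its \seprel $\rel{\phi}$ (Definition~\ref{d:invseprel}). Throughout I write $s = (a, a')$ for the self/other components and use subjective splitting (Figure~\ref{fig:subjectivity}), under which $s = s_1 \star s_2$ amounts to choosing $a_1, a_2$ with $a = a_1 \join a_2$ and setting $s_1 = (a_1, a_2 \join a')$, $s_2 = (a_2, a_1 \join a')$.

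For the $\Rightarrow$ direction, I would assume that $\fapp{F}{(b_1 \join b_2)}$ holds at $s = (a, a')$, i.e.\ $\fapp{\phi}{a} = b_1 \join b_2$ and $a \rel{\phi} a'$. The definedness law of Definition~\ref{def:seprel} gives $a \rel{\phi} \unit$, so invertibility of $\phi$ supplies $a_1, a_2$ with $a = a_1 \join a_2$, $a_1 \rel{\phi} a_2$, $\fapp{\phi}{a_1} = b_1$, and $\fapp{\phi}{a_2} = b_2$. I then take $s_1 = (a_1, a_2 \join a')$ and $s_2 = (a_2, a_1 \join a')$, so $s = s_1 \star s_2$ by construction, and the value conjuncts $\phiS{s_1} = \fapp{\phi}{a_1} = b_1$ and $\phiS{s_2} = \fapp{\phi}{a_2} = b_2$ are immediate. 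For the separation conjuncts I observe that $a_1 \rel{\phi} a_2 \rel{\phi} a'$ holds: the first part is $a_1 \rel{\phi} a_2$, and the second part $(a_1 \join a_2) \rel{\phi} a'$ is just $a \rel{\phi} a'$. Associativity of \seprels then yields $a_1 \rel{\phi} (a_2 \join a')$, i.e.\ $\cs{s_1} \rel{\phi} \co{s_1}$, and (using symmetry and commutativity to reassociate the other way) $a_2 \rel{\phi} (a_1 \join a')$, i.e.\ $\cs{s_2} \rel{\phi} \co{s_2}$. Hence $\fapp{F}{b_1}$ holds at $s_1$ and $\fapp{F}{b_2}$ holds at $s_2$, giving $\papp{(\fapp{F}{b_1} \bstar \fapp{F}{b_2})}{s}$.

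For the $\Leftarrow$ direction, I would assume $\papp{(\fapp{F}{b_1} \bstar \fapp{F}{b_2})}{s}$, i.e.\ a splitting $s = s_1 \star s_2$ with $\fapp{F}{b_1}$ at $s_1$ and $\fapp{F}{b_2}$ at $s_2$. Writing the splitting in canonical form $s = (a_1 \join a_2, a')$, $s_1 = (a_1, a_2 \join a')$, $s_2 = (a_2, a_1 \join a')$, the hypotheses unfold to $\fapp{\phi}{a_1} = b_1$, $a_1 \rel{\phi} (a_2 \join a')$, $\fapp{\phi}{a_2} = b_2$, and $a_2 \rel{\phi} (a_1 \join a')$. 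Here invertibility of the \seprel $\rel{\phi}$, applied to the two separation hypotheses, yields $a_1 \rel{\phi} a_2 \rel{\phi} a'$, whence by associativity $a_1 \rel{\phi} a_2$ and $(a_1 \join a_2) \rel{\phi} a'$. The former permits $\phi$ to distribute, giving $\fapp{\phi}{(a_1 \join a_2)} = \fapp{\phi}{a_1} \join \fapp{\phi}{a_2} = b_1 \join b_2$, and the latter is exactly the separation conjunct. Thus $\fapp{F}{(b_1 \join b_2)}$ holds at $s$.

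Neither direction poses a genuine obstacle once the correct form of invertibility is matched to it; the only delicate point is bookkeeping the two notions and ensuring the associativity reassociations run in the intended direction. Concretely, the forward direction depends on discharging the side condition $(a_1 \join a_2) \rel{\phi} a'$ that forms the chain $a_1 \rel{\phi} a_2 \rel{\phi} a'$ — this is where the hypothesis $a \rel{\phi} a'$ is consumed — while the backward direction depends on the split furnished by $\rel{\phi}$-invertibility being exactly the one over which $\phi$ distributes, which is guaranteed by the $a_1 \rel{\phi} a_2$ conjunct. This mirrors Lemma~\ref{p:duplicable}, with $\phi$-invertibility playing, in the forward direction, the role that Proposition~\ref{p:sepU0} played there.
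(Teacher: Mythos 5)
Your proposal is correct and follows essentially the same route as the paper's proof: the $\Rightarrow$ direction uses the definedness law to obtain $\cs{s} \rel{\phi} \unit$ and then invertibility of $\phi$ to split the state, recovering the two separation conjuncts by associativity of \seprels, while the $\Leftarrow$ direction uses invertibility of $\rel{\phi}$ to obtain $a_1 \rel{\phi} a_2 \rel{\phi} a'$ and then distributivity of $\phi$. The additional framing commentary you give (matching each direction to precondition strengthening versus postcondition weakening) is consistent with the paper's discussion in Section~\ref{s:invseprel} and does not alter the argument.
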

%
%
\begin{proof}
   For the $\Rightarrow$ direction, let $s = (a, a')$ and 
   $\papp{\fapp{F}{(b_1 \join b_2)}}{s}$; that is 
   $\fapp{\phi}{a} = b_1 \join b_2$ and $a \rel{\phi} a'$. 
   By defined elements property of $\rel{\phi}$, it must be 
   $a \rel{\phi} \unit$. Then by invertibility of $\phi$, there exist 
   $a_1$ and $a_2$, such that $a = a_1 \join a_2$, 
   $a_1 \rel{\phi} a_2$, $\fapp{\phi}{a_1} = b_1$ and 
   $\fapp{\phi}{a_2} = b_2$. 
   From $(a_1 \join a_2) \rel{\phi} a'$ and $a_1 \rel{\phi} a_2$, by 
   associativity of \seprels, we get $a_1 \rel{\phi} (a_2 \join a')$
   and $a_2 \rel{\phi} (a_1 \join a')$. Combined with 
   $\fapp{\phi}{a_1} = b_1$ and $\fapp{\phi}{a_2} = b_2$, we get 
   $\papp{\papp{F}{b_1}}{s_1}$ and $\papp{\papp{F}{b_2}}{s_2}$, where 
   $s_1 = (a_1, a_2 \join a')$ and $s_2 = (a_2, a_1 \join a')$. Because 
   also $s = s_1 \star s_2$, we get 
   $\papp{(\papp{F}{b_1} \bstar \papp{F}{b_2})}{s}$. 

   For the $\Leftarrow$ direction, let $s = s_1 \star s_2$ where 
   $s =(a_1 \join a_2, a')$, $s_1 = (a_1, a_2 \join a')$ and 
   $s_2 = (a_2, a_1 \join a')$, such that $\papp{\papp{F}{b_1}}{s_1}$
   and $\papp{\papp{F}{b_2}}{s_2}$. 
   That is, $\fapp{\phi}{a_1} = b_1$ and 
   $a_1 \rel{\phi} (a_2 \join a')$ and $\fapp{\phi}{a_2} = b_2$ and 
   $a_2 \rel{\phi} (a_1 \join a')$. By invertibility of $\rel{\phi}$, 
   then $a_1 \rel{\phi} a_2 \rel{\phi} a'$, and by associativity 
   $(a_1 \join a_2) \rel{\phi} a'$.  
   By distributivity of $\phi$, also 
   $\fapp{\phi}(a_1 \join a_2) = \fapp{\phi}{a_1} \join 
   \fapp{\phi}{a_2} = b_1 \join b_2$. In other words, 
   $\papp{\papp{F}{b_1\join b_2}}{s}$. 
\end{proof}

\label{sex:spinex}




\section{Related Work}
\label{s:relwork}

\paragraph*{PCMs in Separation Logics} 
PCMs arise as the structure underpinning the semantics of (concurrent)
separation logic: the PCMs of heaps capture the dynamics of ownership
transfer which is quintessential to separation logics.
Initially, cancellative PCMs, also known as \emph{separation
  algebras}~\cite{CalcagnoOY07} were used to provide abstract semantic
treatment of separation logic.
Later,~\citet{CaoCA+aplas17} unified different semantics of separation
logics using \emph{ordered} separation algebras to account for
\emph{affine} aspects of various memory models; that is, to model
whether deallocation is explicitly allowed to the user, or is carried
out implicitly by garbage collection. Several program logics continue
this trend, adding further properties to PCMs to give semantics to
(higher-order) ghost
state~\cite{GotsmanBCRS+aplas07,DinsdaleYoung-al:ECOOP10,HoborDA+popl10,Krishnaswami-al:ICFP12,Svendsen-al:ESOP13,DinsdaleYoung-al:POPL13,ArrozPincho-al:ECOOP14,Svendsen-Birkedal:ESOP14,TuronVD+oopsla14,Jung-al:POPL15,jung:jfp18,steelcore+ICFP20}. In
this paper we don't consider higher-order state and focus on the
algebraic treatment of PCMs without additional properties, as these
aren't required by our ambient logic, which admits explicit
deallocation. We expect that in the future morphisms and separating
relations can be developed for these enriched PCMs. 

Recently, several program logics, most notably those that are built on top
of the Iris
framework~\cite{Jung-al:POPL15,jung:jfp18,gra+biz+kre+bir:iron19,HinrichsenBK+popl20,JungLPRTDJ+popl20},
the SteelCore framework~\cite{steelcore+ICFP20}, VST~\cite{Appel-al:BOOK14},
and also
FCSL~\cite{LeyWild-Nanevski:POPL13,Nanevski-al:ESOP14,Sergey-al:ESOP15,Sergey-al:PLDI15,sergey:oopsla16},
have allowed PCMs to be declared at the user level, and sometimes
even constructed by means of a predetermined set of combinators.

However,
%
none of these logics have considered morphisms over PCMs, as we do
here. Instead, when the state space of a program has to be restricted
by some property, that is usually done by conjoining the property to
the state space of the underlying state transition system. In
contrast, with PCM morphisms, we can restrict the PCM itself, thus
promoting the property into a new notion of separateness. The move
makes it possible to provide clients with the PCM most suitable to
their needs. The new PCM may also be subjected to mathematical
theories and their mechanizations that are parametric in the PCM, such
as, for example, our theory of invertibility, to facilitate the reuse
of mechanized proofs.
%

Morphisms are a standard component in the study of structures in
algebra and category theory. They provide the user with the most
general and systematic way to define her own PCM combinators and, as
we illustrated, are also useful in specs. Morphisms generally are also
essential in the definitions of functors and natural transformations
which we plan to consider in the PCM setting in the future. In
contrast to our morphism-based specifications, most of the related
program logics follow the specification style originating from the
work on Concurrent Abstract Predicates
(CAP)~\cite{DinsdaleYoung-al:ECOOP10}, to which we compare below.

We aren't aware of any other work that considers \seprels as a
standalone concept. 
That said, the key \seprel property of associativity (property 5 in
Definition~\ref{def:seprel}) has been considered
before~\cite{Krebbers2015,Jacobs18}, though as a property of the
disjointness relation $\orth$ of the underlying PCM. In our setting,
the latter is just one possible \seprel, associated with total PCM
morphisms.

\paragraph*{Comparison with concurrent abstract predicates (CAP)}
The intermediate abstract specs for ticket locks we developed in
Section~\ref{s:morphabst} are similar to the lock specs from
CAP~\cite{DinsdaleYoung-al:ECOOP10}. We show the
CAP specs below, ignoring lock invariants (see Footnote \ref{ftn:invars} in
Section~\ref{s:example}),
adapted to our type-based notation with explicit binding of the state
$s$ in the assertions, and using $\wedge$ instead of $\bstar$.
\begin{align*}
{\progLock_{CAP}} : 
    {\spec{\lambda s\ldot \fapp{\isalock}{s} \land \fapp{\unlocked}{s}}}
    {\spec{\lambda s\ldot \fapp{\isalock}{s} \land \fapp{\locked}{s}}} \\
{\progUnlock_{CAP}} : 
    {\spec{\lambda s\ldot \fapp{\isalock}{s}\land \fapp{\locked}{s}}}
    {\spec{\lambda s\ldot \fapp{\isalock}{s}\land \fapp{\unlocked}{s}}}
\end{align*}
Here $\isalock$, $\locked$, and $\unlocked$ are separation logic
assertions (hence, predicates over $s$). The predicate $\isalock$
captures the internal conditions required of $s$ to represent a lock,
and $\locked$ and $\unlocked$ capture that the lock is taken and free,
respectively.\footnote{In~\cite{DinsdaleYoung-al:ECOOP10}, the $\unlocked$ predicate
is replaced by separation logic $\mathsf{emp}$, and thus elided. We
include it here explicitly to exemplify the similarity with our
specs.} 
The definitions of the predicates are hidden from the clients, but for
the specs to be usable wrt.~framing and parallel composition, one must
export a number of their properties, such as (a) $\locked \bstar
\locked \Rightarrow \bot$,
and (b) $\isalock$ is duplicable, i.e.  $\isalock \Leftrightarrow
\isalock \bstar \isalock$.

We could turn our specs of $\progLock$ and $\progUnlock$ into the same 
format by setting, for example: 
\begin{align*}
\fapp{\isalock}{s}\,{\eqdef}\,\cs{s}\,{\orth_\alpha}\,\co{s} \quad
\fapp{\unlocked}{s}\,{\eqdef}\,\fapp{\cs{\alpha}}{s} = \nown \wedge
\cs{s}\,{\orth_\alpha}\,\co{s} \quad
\fapp{\locked}{s}\,{\eqdef}\,\fapp{\cs{\alpha}}{s} = \own \wedge \cs{s}\,{\orth_\alpha}\,\co{s}
\end{align*}
and because $\orth_\alpha$ is an invertible \seprel and $\alpha$ an
invertible total morphism, by Lemmas~\ref{p:duplicable} and
\ref{p:nonduplicable}, the equations (a) and (b) above hold.
%
In this sense, we see our algebraic formulation as explaining why the
two different kinds of abstract predicates appear in CAP: the
duplicable predicates are a lifting of invertible \seprels as in
Lemma~\ref{p:duplicable}, and the non-duplicable ones are a lifting of
invertible morphisms as in Lemma~\ref{p:nonduplicable}.


Of course, morphisms and \seprels have uses where abstract predicates
simply don't apply. Examples are the algebraic constructions that we
introduced in Section~\ref{s:morph}, or the sub-PCM construction which
we used to obtain the ultimately simplest abstract specs in
Section~\ref{s:examplesubpcm}.
Furthermore, by being functions, morphisms can compute values out of
the state,\footnote{For example, how we used $\psiT$ in the concrete
  specs for ticket lock to compute the displayed ticket.} and thus
lead to convenient specs and proofs in a formalization based on type
theory. We thus propose that abstract specs 
be given directly 
in terms of morphisms and \seprels, instead of using their coercion
into abstract predicates.

%
%



\paragraph*{Comparison with the ambient type theory}
This paper builds on previous work by~\citet{NanevskiBDF+oopsla19}
which provides a type-theoretic formulation of concurrent separation
logic.
Nanevski et al.~consider an algebraic treatment of \emph{state
  transition systems} of \emph{resources}, introduces notions of
\emph{resource morphisms} and \emph{simulations}. While that paper
focuses on the logic of Hoare triples, in the present paper we focus
on the logic of \emph{assertions} and the associated algebraic
constructions.

The goal of Nanevski et al.~is to provide a systematic way of coercing
a program from one resource type to another, as long as the target
resource simulates the source one. The system provides an inference
rule in the style of Hoare's rule of invariance, to reason about the
coerced programs. We utilized this rule implicitly in
Section~\ref{s:examplesubpcm} to coerce $\progLock$ from a resource
with PCM $U$ to one with PCM $\UTicket$.
A program is coerced from resource $V$ to resource $W$ by means of a
resource morphism, which modifies the behavior of the program on the
ghost state. Programmatically, the coercion may be seen as
\emph{re-instrumenting} a program with a ghost code specific to $W$, a
posteriori to the proof of the program against the initial ghost
intrumentation specific to $V$, and using the resources as a
type-style interface.
The same mechanism of resource morphisms provides a scoped way to
allocate a new resource into the private state of another resource.
Resource morphisms are similar in spirit to the refinement mappings
of~\citet{aba+lam:91}, and enable a form of refinement-style reasoning
within separation logic.

%

%

\paragraph*{PCM morphisms versus homomorphisms in effect algebras.}
\emph{Effectus theory}~\cite{Effectus15} is a fairly new field of
category theory whose aim is to describe quantum computation and its
logic, hence generalizing probabilistic and Boolean logic. The
mathematical backbone of effectus theory is effect algebras, which
essentially are PCMs with an orthosupplement, i.e. a total unary
negation operation.
An effectus is a category with finite coproducts and final object that
satisfies three technical properties: $1)$ a form of partial pairing
for compatible partial maps; $2)$ disjointness of coprojections; and
$3)$ joint monicity of partial projections. In effectuses,
predicates are total maps of the form $X\rightarrow Y+1$ which, as
usual, are equivalent to partial maps of the form $X\rightarrow Y$. In
particular, given an effectus $B$, the category $Par(B)$ of partial
maps over $B$ is enriched over the category of PCMs.

%

Interestingly, the notion of homomorphism for effect
algebras~\cite{Cho15,Effectus15,Jacobs18} is similar to our notions of
PCM morphisms. Indeed, the similarities between our
Definition~\ref{def:seprel} and Definition~\ref{def:pcmmorph} with
\cite[Definition 12]{Effectus15} are clear. 
One difference, however, is
that 
their definition only considers PCM morphisms with trivial \seprel
(what we call total morphisms), whereas our morphisms can have more
general \seprels, and are thus properly partial. 
%
%
The origins of PCM morphisms, as described in our paper, lie in
separation logic and we have explored their applications to
verification of concurrent programs. Nevertheless the close relation
to effectuses encourages us to explore future applications of our work
to recent extensions of separation and Hoare logic such as quantum
relational Hoare logic~\cite{Unruh+popl19,Unruh+lics19}, relational
proofs of quantum programs~\cite{BartheHYYZ+popl20}, and probabilistic
separation (and other program)
logics~\cite{BatzKKMN+popl19,BartheHL+popl20,SatoABGGH+popl19,TassarottiH+popl19}.

\section{Conclusion and Future Work}
\label{s:concl}
Morphisms are a standard notion in algebra and category theory, where
algebraic structures give rise to structure-preserving functions,
i.e. morphisms, between them. We adapt the notion of morphisms to the
structure of PCMs, thereby extending standard algebraic and
categorical approaches to concurrent separation logics.

The mathematics behind this adaptation gives rise to \seprels, which
delineate the domain where a function is structure preserving and thus
a morphism. We introduce invertibility as a property of morphisms and
\seprels that allows working with morphisms under abstraction. Our
exposition of PCMs and their morphisms is natural; we recover the
standard algebraic constructions (e.g. that of a sub-object, a
sub-PCM), show that the constructions preserve structure (e.g.,
composition of morphisms is a morphism, equalizer of morphisms is a
\seprel, etc.), and show that invertibility is preserved under
composition and products of morphisms.
Morphisms are
useful in specs to compute values out of the
state; structure preservation ensures that morphisms are well behaved
under ownership transfer.

In the future, we will build on the scaffolding provided by PCM
morphisms, along with resource morphisms and
simulations~\cite{NanevskiBDF+oopsla19}, to obtain an algebraic theory
of \emph{linearizable} resources. Such a formalism will unite logical,
categorical, and type-theoretic
foundations~\cite{harper:trinitarianism}, while supporting the
verification of a wide range of realistic concurrent programs.



\begin{acks}
We thank Gordon Stewart and Joe Tassarotti for their comments on
various drafts of the paper. We thank the anonymous reviewers from the
POPL'21 PC and AEC for their feedback. This research was partially
supported by the Spanish MICINN projects BOSCO (PGC2018-102210-B-I00)
and ProCode-UCM (PID2019-108528RB-C22),
the European Research Council
project Mathador (ERC2016-COG-724464) and the US National Science
Foundation (NSF). Any opinions, findings, and conclusions or
recommendations expressed in the material are those of the authors and
do not necessarily reflect the views of the funding agencies.
\end{acks}


\bibliography{bibmacros,references,proceedings}

\ifappendix
\clearpage
\appendix
\newpage
\section{Proof outline for $\progUnlock$}\label{s:apxunlock}

In this appendix, we present the proof outline for the following spec
and code for $\progUnlock$.
\[
\begin{array}[t]{rcl}
 \progUnlock & : & {\spec{\lambda s \ldot \fapp{\cs{\alpha}}{s} = \own \land \relAlpha \cs{s} \co{s}}}
                   {\spec{\lambda s \ldot \fapp{\cs{\alpha}}{s} = \nown \land \relAlpha \cs{s} \co{s}}}\, @\, {\rTicket}\\
             & = & \left\langle \incfetch(\var{dsp}) ; \ghostcode{
                        \unlockTr                       }\right\rangle
\end{array}
\]
Here $\alpha$ and $\rel{\alpha}$ are the morphism and its \seprel as
defined in Section~\ref{s:morphabst}. The predicates $\ord$ and
$\nogaps$ have been defined in Section~\ref{s:transitions}.

\newcounter{unlockcodea}
\def\lineno{\stepcounter{unlockcodea}\textsc{\theunlockcodea}}
\[\begin{array}[t]{r@{\quad}l}
\lineno. & \proofspec{
              \fapp{\cs{\alpha}}{s} = \own \wedge \cs{s} \rel{\alpha} \co{s}
           }\\
\lineno. & \proofspecL{\!\!\!\begin{array}[t]{l}
             \exists t\ldot \papp{(\sgS{s})}{t} = \currentT \wedge (\forall t'\ldot t' \in \dom{\sgS{s}}\ldot t' \neq t \Rightarrow \papp{(\sgS{s})}{t'} \neq \currentT) \wedge \hbox{}\\
             \qquad \papp{\ord}{\sgT{s}} \wedge \papp{\nogaps}{\sgT{s}} \wedge \cs{s} \rel{\alpha} \co{s} \}
             \end{array}
           }\\
\lineno. & \proofspecL{\!\!\!\begin{array}[t]{l}
             \papp{(\sgS{s})}{\psiT{s}} = \currentT \wedge (\forall t'\ldot t' \in \dom{\sgS{s}}\ldot t' \neq \psiT{s} \Rightarrow \papp{(\sgS{s})}{t'} \neq \currentT)  \wedge \hbox{}\\
             \qquad \papp{\ord}{\sgT{s}} \wedge \papp{\nogaps}{\sgT{s}} \wedge \cs{s} \rel{\alpha} \co{s} \}
             \end{array}
           }\\
\lineno. & \left\langle \incfetch(\var{dsp}) ; \ghostcode{
                        \unlockTr       
                }\right\rangle\\
\lineno. & \proofspec{
            (\forall t'\ldot t' \in \dom{\sgS{s}}\ldot \papp{(\sgS{s})}{t'} \neq \currentT)
             \wedge \cs{s} \rel{\alpha} \co{s}
           }\\
\lineno. & \proofspec{
            \fapp{\cs{\alpha}}{s} = \nown \wedge \cs{s} \rel{\alpha} \co{s}
           }\\
\end{array}\]

Line 1 is the precondition we want for $\progUnlock$. 
The first conjunct in line 2 derives by unfolding the definition of
$\alpha$. The second conjunct follows from
$\cs{s} \rel{\alpha} \co{s}$, which tells us that we can only have one
$\currentT$ ticket in $\sgS$ (and in $\sgO$ but that doesn't matter at
this point). The third and fourth conjunct just materialize because
they're part of the state space so we can always assume them. The
fifth conjunct propagates by stability of $\rel{\alpha}$.

Now, in line 3 we make the inference that $t$ must equal
$\psiT{s}$. This follows by the ordering property given by
$\papp{\ord}{\sgT{s}}$, by no-gap property given by
$\papp{\nogaps}{\sgT{s}}$, and by the uniqueness property for
$\currentT$ tickets given by $\cs{s} \rel{\alpha} \co{s}$. Indeed, if
$\papp{(\sgS{s})}{t} = \currentT$, then by the ordering property we
know that $t$ must be bigger than all $\servedT$ tickets. Moreover,
$t$ is smaller than all $\waitingT$ tickets and $t$ is the unique
$\currentT$ ticket. Thus, there are no other tickets between $t$ and
the largest $\servedT$ ticket. As there are no gaps in tickets, it must
be $t = \psiT{s}$.

In line 5 we get that $\sgS{s}$ has no $\currentT$ tickets anymore,
because $\unlockTr$ transition switched the ticket $t$ to $\servedT$,
and it was the only $\currentT$ ticket in $\sgS{s}$.  The
$\rel{\alpha}$ conjuncts propagate because it's preserved by both our
steps and other steps. The $\ord$ and $\nogaps$ conjuncts
also propagate, but we don't need them anymore, so we elide them.

Line 6 simply follows by definition of $\alpha$.

\section{\Seprels and stability under transitions}\label{s:stability}

In Section~\ref{s:example}, we carried out the proof outline by 
appealing to the stability of the invariant $\relAlpha {} {}$ under the 
transitions of the resource $\rTicket$.  Here, we explain the notion 
of stability under a resource in more detail and prove that it holds 
of $\relAlpha {} {}$. 

First, let us have a look at the interference aspect of our 
logic. Other threads can concurrently perform transitions on a resource 
as long as the preconditions of each transition are met.  For example, 
other threads working with the resource $\rTicket$ can acquire new 
tickets by taking transition $\taketxTr$. 
From the point of view of a thread, a transition $t$ taken by other 
thread is \emph{transposed} relatively to the same transition taken by 
the thread itself. We denote such transposed transition by 
$\tp{t}$. What transposition involves is simply swapping the 
\emph{self} and \emph{other} components of the transition states, that 
is: 
\[
\tp{t}\,s\,s' = t\,(\co{s},\cs{s})\,(\co{s'},\cs{s'}) 
\]
Intuitively, a transition can only change its \emph{self} state and 
must leave the \emph{other} state untouched. A transposed transition 
then changes the \emph{self} state of another thread, which we view as 
\emph{other} state. 

For example, recall the definition of the transition $\taketxTr$ in 
Section~\ref{s:example}. 
\begin{align*}
\taketxTr~s~s' & \eqdef   
  \fapp{\sigmaS}{s'} = \{\fresh{\fapp{\sigmaT}{s}}\} \cupdot  \fapp{\sigmaS}{s}
\end{align*}
\noindent The transposition $\taketxTr$ is then: 
\begin{align*}
\tp{\taketxTr}~s~s' & \equiv 
  \fapp{\sigmaO}{s'} = \{\fresh{\fapp{\sigmaT}{s}}\} \cupdot  \fapp{\sigmaO}{s}
\end{align*}
\noindent We can see that the transposed version says that another 
thread acquired a fresh ticket over all tickets both in its 
\emph{self} an \emph{other} state and added it to its \emph{self}
state, which we see as a part of our \emph{other} state.

\subsection{Stability of state space}

In Section~\ref{s:example}, Figure~\ref{fig:ststicket}, we defined the state
space $\STicket$ by imposing two properties, $\ord$ and $\nogaps$, on cartesian
product $(\mapTctl) \times (\mapTctl)$. This leaves us with an obligation to
prove that the transitions of $\rTicket$ preserve 
these properties, or else they won't be transitions of $\rTicket$.


Firstly,
notice also that $\ord$ is preserved by all the transitions of 
$\rTicket$. 

\begin{proposition}
Let $s, s' \in \STicket$ and $t$ a transition such that $\fapp{\ord}{s}$.
If $\fapp{\fapp{t}{s}}{s'}$ then $\fapp{\ord}{s'}$.
\end{proposition}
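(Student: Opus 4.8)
The plan is to reduce the claim to a statement about the total ticket map $\sgT{s}$ and then perform a case analysis on the three transitions of $\rTicket$. Reading $\fapp{\ord}{s}$ as $\papp{\ord}{\sgT{s}}$, I would first note that each of $\taketxTr$, $\lockTr$, $\unlockTr$ alters only the \emph{self} component while fixing $\sgO{s}$; since $\sigma$ is a morphism and $\sgS{s} \orth \sgO{s}$, the induced change on $\sgT{s} = \sgS{s} \join \sgO{s}$ is exactly the image of the change on the self map (adding a fresh ticket is an adjunction to $\sgT{s}$, and a pointwise relabelling of a self ticket is a pointwise relabelling in $\sgT{s}$, because the ticket is outside $\dom{\sgO{s}}$). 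Hence $\sgT{s'}$ is obtained from $\sgT{s}$ either by adjoining one fresh ticket (for $\taketxTr$) or by relabelling the single ticket $t = \psiT{s}$ (for $\lockTr$ and $\unlockTr$), and it suffices to prove $\papp{\ord}{\sgT{s'}}$ from $\papp{\ord}{\sgT{s}}$. The recurring tool is the morphism law for $\psi$, which gives $\psiT{s} = \papp{\psi}{\sgT{s}}$; by the definition of $\psi$ this equals one plus the largest $\servedT$-labelled ticket of $\sgT{s}$, so $t = \psiT{s}$ is strictly greater than every $\servedT$ ticket.

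The case $\taketxTr$ is immediate: the adjoined ticket $t = \fresh{\sgT{s}}$ is strictly larger than every element of $\dom{\sgT{s}}$ and is labelled $\waitingT$, so the only new comparisons place a maximal $\waitingT$ ticket above everything, and all three clauses of $\ord$ survive.

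For $\lockTr$, ticket $t = \psiT{s}$ is relabelled from $\waitingT$ to $\currentT$. The crux is to show that $t$ is the \emph{minimal} $\waitingT$ ticket: by $\papp{\ord}{\sgT{s}}$ every $\waitingT$ ticket exceeds every $\servedT$ ticket, hence is at least $\psiT{s} = t$, while $t$ is itself $\waitingT$ by the transition guard. After relabelling, $t$ then sits strictly above all $\servedT$ tickets and strictly below all remaining $\waitingT$ tickets, and the inequalities among the unchanged label classes are inherited from $s$; thus $\ord$ holds of $\sgT{s'}$. The case $\unlockTr$ is symmetric: the same reasoning shows $t = \psiT{s}$ is the minimal $\currentT$ ticket, and relabelling it to $\servedT$ keeps it above all previous $\servedT$ tickets and below all remaining $\currentT$ and all $\waitingT$ tickets.

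The main obstacle, and the only non-routine step, is the minimality argument for $t = \psiT{s}$ in the $\lockTr$ and $\unlockTr$ cases: one must combine the definition of $\psi$ with the cross-class inequalities of $\ord$ to see that relabelling a single ticket cannot produce an out-of-order pair. I would emphasise that this reasoning uses only $\ord$ together with $\psi$ being a morphism, and in particular does not appeal to $\nogaps$ — which is precisely why the invariant propagated here can be $\ord$ alone.
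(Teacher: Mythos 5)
Your proof is correct and takes essentially the same route as the paper's: a case analysis on the three transitions in which $\taketxTr$ is immediate (the fresh ticket is maximal and labelled $\waitingT$), while $\lockTr$ and $\unlockTr$ hinge on the fact that $t = \psiT{s}$, being one greater than the largest $\servedT$ ticket, lies exactly at the class boundary, so relabelling it respects every cross-class inequality of $\ord$. The only inessential divergences are that you make the reduction to the total map $\sgT{s}$ explicit via the morphism property of $\sigma$, that in the $\lockTr$ case you retain any pre-existing $\currentT$ tickets and argue by inheritance where the paper instead derives that no $\currentT$ ticket can exist, and that your explicit minimality argument in the $\unlockTr$ case supplies a comparison ($t$ below the remaining $\currentT$ tickets) that the paper's sketch leaves implicit.
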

\begin{proof}
By case analysis on transition $t$.
\begin{itemize}
\item $\taketxTr$ introduces a fresh ticket as a $\waitingT$
  ticket. Thus, it preserves the ordering imposed by 
  $\ord$. 

\item $\lockTr$ starts from the precondition that the displayed ticket 
  is $\waitingT$. By the definition of $\psiT$, the displayed ticket 
  is one larger than the highest $\servedT$ ticket. Thus, by $\ord$, 
  there isn't a $\currentT$ ticket in the system, and the displayed 
  ticket is the smallest $\waitingT$. Therefore, the transition's 
  switching this ticket to $\currentT$ preserves $\ord$. 

\item $\unlockTr$ starts from the precondition that the displayed 
  ticket is $\currentT$. By $\ord$, this ticket is larger than all 
  $\servedT$ tickets, and smaller than all $\waitingT$
  tickets. Therefore, the transition's switching this ticket to 
  $\servedT$ preserves $\ord$. 
\end{itemize}

\end{proof}


Secondly, $\nogaps$ is also preserved by the transitions of \rTicket:

\begin{proposition}
Let $s, s' \in \STicket$ and $t$ a transition such that $\fapp{\nogaps}{s}$.
If $\fapp{\fapp{t}{s}}{s'}$ then $\fapp{\nogaps}{s'}$.
\end{proposition}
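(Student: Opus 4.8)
The plan is to proceed by case analysis on the transition $t$, exactly as in the preceding proposition for $\ord$. The observation that organizes the whole argument is that $\nogaps$ is a property of the \emph{domain} of $\sgT{s}$ alone: it constrains only which tickets have been drawn, not the labels attached to them. Consequently, any transition that leaves $\dom{\sgT{s}}$ unchanged preserves $\nogaps$ for free, and only a transition that enlarges the domain requires real work.

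First I would dispatch the two label-only transitions. Both $\lockTr$ and $\unlockTr$ act by a function update of the label at a ticket $t = \psiT{s}$ that, by their preconditions ($\papp{(\sgS{s})}{t} = \waitingT$ resp. $\papp{(\sgS{s})}{t} = \currentT$), already lies in $\dom{\sgS{s}}$; and by locality they leave $\sgO$ untouched. Hence $\dom{\sgS{s'}} = \dom{\sgS{s}}$ and $\dom{\sgO{s'}} = \dom{\sgO{s}}$, so $\dom{\sgT{s'}} = \dom{\sgT{s}}$, and $\nogaps{(\sgT{s'})}$ follows immediately from $\nogaps{(\sgT{s})}$.

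The only case that genuinely changes the domain is $\taketxTr$, and this is where the content lies. Here $\sgS{s'} = (t \hmapsto \waitingT) \cupdot \sgS{s}$ with $t = \fresh{\sgT{s}} = \lastKey{\sgT{s}} + 1$, and again $\sgO$ is unchanged, so $\dom{\sgT{s'}} = \dom{\sgT{s}} \cup \{t\}$. To establish $\nogaps{(\sgT{s'})}$ I would take an arbitrary $u \in \Nat^+$ with $u + 1 \in \dom{\sgT{s'}}$ and split on whether $u + 1 \in \dom{\sgT{s}}$. If it is, then $\nogaps{(\sgT{s})}$ yields $u \in \dom{\sgT{s}} \subseteq \dom{\sgT{s'}}$. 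Otherwise $u + 1 = t = \lastKey{\sgT{s}} + 1$, whence $u = \lastKey{\sgT{s}}$.

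The one delicate point, which I expect to be the main obstacle, is the edge case governed by the $\max$ convention in the definition of $\lastKey{-}$. When $\dom{\sgT{s}} = \emptyset$ we have $\lastKey{\sgT{s}} = 0$ (by the stated convention that the maximum of the empty set is $0$), forcing $u = 0 \notin \Nat^+$; this contradicts $u \in \Nat^+$, so the subcase is vacuous. When $\dom{\sgT{s}}$ is nonempty, $\lastKey{\sgT{s}}$ is a genuine maximal element of the domain, so $u = \lastKey{\sgT{s}} \in \dom{\sgT{s}} \subseteq \dom{\sgT{s'}}$. In both subcases $u \in \dom{\sgT{s'}}$, which is precisely what $\nogaps$ requires. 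The remaining care is only to note that $t$ is truly fresh --- it strictly exceeds $\lastKey{\sgT{s}}$ and hence lies in neither $\dom{\sgS{s}}$ nor $\dom{\sgO{s}}$ --- so that $\taketxTr$ is defined and $\cupdot$ incurs no clash.
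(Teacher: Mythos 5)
Your proof is correct and takes essentially the same route as the paper's, whose entire argument is the one-line remark that $\nogaps$ is preserved ``because fresh tickets are drawn in order, and are never discarded.'' Your case analysis --- observing that $\lockTr$ and $\unlockTr$ only relabel existing tickets and so leave $\dom{\sgT{s}}$ unchanged, while $\taketxTr$ adds exactly the consecutive ticket $\lastKey{\sgT{s}}+1$ --- is a faithful and fully rigorous expansion of that sketch, and it additionally handles the empty-domain edge case of the $\max$ convention that the paper glosses over.
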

\begin{proof}
It's easy to see that $\nogaps$ is also preserved by the 
transitions, because fresh tickets are drawn in order, and are never 
discarded. 
\end{proof}


Notice that we didn't need to use $\nogaps$ to argue the 
preservation of $\ord$. This shows that $\nogaps$ isn't 
really necessary at the stage when we proof intermediate types
in Section~\ref{s:transitions}.
It is required only later in Section~\ref{s:morphabst} in the proof 
outline for the abstract version.

\subsection{Stability of proof invariants}

To be able to work with any property in a proof outline, the property 
must be preserved under interference of other threads, i.e., under 
sequences of transposed transitions of a resource. We call such 
properties \emph{stable}. The \seprel $\relAlpha {} {}$ is stable, which is 
why we could use it in the proof outline in Section~\ref{s:example}. 

\begin{proposition}
\label{prop:invostable}
Let $s$ be a state and $s'$ be s state after performing a sequence of 
transition or transposed transitions of \rTicket. 
If 
$\relAlpha \cs{s} \co{s}$ then
$\relAlpha \cs{s'} \co{s'}$. 
\end{proposition}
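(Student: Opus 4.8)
The plan is to reduce the separating relation to a statement about the \emph{combined} ticket map and then track how that statement behaves under each transition and its transposition. Recall that $\sigma$ is the identity morphism on $U$, so $\fapp{\sigma}{\cs{s}} = \sgS{s}$ and $\fapp{\sigma}{\co{s}} = \sgO{s}$; since $\numCurrentX$ distributes over $\join$ (Example~\ref{ex:serve}), the condition $\cs{s} \rel{\alpha} \co{s}$ from (\ref{eqn:botalpha}) is equivalent to $\numCurrent{\sgT{s}} \leq 1 \land \cs{s} \orth \co{s}$, i.e.\ the combined map $\sgT{s} = \sgS{s} \join \sgO{s}$ carries at most one $\currentT$ ticket (disjointness of \emph{self} and \emph{other} is an invariant of every state and needs no separate argument). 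Crucially, this reformulated property depends only on $\sgT{s}$, which is symmetric in the \emph{self} and \emph{other} components. Since transposition merely swaps those two components, as recalled at the start of this appendix, a transposed transition has the same effect on $\sgT{s}$ as the corresponding forward transition. It therefore suffices to prove preservation for the three forward transitions, and the transposed cases follow verbatim.

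First I would dispatch the two easy transitions by case analysis. The transition $\taketxTr$ adds a fresh ticket labeled $\waitingT$, leaving $\numCurrent{\sgT{s}}$ unchanged, so the bound survives. The transition $\unlockTr$ relabels one ticket from $\currentT$ to $\servedT$, which can only \emph{decrease} $\numCurrent{\sgT{s}}$; again the bound survives.

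The main obstacle is the $\lockTr$ case, which relabels the displayed ticket $t = \psiT{s}$ from $\waitingT$ to $\currentT$ and hence \emph{increases} $\numCurrent{\sgT{s}}$ by one. To keep the bound I must show that $\sgT{s}$ had no $\currentT$ ticket beforehand, so the count rises only from $0$ to $1$. This is exactly where the state-space invariant $\ord$ enters. By the definition of $\psi$ in Section~\ref{s:intstate}, $t = \psiT{s}$ is one greater than the largest $\servedT$ ticket of $\sgT{s}$, and the precondition of $\lockTr$ guarantees that $t$ is present in $\sgT{s}$ labeled $\waitingT$. Were there a $\currentT$ ticket $c$ in $\sgT{s}$, then $\ord$ would force $c$ to exceed every $\servedT$ ticket, hence $c \geq t$, while simultaneously forcing $c$ to be smaller than the $\waitingT$ ticket $t$, hence $c < t$ --- a contradiction. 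Thus no $\currentT$ ticket exists before $\lockTr$ fires, and exactly one exists afterwards, so $\numCurrent{\sgT{s'}} \leq 1$.

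To conclude, I would argue by induction on the length of the sequence of (possibly transposed) transitions leading from $s$ to $s'$. Each step preserves membership in the state space $\STicket$ --- in particular it preserves $\ord$, by the proposition established earlier in this appendix --- so the invariant $\ord$ is available at every intermediate state, licensing the $\lockTr$ argument at each point; and each step preserves $\numCurrent{\sgT{-}} \leq 1$ by the case analysis above, while disjointness of \emph{self} and \emph{other} is maintained throughout. Re-expressing the preserved property back as $\cs{s'} \rel{\alpha} \co{s'}$ finishes the proof.
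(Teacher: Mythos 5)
Your proposal is correct and takes essentially the same route as the paper's proof: a case analysis on the transitions in which $\taketxTr$ and $\unlockTr$ are dispatched because the count of $\currentT$ tickets in the combined map does not increase, while $\lockTr$ (and its transposition) is handled by invoking the state-space invariant $\ord$ to conclude that $\numCurrent{\ct{s}} = 0$ beforehand, hence $\numCurrent{\ct{s'}} = 1$ afterwards. You merely make explicit several steps the paper leaves implicit — the reduction to a property of the symmetric combined map (which is why transposed transitions need no separate treatment), the contradiction argument behind $\numCurrent{\ct{s}} = 0$, and the induction over the length of the transition sequence.
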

\begin{proof}
By case analysis of the transition of $\rTicket$. The transition 
\taketxTr and \unlockTr decreases the number of tickets labeled $\currentT$
and thus do not invalidate $\relAlpha {} {}$.

To see that $\relAlpha {} {}$ is stable under 
\lockTr and $\tp{\lockTr}$ as well, 
assume that $\relAlpha {} {}$ holds in $s$. 
From the definition of $\STicket$ it follows that
$\fapp{\ord}{\ct{s}}$ and we obtain $\numCurrent{\ct{s}} = 0$. Hence 
$\numCurrent{\ct{s'}} = 1$ and
$\relAlpha \cs{s'} \co{s'}$. 
\end{proof}

Finally, in the proof outlines for concrete specs, we crucially used the fact
that others cannot proceed with unlocking past a ticket that we hold and that
is labeled $\waitingT$, which is also a stability property:

\begin{proposition}
	Let $t$ be a ticket and $s, s'$ states such that
	$\papp{(\sgS{s})}{t} = \waitingT \land \psiT{s} \leq t$.
	If $s$ steps to $s'$ by a sequence of transposed transitions then 
	$\papp{(\sgS{s'})}{t} = \waitingT \land \psiT{s'} \leq t$.
\end{proposition}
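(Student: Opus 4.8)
The plan is to proceed by induction on the length of the sequence of transposed transitions, since stability under a single transposed step lifts to sequences immediately and the base case is trivial. The real work is the single-step case: assuming $\papp{(\sgS{s})}{t} = \waitingT$ and $\psiT{s} \leq t$, and that $s$ steps to $s'$ by one transposed transition $\tp{r}$ with $r \in \{\taketxTr, \lockTr, \unlockTr\}$, I would establish both conjuncts for $s'$ by case analysis on $r$.

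First I would dispatch the first conjunct uniformly. By the definition of transposition, $\tp{r}$ rewrites only the \emph{other} component $\sgO$ and leaves the \emph{self} component untouched; hence $\sgS{s'} = \sgS{s}$ and so $\papp{(\sgS{s'})}{t} = \papp{(\sgS{s})}{t} = \waitingT$. (This is precisely where restricting to \emph{transposed} transitions matters: our own transitions would alter $\sgS$.) For the second conjunct I would split on $r$. When $r$ is $\taketxTr$ or $\lockTr$, the transition either introduces a fresh $\waitingT$ ticket or relabels a $\waitingT$ ticket to $\currentT$; in neither case is a new $\servedT$ ticket created, so the set of $\servedT$ tickets in $\sgT{s}$ is unchanged. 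Since $\psi$ depends only on the largest $\servedT$ ticket, $\psiT{s'} = \psiT{s} \leq t$, as required.

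The main obstacle is the $\tp{\unlockTr}$ case, as it is the only transition that can advance the display. Here the other thread relabels the displayed ticket $t' = \psiT{s}$ from $\currentT$ to $\servedT$ in its own map, i.e.\ in our $\sgO$. By the definition of $\psi$ together with the $\ord$ property of the state space, $t'$ becomes the largest $\servedT$ ticket in $\sgT{s'}$, so $\psiT{s'} = t' + 1 = \psiT{s} + 1$. It therefore suffices to show the strict inequality $\psiT{s} < t$. I obtain this from disjointness: the (transposed) unlock precondition forces $t' = \psiT{s}$ to be labeled $\currentT$ in $\sgO{s}$, whereas by hypothesis $t$ is labeled $\waitingT$ in $\sgS{s}$. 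Since $\cs{s} \orth \co{s}$, the maps $\sgS{s}$ and $\sgO{s}$ have disjoint domains, so $t \neq t'$; combined with $\psiT{s} \leq t$ this yields $\psiT{s} < t$, hence $\psiT{s'} = \psiT{s} + 1 \leq t$. Assembling the two conjuncts and closing the induction completes the argument.
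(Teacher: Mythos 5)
Your proof is correct and follows essentially the same route as the paper: the paper's entire proof is the one-sentence observation that other threads can increment the display but not beyond $t$, since only the owner of ticket $t$ can advance the display to $t+1$, and your $\tp{\unlockTr}$ case (disjointness of $\sgS{s}$ and $\sgO{s}$ forcing $\psiT{s} < t$, so the increment still leaves $\psiT{s'} \leq t$) is precisely the formalization of that observation. The rest of your argument --- induction over the sequence, preservation of the first conjunct because transposed transitions leave the \emph{self} component untouched, and the trivial $\tp{\taketxTr}$ and $\tp{\lockTr}$ cases where no $\servedT$ ticket is created --- is detail the paper leaves implicit.
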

\begin{proof}
  Notice that other 
  threads can increment the display, but not beyond $t$, since only 
  the owner of the ticket $t$ can increment the display to $t+1$. 
\end{proof}

\fi
\end{document}